\newcommand{\zz}{\mathbb Z}
\newtheorem{proposition}{Proposition}
\newtheorem{lemma}{Lemma}
\newtheorem*{question*}{Question}
\newcommand{\groupone}{(\text{rg})}
\newcommand{\grouptwo}{(\text{rgb})}
\newcommand{\grprgb}{(\text{rgb})}
\newcommand{\grouponearg}{\text{rg}}
\newcommand{\grouptwoarg}{\text{rgb}}
\newcommand{\fr}{\textcolor{red!80!black}{\psi_\text{r}}}
\newcommand{\fg}{\textcolor{green!80!black}{\psi_\text{g}}}
\newcommand{\fb}{\textcolor{blue!80!black}{\psi_\text{b}}}
\newcommand{\condensible}{\mathcal{C}}
\newcommand{\lblr}{\textcolor{red!80!black}{\text{r}}}
\newcommand{\lblg}{\textcolor{green!80!black}{\text{g}}}
\newcommand{\lblb}{\textcolor{blue!80!black}{\text{b}}}
\begin{document}

\title{3-Fermion topological quantum computation}
\author{Sam Roberts}%
\affiliation{Centre for Engineered Quantum Systems, School of Physics, The University of Sydney, Sydney, NSW 2006, Australia}%
\author{Dominic J. Williamson}%
\affiliation{Stanford Institute for Theoretical Physics, Stanford University, Stanford, CA 94305, USA}
\date{\today}   

\begin{abstract}
We present a scheme for universal topological quantum computation based on Clifford complete braiding and fusion of symmetry defects in the 3-Fermion anyon theory, supplemented with magic state injection. 
We formulate a fault-tolerant measurement-based realisation of this computational scheme on the lattice using ground states of the Walker--Wang model for the 3-Fermion anyon theory with symmetry defects. 
The Walker--Wang measurement-based topological quantum computation paradigm that we introduce provides a general construction of computational resource states with thermally stable symmetry-protected topological order. 
We also demonstrate how symmetry defects of the 3-Fermion anyon theory can be realized in a 2D subsystem code due to Bomb\'{i}n -- making contact with an alternative implementation of the 3-Fermion defect computation scheme via code deformations. 
\end{abstract}

\maketitle

\section{Introduction}
Topological quantum computation (TQC) is currently the most promising approach to scalable, fault-tolerant quantum computation. In recent years, the focus has been on TQC with Kitaev's toric code~\cite{kitaev2003fault}, due to it's high threshold to noise~\cite{dennis2002topological,wang2003confinement}, and amenability to planar architectures with nearest neighbour interactions. 
To encode and manipulate quantum information in the toric code, a variety of techniques drawn from condensed matter contexts have been utilised. In particular, some of the efficient approaches for TQC with the toric code rely on creating and manipulating gapped-boundaries, symmetry defects and anyons of the underlying topological phase of matter~\cite{Rau06,TopoClusterComp,Raussendorf07,bombin2009quantum,bombin2010topologicaltwist,koenig2010quantum,barkeshli2013twist,teo2014braiding,teo2014unconventional,yoder2017surface,brown2017poking,khan2017fermion,lavasani2018low, zhu2020instantaneous,zhu2020universal,webster2020fault,barkeshli2019symmetry, bombin2021logical, ellison2022pauli, ellison2022pauli2}. Many of these approaches were discovered within the framework of measurement-based quantum computation (MBQC), which provides a natural space-time perspective to understand fault-tolerant quantum computations, as well as a deep connection to many-body physics through the \textit{computational phases of matter} paradigm~\cite{doherty2009identifying,raussendorf2019computationally}. Within the framework of MBQC, there have been many developments since the topological cluster state scheme based on the toric code~\cite{Rau06,TopoClusterComp,Raussendorf07}, including foliated schemes~\cite{bolt2016foliated}, non-foliated schemes~\cite{nickerson2018measurement,newman2020generating}, schemes based on 2d stabilizer codes~\cite{brown2020universal,lee2022universal}, approaches to implement higher-dimensional codes and protocols in 2D architectures~\cite{bombin20182d}, as well as the related construction of fusion-based quantum computation~\cite{bartolucci2021fusion,bombin2023faulttolerant}. 
Despite great advances, the overheads for universal fault-tolerant quantum computation remain a formidable challenge. 
It is therefore important to analyse the potential of TQC in a broad range of topological phases of matter, and attempt to find new computational substrates that require fewer quantum resources to execute fault-tolerant quantum computation. 

In this work we present an approach to TQC for more general anyon theories based on the Walker--Wang models~\cite{walker20123+}. This provides a rich class of spin-lattice models in three-dimensions whose boundaries can naturally be used to topologically encode quantum information. 
The two-dimensional boundary phases of Walker--Wang models 
accommodate a richer set of possibilities than stand-alone two-dimensional topological phases realized by commuting projector codes~\cite{von2013three,burnell2013exactly}. 
The Walker--Wang construction prescribes a Hamiltonian for a given input (degenerate) anyon theory, whose ground-states can be interpreted as a superposition over all valid worldlines of the underlying anyons. Focusing on a particular instance of the Walker--Wang model~\cite{burnell2013exactly} based on the 3-Fermion anyon theory (\textbf{3F} theory)~\cite{rowell2009classification,teo2015theory,bombin2009interacting,bombin2012universal}, we show that that the associated ground states can be utilised for fault-tolerant MBQC~\cite{raussendorf2001one,raussendorf2003measurement,van2006universal,Rau06,TopoClusterComp,Raussendorf07,brown2020universal} via a scheme based on the braiding and fusion of lattice defects constructed from the symmetries of the underlying anyon theory. 
The resource states required for the computation can be prepared with a Clifford circuit acting on a 2D grid with only nearest-neighbour interactions, and thus the architectural requirements for this approach are qualitatively similar to that of the widely pursued surface code schemes.
The Walker--Wang MBQC paradigm that we introduce provides a general framework for finding fault-tolerant resource states for universal computation. For example, we show that the well-known topological cluster state scheme for MBQC of Ref.~\cite{Rau06} is produced when the toric code anyon theory is used as input to the Walker--Wang construction. Therefore, our approach provides a generalization of the topological cluster-state scheme (which is based on the toric code anyon theory) to general abelian anyon models.

The \textbf{3F} theory is an interesting, and nontrivial example of the power of this framework. Owing to the rich set of symmetries of the \textbf{3F} theory, we find a universal scheme for TQC where all Clifford gates can be fault-tolerantly implemented and magic states can be noisily prepared and distilled~\cite{bravyi2005universal}. In particular, the full Clifford group in this scheme can be obtained by braiding symmetry twist defects. This is in contrast to the 2D toric code, where only a subgroup of Clifford operators can be achieved in by braiding symmetry twist defects (when using qubit encodings with fixed charge parity). We remark that this improved computational capability is derived from the symmetries of the anyon theory ($S_3$ for the \textbf{3F} theory, and $\zz_2$ for the toric code), as both the toric code and \textbf{3F} anyon theories consist of four anyons.

The \textbf{3F} Walker--Wang model --~and consequently the TQC scheme that is based on it~-- is intrinsically three-dimensional, as there is no commuting projector (e.g., stabilizer) code in two dimensions that realises the \textbf{3F} anyon theory~\cite{haah2018nontrivial,burnell2013exactly}. 
As such, this TQC scheme is outside the paradigm of operations on a 2D stabilizer code, 
and provides an important stepping stone towards understanding what is possible in general, higher-dimensional, topological phases. 
We remark, however, that it remains possible to embed our scheme into an extended nonchiral anyon theory that can be implemented in a 2D stabilizer model (such as the color code). We emphasize that the textbf{3F} theory is just one compelling example, and we expect further interesting examples to exist. 

Further connecting to the paradigm of \textit{computational phases of matter}, we ground our computational framework in the context of symmetry-protected topological (SPT) phases of matter. In particular, we explore the relationship between the fault-tolerance properties of our MBQC scheme and the underlying 1-form symmetry-protected topological order of the Walker--Wang resource state. While the 3D topological cluster state (of Ref.~\cite{Rau06}) has the same $\zz_2^2$ 1-form symmetries as the \textbf{3F} Walker--Wang ground state, they belong to distinct SPT phases.
These examples provide steps toward a more general understanding of fault-tolerant, computationally universal phases of matter~\cite{DBcompPhases,Miy10, else2012symmetry,else2012symmetryPRL,NWMBQC,miller2016hierarchy,roberts2017symmetry,bartlett2017robust,wei2017universal,raussendorf2019computationally,roberts2019symmetry,devakul2018universal,Stephen2018computationally,Daniel2019,daniel2020quantum}. 

Finally, we find another setting for the implementation of our computation scheme by demonstrating how symmetry defects can be introduced into the 2D subsystem color code of Bomb\'{i}n~\cite{bombin2010topologicalsubsystem,bombin2009interacting,Bombin_2011}, which supports a \textbf{3F} 1-form symmetry and has been argued to support a \textbf{3F} anyon phase. 
By demonstrating how the symmetries of the emergent anyons are represented by lattice symmetries, we make contact with an alternative formulation of the \textbf{3F} TQC scheme based on deformation of a subsystem code in (2{+}1)D~\cite{Bombin_2011} -- this may be of practical advantage for 2D architectures where 2-body measurements are preferred. Our construction of symmetry defects in this subsystem code may be of independent interest. By taking a certain limit of this model, our computational scheme embeds into a subtheory of the anyons and defects supported by Bomb\'{i}n's color code~\cite{bombin2006topological,bombin2009interacting,teo2014unconventional,yoshida2015topological,kesselring2018boundaries}.

\textit{Organisation.} In Sec.~\ref{sec3FPreliminaries} we review the \textbf{3F} anyon theory and its symmetries. In Sec.~\ref{sec3FTQCScheme} we present an abstract TQC scheme based on the symmetries of the \textbf{3F} theory. We show how to encode in symmetry defects, and how to perform a full set of Clifford gates along with state preparation by braiding and fusing them. In Sec.~\ref{sec3FWW} we review the \textbf{3F} Walker--Wang model, and then show how the symmetry defects and TQC scheme can be realized in the \textbf{3F} Walker--Wang Hamiltonian. In Sec.~\ref{sec3FMBQC} we show that the \textbf{3F} Walker--Wang model and associated symmetry defects can be used as a resource for fault-tolerant measurement-based quantum computation. We begin by reviewing MBQC based on the 3D topological cluster state~\cite{Rau06} and recasting it in the Walker--Wang MBQC paradigm. We present the qualitatively similar architectural requirements for the \textbf{3F} TQC scheme to that of currently pursued approaches. We also discuss the two models in the context of 1-form SPT phases. In Sec.~\ref{sec3FSubsystemCode} we show how the defects can be implemented in a 2D subsystem code, offering an alternative computation scheme based on code deformation. We conclude with a discussion and outlook in Sec.~\ref{sec3FDiscussion}.

\section{3-Fermion anyon theory preliminaries}\label{sec3FPreliminaries}

In this section we review the \textbf{3F} anyon theory, its symmetries and the associated symmetry domain wall and twist defects. 
We describe the fusion rules of the twists, including which anyons can condense on the twist defects.
The 3F model (also known as the $\text{SO}(8)_1$ theory) and its symmetries has been studied in Refs.~\cite{rowell2009classification,khan2014anyonic,teo2015theory}.
Closely related anomalous 1-form symmetry sectors, that do not necessarily correspond to anyons in a gapped phase, have also been studied in a subsystem code due to Bomb\'{i}n \cite{bombin2010topologicalsubsystem,bombin2009interacting,Bombin_2011}.

\subsection{Anyon theory}

The \textbf{3F} anyon theory describes superselection sectors $\{ 1,  \fr,\fg,\fb\}$  with $\mathbb{Z}_2\times\mathbb{Z}_2$ fusion rules
\begin{align}
\psi_\alpha \times \psi_\alpha = 1 \, ,
&&
 \fr \times \fg =   \fb \, ,
\end{align}
where $\alpha = \text{r,g,b},$ 
and modular matrices 
\begin{align}\label{eq3FModularMatrices}
S = \begin{pmatrix}
1 & 1 & 1 & 1 \\
1 & 1 & -1 & -1 \\
1 & -1 & 1 & -1 \\
1 & -1 & -1 & 1 
\end{pmatrix}
 \, ,
 &&
 T=  \begin{pmatrix}
1 & & &  \\
& -1 &  &  \\
&  & -1 & \\
 &  &  & -1 
\end{pmatrix}
\, .
\end{align}
The above $S$ matrix matches the one for the anyonic excitations in the toric code, but the topological spins in the $T$ matrix differ as $\fr,\fg,\fb$ are all fermions. 
These modular matrices suffice to specify the gauge invariant braiding data of the theory~\cite{WALL1963}, while the $F$ symbols are trivial. 

This anyon theory is \textit{chiral} in the sense that it is not consistent with a gapped boundary to vacuum~\cite{Moore1989,kitaev2006anyons}. 
Using the well-known relation 
\begin{align}
    \frac{1}{\mathcal{D}} \sum_{a\in \mathcal{C}} d_a^2 \theta_a = e^{2\pi i c_- / 8} \, , 
\end{align}
where $d_a$ is the quantum dimension and $\theta_a=T_{aa}$ is the topological spin of anyon $a$, ${\mathcal{D}^2=\sum_a d_a^2}$ defines the total quantum dimension, and $c_-$ is the chiral central charge. 
In the \textbf{3F} theory $d_a=1$ for all anyons as they are abelian, hence $\mathcal{D}=2$, we also have that $\theta_a=-1$ for all anyons besides the vacuum. Hence we find that the chiral central charge must take the value {$c_-=4$ mod 8}.

\subsection{Symmetry-enrichment}

\begin{figure}[t]%
	\centering
	\includegraphics[width=0.7\linewidth]{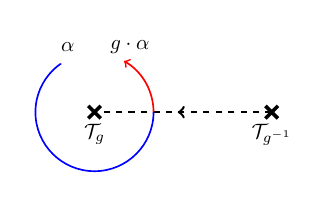}
	\caption{A fermion $\alpha \in \mathcal{C}$ is transformed by the symmetry group to $g\cdot \alpha \in \mathcal{C}$ under counter-clockwise braid with a twist $\mathcal{T}_g$. }
	\label{figSymmetryAction}
\end{figure}

The \textbf{3F} theory has an $S_3$ group of global symmetries corresponding to arbitrary permutations of the three fermion species, all of which leave the gauge invariant data of the theory invariant. 
We denote the group action on the three fermion types r,g,b, using cycle notation
\begin{align}
    S_3 \cong \{ (),(\text{rg}),(\text{gb}),(\text{rb}),(\text{rgb}),(\text{rbg}) \} \, ,
\end{align}
with the usual composition, e.g., ${(\text{rg})\cdot(\text{gb})=(\text{rgb})}$. 
The action on the anyons is then given by ${g\cdot 1=1,} \, { g\cdot \psi_c=\psi_{g\cdot c}}$. 

Restricting the action of a global symmetry to a subregion creates codimension-1 invertible domain walls~\cite{barkeshli2019symmetry}. 
These codimension-1 invertible domain walls are labelled by the nontrivial group elements. 
The codimension-2 topological symmetry twist defects that can appear at the open end of a terminated domain wall are labelled by their eigenvalues under the string operators for any fermions that are fixed by the action of the corresponding group element. 
Hence there are two distinct symmetry defects of quantum dimension $\sqrt{2}$ for each 2-cycle permutation which we label 
$\mathcal{T}_{(\text{rg})}^\pm,\mathcal{T}_{(\text{gb})}^\pm,\mathcal{T}_{(\text{rb})}^\pm,$ and there is only a single symmetry defect of quantum dimension $2$ for each of the 3-cycles which we label 
$\mathcal{T}_{(\text{rgb})}$ and $\mathcal{T}_{(\text{rbg})}$. Where we have utilized the fact that the total quantum dimension of each symmetry defect sector matches the trivial sector, consisting of only the anyons, and that the $\mathcal{T}_{(cc')}^\pm$ defects are related by fusing in either of the fermions $\psi_c,\psi_{c'}$ that are permuted by the action of the domain wall. 

The twist defect sectors of the full symmetry-enriched theory are then given by 
\begin{align}
    &\mathcal{C}_{S_3} = \{1,\psi_{\text{r}},\psi_{\text{g}},\psi_{\text{b}}\}
    \oplus \{\mathcal{T}_{(\text{rg})}^+,\mathcal{T}_{(\text{rg})}^-\} \oplus~ \nonumber \\
    &~ \{\mathcal{T}_{(\text{gb})}^+,\mathcal{T}_{(\text{gb})}^-\} \oplus \{\mathcal{T}_{(\text{rb})}^+,\mathcal{T}_{(\text{rb})}^-\} \oplus \{\mathcal{T}_{(\text{rgb})} \} \oplus \{ \mathcal{T}_{(\text{rbg})} \} \, ,
\end{align}
and the additional fusion rules for the defects are 
\begin{align}
    \mathcal{T}_{(cc')}^\pm \times \mathcal{T}_{(cc')}^\pm &= 1 + \psi_{c''}
     \\
    \mathcal{T}_{(cc')}^\pm \times \mathcal{T}_{(cc')}^\mp &= \psi_{c'}+\psi_{c''} 
    \\
    \mathcal{T}_{(\text{rgb})} \times \mathcal{T}_{(\text{rbg})} &= 1+\psi_r+\psi_g+\psi_b
\, , 
\end{align}
for $c\neq c' \neq c'' \neq c$,  and
\begin{align}
\mathcal{T}_{(cc')}^\pm \times \mathcal{T}_{(\text{rgb})^i} &= \mathcal{T}_{(cc')\cdot(\text{rgb})^i}^+ + \mathcal{T}_{(cc')\cdot(\text{rgb})^i}^-
\\
\mathcal{T}_{(\text{rgb})^{i}} \times \mathcal{T}_{(\text{rgb})^{i}} &= 2 \mathcal{T}_{(\text{rgb})^{-i}}
\, ,
\end{align}
for $i=\pm 1,$ and the related rules given by cycling the legs around a fusion vertex. 

These fusion rules imply what anyon types $C_{g}$ can condense on the $\mathcal{T}_g$ defects (the $\pm$ superscript makes no difference) as follows: 
\begin{align}
    \condensible_{(c c')} &= \{ 1,\psi_{c''}\} \, ,
    \\
    \condensible_{(c c' c'')} &= \{ 1,\fr,\fg,\fb\} \, ,
\end{align}
where $c\neq c'\neq c''\neq c$. 

We remark that the fusion algebra of each non-abelian $\mathcal{T}_{(cc')}^\pm$ twist defect with itself is equivalent to that of an Ising anyon or Majorana zero mode, reminiscent of the electromagnetic duality twist defect in the toric code~\cite{bombin2010topologicaltwist}.
A full description of the G-crossed braided fusion category~\cite{barkeshli2019symmetry} describing this symmetry-enriched defect theory is not needed for the purposes of this paper, as all relevant processes can be calculated using techniques from the stabilizer formalism. This theory has been studied previously, it is known to be anomaly free and in particular the theory that results from gauging the full symmetry group has been calculated~\cite{barkeshli2019symmetry,teo2015theory,Cui2016}.

We remark that in the following sections, for any 2-cycle $g\in S_3$ we define $\mathcal{T}_g$ (i.e., without a superscript) to be equal to $\mathcal{T}_g^+$. 
In particular, we do not make explicit use of $\mathcal{T}_g^{-}$ to encode logical information (although they may arise due to physical errors). 

\section{\textbf{3F} defect computation scheme}\label{sec3FTQCScheme}

\begin{figure}[t]%
	\centering
	\includegraphics[width=0.43\linewidth]{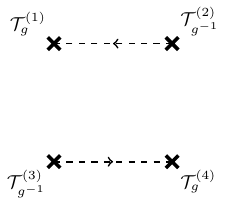}
	\caption{The elementary twist-defect configuration for encoding quantum information. One or two logical qubits are encoded if $g\in S_3$ is a 2-cycle (e.g., $\groupone$) or 3-cycle (e.g., $\grouptwo$), respectively.}
	\label{figBaseEncodings}
\end{figure}

In this section we demonstrate how to encode and process logical information using symmetry defects of the \textbf{3F} theory. Our scheme is applicable to any spin lattice model that supports \textbf{3F} topological order (possibly as a subtheory). Here we describe the scheme at the abstract level of an anyon theory with symmetry defects, with the microscopic details abstracted away.  
In the following sections we demonstrate how to realise our scheme via MBQC using a Walker--Wang model and in the 2D subsystem color code of Bomb\'{i}n~\cite{Bombin_2011,bombin2009interacting}. 

Our computational scheme is based on implementing a complete set of fault-tolerant Clifford operations using topologically protected processes -- which are naturally fault-tolerant to local noise, provided the twists remain well separated -- along with the preparation of noisy magic states. By Clifford operations we mean the full set of Clifford gates (the unitaries that normalise the Pauli group), along with single qubit Pauli preparations and measurements. The noisy magic states can be distilled to arbitrary accuracy using a post-selected Clifford circuit (provided the error rates are sufficiently small)~\cite{bravyi2005universal}. 
We remark that the schemes we present are by no means optimal, and given a compilation scheme and architecture, the overheads are ripe for improvement. 

The goal of this section is to prove Prop.~\ref{prop3FCliffordUniversality} -- the Clifford universality of \textbf{3F} defect theory -- which along with noisy magic state preparations offers a universal scheme for fault-tolerant quantum computation. We prove this proposition by breaking an arbitrary space-time configuration of domain walls and twists into smaller components that directly implement individual Clifford operations that generate the Clifford group and allow for Pauli preparations and measurements.
We begin by introducing defect encodings. 

\subsection{Encoding in symmetry defects}\label{sec3FEncodings}

By nature of their ability to condense anyonic excitations, symmetry defects are topological objects and information can be encoded in them. To understand such encodings, we consider a two-dimensional plane upon which anyonic charges -- in our case, fermions in $\mathcal{C}$ -- and symmetry defects may reside. This setting is representative of the behaviour of anyons that arise as excitations on two-dimensional topologically ordered phases -- in our case the fermions appear as excitations on the boundary of the three-dimensional Walker--Wang model as well as in the low energy theory of a 2D subsystem code Hamiltonian. Processes that involve moving, braiding and fusing of anyons can be realized on the lattice by certain string operators. Such string operators can also transfer anyonic charge to (and between) twist defects, thereby changing their topological charge.

For a given configuration of twist defects $\{ \mathcal{T}_{g_i}^{(i)}, ~|~ i \in \{1,\ldots,N\}, g_i \in S_3\}$, we can encode a quantum state in the joint fermionic charge of $g$-neutral subsets of them $\mathcal{I}\subseteq\{1,\ldots,N\}$. By $g$-neutral, we mean that the subset of twist defects $\{\mathcal{T}_{g_i}^{(i)}~|~i \in \mathcal{I}\}$ must satisfy $\prod_{i\in \mathcal{I}}g_i = 1$. As the subsets are $g$-neutral, upon their fusion we are left with a fermionic charge $c\in\mathcal{C}$. These possible post-fusion charge states give us a basis for our encoded state space, and the dimension of the logical state-space depends on the quantum dimension of the defects. 

\textbf{$g$-encodings:}  To be more concrete we fix a twist configuration that acts as the fundamental encoding unit, known as the $g$-encoding $\mathcal{E}_g$, where ${1 \neq g\in S_3}$. In the following, all twists are of the $+$-type, where relevant. The encoding is defined by two twist pairs ${\mathcal{E}_g = \{\mathcal{T}_{g}^{(1)},\mathcal{T}_{g^{-1}}^{(2)},\mathcal{T}_{g^{-1}}^{(3)},\mathcal{T}_{g}^{(4)}\}}$ for $g\in S_3$ with vacuum total charge, as depicted in Fig.~\ref{figBaseEncodings}. The computational basis is defined by the fusion space of $\mathcal{T}_{g}^{(1)}$ and $\mathcal{T}_{g^{-1}}^{(2)}$: when $g$ is a 2-cycle, the two pairs encode a single qubit, and when $g$ is a 3-cycle the two pairs encode two qubits. This degeneracy follows from the fusion space of the twists 
\begin{align}
    \mathcal{T}_{\groupone} \times \mathcal{T}_{\groupone} &= 1 + \fb, \\
    \mathcal{T}_{\grouptwo} \times \mathcal{T}_{(\text{rbg})} &= 1 + \fr + \fg + \fb,
\end{align}
along with the constraint that all four twists must fuse to the vacuum containing no charge. 
For instance, when $g=\groupone$ the $\ket{\overline{0}}$ state corresponds to the fusion outcome $1\in \mathcal{C}$, and the $\ket{\overline{1}}$ state corresponds to outcome $\fb \in \mathcal{C}$. 

\begin{figure}[t]%
	\centering
	\includegraphics[width=0.95\linewidth]{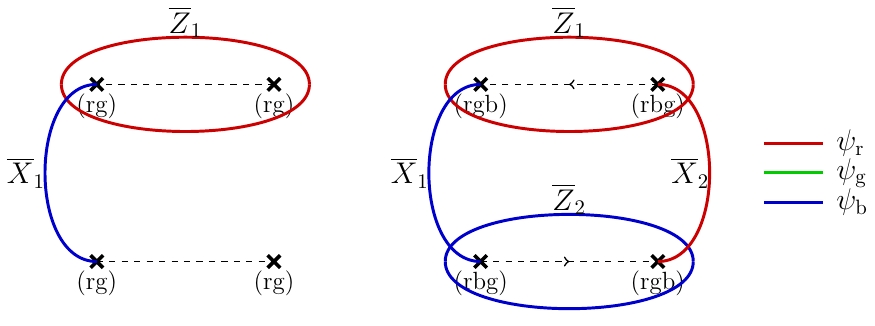} 
	\caption{Representative fermionic string operators for logical Pauli operators for a $g$-encoding. (left) A single qubit is encoded in four twists defined by $\groupone\in S_3$. Note also in this case that the orientation has been removed from the domain wall as $\groupone^{-1} = \groupone$. Similar representative logical operators for twist defects based on the other 2-cycles $g\in S_3$ can be obtained by suitably permuting the fermionic string operator types. (right) Two qubits are encoded in four twists defined by $\grouptwo,(\text{rbg}) \in S_3$. }
	\label{figQubitBasis}
\end{figure}

We remark that the exact location of the domain wall is not important in the encoding of Fig.~\ref{figBaseEncodings}; only their end points matter, as the action in Fig.~\ref{figSymmetryAction} is invariant under deformations of the domain wall. To encode qubits, one can choose any domain wall configuration with the same end points as the twist defects in Fig.~\ref{figBaseEncodings}.

The total fermionic charge of a subset of $g$-neutral defects can be detected without fusing the twists together, by instead braiding various fermionic charges around them. Such a process can be represented by a string operator (also known as a Wilson loop), and this loop can be used to measure the charge within the defects. Similarly, one can change the charge on each twist by condensing fermions into them, which is also represented by a string operator running between pairs of twists. 

The string operators that represent Pauli logical operators $\overline{X}$, $\overline{Z}$ acting on the encoded qubits are represented in Fig.~\ref{figQubitBasis} -- they can be understood as transferring and measuring fermionic charge between different defect pairs. Such operators must anticommute based on the mutual semionic braiding statistics of the fermions they transport (i.e., braiding one fermion around another introduces a $-1$ phase). It is often convenient to utilise other representative logical operators. For instance, when $g$ is a 2-cycle (e.g., $g=\groupone$), one can use either the $\fr$ or $\fg$ loops to measure the charge and hence define the logical $\overline{Z}$ operator. This follows from the fact that that an $\fb$ Wilson loop enclosing $\mathcal{T}_{\groupone}^{(1)}$ and $\mathcal{T}_{\groupone}^{(2)}$ acts as the logical identity, and swaps $\fr$ and $\fg$ loops upon fusion. In addition, logical $\overline{X}$ can be represented as a loop operator as per Fig.~\ref{fig3F2DEncodingsIsotope}.

\begin{figure}[t]%
	\centering
	\includegraphics[width=0.98\linewidth]{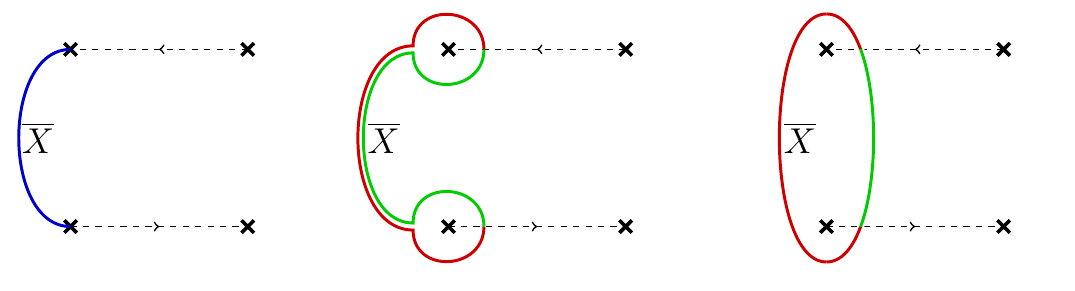}
	\caption{Equivalence between different representative fermionic string operators for logical Pauli $\overline{X}$ operators for the two-twist-pair encoding for $g$-encodings -- in this case $g=\groupone$. They can be verified by the fusion rule for $\fr \times \fg = \fb$ along with the fact that $\mathcal{T}_{\groupone}$ can condense $\fb$ fermions. }
	\label{fig3F2DEncodingsIsotope}
\end{figure}

More efficient encodings are possible. For instance, one can encode $N$ ($2N$) logical qubits into $(2N+2)$ $\text{2-cycle}$ (3-cycle) twists on the sphere, following for example Ref.~\cite{barkeshli2013twist}. Additionally, due to the rich symmetry defect theory of \textbf{3F} other encodings are possible, including a trijunction encoding which is outlined in App.~\ref{appOtherEncodings}.

\subsection{Gates by braiding defects}\label{sec3FGates}
We now show how to achieve encoded operations (gates, preparations and measurements) on our defect-qubits. In order to implement these operations, we braid twists to achieve gates, and fuse them to perform measurements. To understand such processes, we describe the locations of twists in (2{+}1)-dimensions. In (2{+}1)-dimensions, the twists -- which are codimension-2 objects -- can be thought of as worldlines. The domain walls -- which are codimension-1 objects -- can be thought of as world-sheets. 

\begin{figure}[t]%
	\centering
	\includegraphics[width=0.28\linewidth]{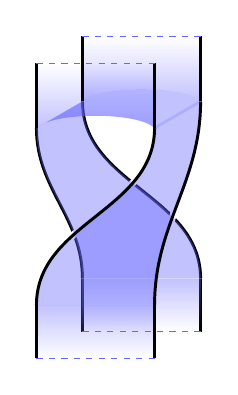} \qquad
	\includegraphics[width=0.28\linewidth]{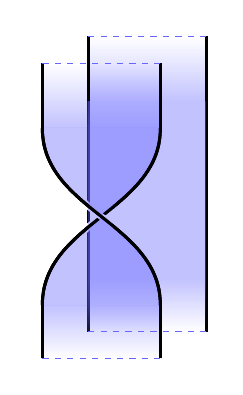} 
	\caption{One qubit gates for the defect encoding $\mathcal{E}_{\groupone}$. Time moves upwards. (a) The Hadamard gate cyclically permutes the four twist defects. A domain wall plane is inserted to return the encoding to its standard form. (b) The $S$ gate consists of the exchange of $\mathcal{T}_{\groupone}^{(3)}$ and $\mathcal{T}_{\groupone}^{(4)}$. The same gates work for a $g$-encoding with $g$ a 2-cycle -- in this case the orientation of the surface does not matter and is not depicted. } 
	\label{fig3FHandP}
\end{figure}

\begin{lemma}\label{lemSingleQubit}
Braiding the twists of a $g$-encoding $\mathcal{E}_g$ with $g$ a 2-cycle generates the single qubit Clifford group $\langle H, S \rangle$ where $H$ is the Hadamard and $S$ is the phase gate. 
\end{lemma}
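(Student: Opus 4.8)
The plan is to use the remark, recorded above, that each $2$-cycle twist $\mathcal{T}_g^+$ obeys the Ising fusion rule $\mathcal{T}_g \times \mathcal{T}_g = 1 + \psi_{c''}$ of a Majorana mode, so that the four twists of $\mathcal{E}_g$ are four Majorana modes encoding a single qubit, and to compute the two braids of Fig.~\ref{fig3FHandP} directly in the stabilizer (Majorana) formalism. First I would fix logical representatives: for $g = \groupone$ the rule $\mathcal{T}_{\groupone}\times\mathcal{T}_{\groupone} = 1 + \fb$ makes the fusion charge of the pair $(\mathcal{T}_g^{(1)},\mathcal{T}_g^{(2)})$ the logical $\overline Z$, realised by the $\fr$, $\fg$ or $\fb$ Wilson loop of Fig.~\ref{figQubitBasis}, with $\ket{\overline 0},\ket{\overline 1}$ the channels $1,\fb$; logical $\overline X$ is the fermion string transporting charge between pairs, as in Fig.~\ref{fig3F2DEncodingsIsotope}. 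Anticommutation of $\overline X$ and $\overline Z$ is exactly the mutual $-1$ braiding phase of distinct fermions, so the four twists carry a genuine qubit, with the vacuum total-charge constraint $\prod_i g_i = 1$ fixing the Majorana parity.

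For the phase gate (Fig.~\ref{fig3FHandP}(b)) I would analyse the exchange of $\mathcal{T}_g^{(3)}$ and $\mathcal{T}_g^{(4)}$. This braid commutes with $\overline Z$ (its string is disjoint from the $(1,2)$ pair) and is therefore diagonal in the computational basis; the vacuum constraint locks the fusion charge of the $(3,4)$ pair to that of the $(1,2)$ pair, so the braid acts on $\ket{\overline 0}$ and $\ket{\overline 1}$ by the two channel-dependent exchange phases. Their ratio is fixed robustly by the relative monodromy $\theta_{\fb}/\theta_1 = -1$ of the two channels, whence a single exchange contributes the relative phase $\pm i$; this is $S$ (or $S^\dagger$) up to an overall phase.

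For the Hadamard (Fig.~\ref{fig3FHandP}(a)) I would decompose the cyclic permutation $1\to2\to3\to4\to1$ into elementary exchanges and track its action by conjugation, under which the Majorana modes are cyclically relabelled (up to signs). Then $\overline Z$ maps to the $(2,3)$ charge, i.e.\ $\overline X$, while $\overline X$ maps to the $(3,4)$ charge, which the vacuum total-charge constraint identifies with $\pm\overline Z$; the net swap $\overline X \leftrightarrow \overline Z$ is $H$ up to signs. The essential geometric ingredient is the inserted domain-wall plane: cyclically transporting the twists leaves their connecting world-sheets in a rotated, reconnected pattern, and the wall restores the canonical $\mathcal{E}_g$ configuration. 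The main obstacle is precisely this bookkeeping --- verifying that the wall insertion returns the standard wall layout without sourcing extra fermionic charge, and controlling all signs together with the $g\cdot\psi$ relabelling of strings as they cross walls (Fig.~\ref{figSymmetryAction}) --- rather than any structural difficulty.

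Finally, since it is standard that $H$ and $S$ generate the single-qubit Clifford group $\langle H,S\rangle$, the two braids suffice. Throughout, the gates need only be pinned down up to Pauli and overall-phase corrections, which is harmless because those corrections already lie in $\langle H,S\rangle$ (for instance $Z = S^2$ and $X = HZH$); this lets me absorb the residual signs left undetermined in the sketches above.
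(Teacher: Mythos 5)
Your proposal is correct, but it takes a different route from the paper's designated proof of Lemma~\ref{lemSingleQubit}. The paper proves the lemma by exhibiting \emph{correlation surfaces} (Fig.~\ref{figHadamardCorrelations}) -- world-sheets of the logical string operators through the space-time braid -- which show directly that the braid of Fig.~\ref{fig3FHandP}(a) maps $\overline{X}\to\overline{Z}$, $\overline{Z}\to\overline{X}$ and the braid of Fig.~\ref{fig3FHandP}(b) maps $\overline{X}\to\overline{Y}$, $\overline{Z}\to\overline{Z}$; this has the additional payoff of supplying exactly the data (the outcome-dependent Pauli frame) needed to run the gates by MBQC or code deformation. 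Your fusion-channel/Majorana argument instead works at the level of the abstract defect theory, and in fact coincides in spirit with the paper's own supplementary treatment in App.~\ref{AppMajoranas}, where each twist is a Majorana mode, the logical operators are the bilinears of Eqs.~(\ref{eqMajoranaEncodingX})--(\ref{eqMajoranaEncodingZ}), and braids act as signed permutations. What your route buys: the $S$-gate phase $\mathrm{diag}(1,\pm i)$ is pinned down by the gauge-invariant spin ratio $\theta_{\fb}/\theta_{1}=-1$, which is cleaner than the permutation bookkeeping (that alone only fixes $S$ up to Pauli and inversion); what it loses is the explicit lattice/space-time data, including the domain-wall bookkeeping for the Hadamard, which you flag but defer. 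One caveat worth recording: your identification of the cycle $1\to2\to3\to4\to1$ as Hadamard depends on your convention that $\overline{X}$ is the $(2,3)$-charge $\gamma_2\gamma_3$; with the paper's convention $\overline{X}=\gamma_1\gamma_3$ that same cycle fixes $\overline{X}$ and swaps $\overline{Y}\leftrightarrow\overline{Z}$ (a $\sqrt{X}$-type gate), which is why the paper instead uses the permutation $(1342)$. This does not damage the lemma: as you note, all residual Pauli, inversion, and labelling ambiguities are absorbed because the Paulis lie in $\langle H,S\rangle$ (e.g.\ $Z=S^2$, $X=HZH$), and two distinct axis transpositions together with the Paulis generate the full single-qubit Clifford group either way.
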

\begin{proof}
The proof is presented in App.~\ref{appProofOfGates}. 
\end{proof}

We remark that in the case that $g$ is a 3-cycle, each $\mathcal{E}_g$ encodes 2 logical qubits. Braiding in this case generates a subgroup of the Clifford group given by $\langle H_{(1)}H_{(2)}, S_{(1)}S_{(2)} \rangle$, where the subscript indexes the two logical qubits. 

The previous Lemma defines a generating set of single qubit braids. We present the space-time diagram for the Hadamard and $S$ gate braids in Fig.~\ref{fig3FHandP}. Such diagrams can be interpreted in terms of code deformations or in terms of measurement-based quantum computation. In the former, we depict the space-time location of twists and domain walls during a code deformation, wherein twists trace out (0{+}1)-dimensional worldlines, and domain walls trace out (1{+}1)-dimensional worldsheets. In the MBQC picture, we similarly depict the location of twists and domain walls which correspond to lattice defects within the resource state, as we show explicitly in the next section. As in the previous section, the exact location of domain wall worldsheets is not important and only the location of the twist worldlines matter -- and they must remain well-separated in order for logical errors to be suppressed from local noise processes.

For entangling gates we require encodings $\mathcal{E}_g$ and $\mathcal{E}_h$ with either $g\neq h$, or at least one of $g$, $h$ being a 3-cycle. 

\begin{lemma}\label{lemEntangling}
Braiding of twists from two encodings $\mathcal{E}_g$ and $\mathcal{E}_h$ generates entangling gates if and only if either $g \neq h$ or at least one of $g$, $h$ is a 3-cycle.
\end{lemma}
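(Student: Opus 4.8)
The plan is to work entirely in the Heisenberg (stabilizer) picture, tracking how a braid conjugates the logical Pauli operators, which are the fermionic string operators depicted in Fig.~\ref{figQubitBasis}. A braid generates an \emph{entangling} gate precisely when, for some single-encoding logical Pauli, the conjugated operator acquires nontrivial support on the \emph{other} encoding; conversely the braid is non-entangling exactly when every single-qubit logical Pauli of $\mathcal{E}_g$ and of $\mathcal{E}_h$ is sent to a single-qubit logical Pauli (possibly permuted onto the other encoding). So the whole lemma reduces to a single local computation: when a logical string of one encoding is dragged once around a twist of the other, does the deposited Wilson loop act as a nontrivial logical operator there? The central algebraic input is the braiding phase $S_{\psi_a\psi_b}=-1$ for distinct nontrivial fermions together with the condensation data $\condensible_{(cc')}=\{1,\psi_{c''}\}$ and $\condensible_{(cc'c'')}=\{1,\psi_r,\psi_g,\psi_b\}$.

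For the ``if'' direction I would exhibit the explicit process in which a twist of $\mathcal{E}_h$ is carried once around a twist of $\mathcal{E}_g$. The logical $\overline{X}_g$ transports the condensible fermion $\psi_{c''(g)}$ between the twists of $\mathcal{E}_g$, so after the braid this string additionally encircles the $\mathcal{E}_h$ pair, giving $\overline{X}_g\mapsto \overline{X}_g\cdot W_h$ where $W_h$ is a $\psi_{c''(g)}$ Wilson loop around $\mathcal{E}_h$. By the braiding phase, $W_h$ acts as the logical identity iff $\psi_{c''(g)}$ is condensible on the $h$-twists, i.e.\ iff $c''(g)=c''(h)$. For two distinct $2$-cycles this fails (e.g.\ $c''(\grprg)=b\neq g=c''(\grprb)$), so $W_h$ is a nontrivial $\overline{Z}_h$ and the process realises a controlled gate of CNOT/CZ type, which I would read off explicitly on the generators to certify entanglement. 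If instead $g$ or $h$ is a $3$-cycle, then $\condensible_{(cc'c'')}$ contains every fermion, so \emph{every} $\psi_c$ Wilson loop is a nontrivial logical on the $3$-cycle encoding; hence the deposited loop is again nontrivial. This single mechanism covers the mixed case and both $g=h$ and $g\neq h$ among $3$-cycles, establishing entanglement whenever $g\neq h$ or at least one is a $3$-cycle.

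For the ``only if'' direction, with $g=h$ a $2$-cycle, the same computation gives $c''(g)=c''(h)$, so \emph{any} loop deposited by dragging one encoding's strings around the other is a $\psi_{c''}$ loop, which is exactly the logical identity on both encodings. Consequently each elementary between-encoding exchange maps the four generators $\overline{X}_g,\overline{Z}_g,\overline{X}_h,\overline{Z}_h$ to $\pm$ single-qubit logical Paulis with no two-encoding support (at worst permuting the two logical qubits), while Lemma~\ref{lemSingleQubit} handles the within-encoding exchanges as single-qubit Cliffords. Since the set of Cliffords that send every single-qubit Pauli to a single-qubit Pauli is closed under composition and contains no entangling element, the entire braid-group image is non-entangling.

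The main obstacle is precisely this ``only if'' step: ruling out entanglement for the whole braid group rather than for a convenient generating set, and in particular handling the $\overline{Z}$-type drags. A $\overline{Z}$ loop is a closed $\psi_r$/$\psi_g$ string, which cannot close around a single same-type twist because the $\grprg$ wall sends $\psi_r\mapsto\psi_g$; I must therefore verify that every consistent deformation of such a loop across $\mathcal{E}_h$ again deposits only the shared condensible fermion $\psi_{c''}$. I would make this rigorous by enumerating the inequivalent elementary between-encoding exchanges and checking the deposited loop type in each, or equivalently by exhibiting a $\zz_2$ ``wall-parity'' invariant (with $\psi_{c''}$ even and $\psi_r,\psi_g$ the odd swapped pair) that forbids a nontrivial cross-encoding loop when the two walls are of the same type.
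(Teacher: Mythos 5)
Your ``if'' direction is essentially the paper's own argument translated into Heisenberg language: the paper proves entanglement by exhibiting correlation surfaces (Fig.~\ref{figCZCorrelations}) that realise $\overline{Z}_2\mapsto\overline{Z}_2$ and $\overline{X}_2\mapsto\overline{X}_2\overline{Z}_1$, which is exactly your ``deposited Wilson loop'' computation, and it notes that the same surfaces apply when a 3-cycle is involved. One internal inconsistency to fix: you first assert that the deposited loop $W_h$ is the logical identity \emph{iff} the transported fermion is condensible on the $h$-twists, but then for 3-cycles you conclude from ``every fermion condenses'' that every loop is \emph{non}trivial. These two uses of condensibility contradict each other. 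The uniform criterion is that a $\psi_a$ loop around a twist pair is a nontrivial logical iff $\psi_a$ has $S=-1$ braiding with some possible fusion charge of that pair ($\{1,\psi_{c''}\}$ for a 2-cycle pair, all of $\mathcal{C}$ for a 3-cycle pair); with that correction your conclusions in both cases stand.

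The genuine gap is in the ``only if'' direction, and it is precisely the step you flagged as the main obstacle. Your plan --- check that each elementary between-encoding exchange sends the four generators $\overline{X}_g,\overline{Z}_g,\overline{X}_h,\overline{Z}_h$ to single-qubit logical Paulis, then invoke closure under composition of the single-qubit-Pauli-preserving Cliffords --- cannot get started, because an elementary between-encoding exchange is not a logical gate at all. In the Majorana representation of App.~\ref{AppMajoranas} ($\mathcal{T}^{(i)}\leftrightarrow\gamma_i$, with $\overline{X}_1=\gamma_1\gamma_3$, $\overline{Z}_1=\gamma_1\gamma_2$, $\overline{X}_2=\gamma_5\gamma_7$, $\overline{Z}_2=\gamma_5\gamma_6$ and constraints $\gamma_1\gamma_2\gamma_3\gamma_4=\gamma_5\gamma_6\gamma_7\gamma_8=I$), the single exchange of $\gamma_4$ with $\gamma_5$ sends $\overline{X}_2\mapsto\pm\gamma_4\gamma_7$ and $\overline{Z}_2\mapsto\pm\gamma_4\gamma_6$: these are cross-encoding bilinears that anticommute with both constraints, so they are not logical Paulis of either qubit, and the exchange (likewise the between-encoding monodromy, which maps each constraint to minus itself) transfers charge between the quadruples and fails to preserve the code space. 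Hence the braid group is not generated by moves for which your property even makes sense, and the composition argument collapses; the proposed ``wall-parity'' invariant does not repair this, since the obstruction is not which fermion loop gets deposited but that the intermediate images are not operators of either encoding. The paper's proof (App.~\ref{AppMajoranas}) instead uses an invariant preserved by \emph{every} braid generator: braids act on Majoranas by signed permutations, hence map bilinears to bilinears. Since any entangling Clifford must send some logical Pauli (a bilinear) to a two-qubit Pauli such as $\overline{X}_1\overline{Z}_2$, which is a weight-4 Majorana monomial, weight preservation excludes entangling gates for the entire braid group of same-type 2-cycle twists in one stroke. Some such weight or parity invariant, stable under all generators rather than only code-preserving ones, is the ingredient your proof is missing.
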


\begin{proof}
See App.~\ref{appProofOfGates}. 
\end{proof}

Similarly, we present the space-time diagram of the Controlled-$Z$ ($CZ$) gate between two qubits encoded within 2-cycle encodings $\mathcal{E}_{\groupone}$ and $\mathcal{E}_{(\text{rb})}$ in Fig.~\ref{fig3FDefectCZ}. If one wishes to implement an entangling gate between two $\groupone$-encoded qubits -- such as a $CZ$ gate -- one can utilize an $(\text{rb})$-encoded ancilla to achieve this, as shown by the circuit in Fig.~\ref{fig3FEntanglingCircuit} in App.~\ref{SecEntanglingCircuitViaAncilla}.

\begin{figure}[t]%
	\centering
	\includegraphics[width=0.45\linewidth]{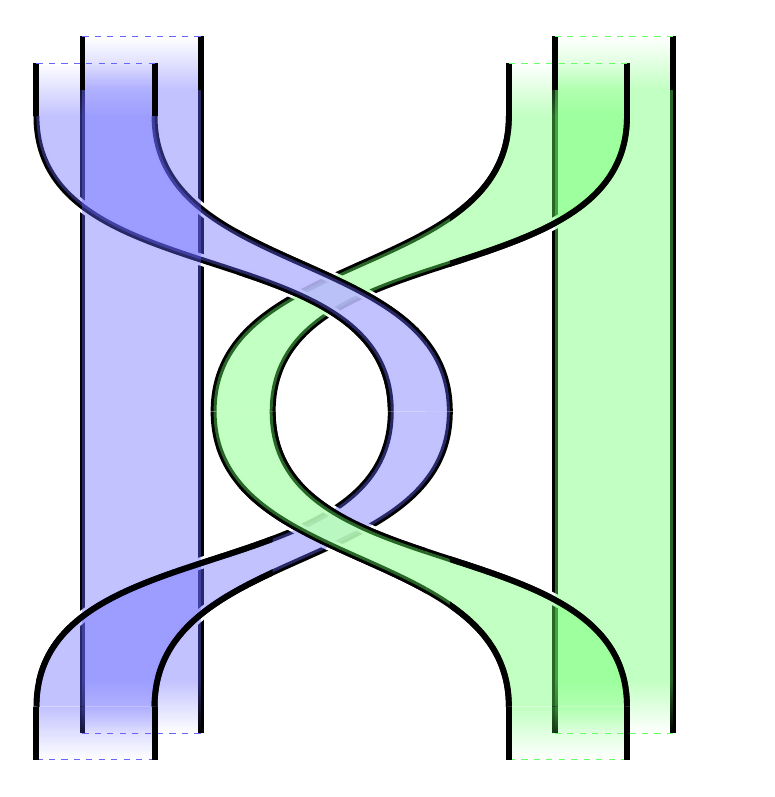} 
	\caption{Two qubit $CZ$ gate between pairs of qubits with $g$- and $h$- encodings with 2-cycles $g\neq h \in S_3$, e.g., $g=\groupone$ on the left and $h=(\text{rb})$ on the right. Time moves upwards. Domain walls are colored according to the fermion that they leave invariant.  In App.~\ref{SecEntanglingCircuitViaAncilla} we show how to generate entangling gates between two $\groupone$-encoded qubits.}
	\label{fig3FDefectCZ}
\end{figure}

We remark that if one implements the same operation as in Fig.~\ref{fig3FDefectCZ} using two $\grouptwo$-encodings (which encode 4-qubits) we obtain the operation $\overline{CZ}_{1,4}\overline{CZ}_{2,3}$, where qubits 1,2 belong to the left $\grouptwo$-encoding, and qubits 3,4 belong to the right $\grouptwo$-encoding.

One can understand the action of these gates by tracking representative logical operators through space-time. If the braid is implemented by code deformation, the logical mapping can be understood by tracking representative logical operators at each time slice through the space-time braid. In the context of MBQC, the observable that propagate logical operators through space-time are known as correlation surfaces -- they reveal correlations in the resource state that determine the logical action on the post-measured state (see for example \cite{Raussendorf07}). Correlation surfaces for each operation are determined in App.~\ref{appProofOfGates}.

We remark that these operations can be described purely in terms of the braid group acting on twists. This can be useful for topological compilation, where one can find more efficient representations of general Clifford operations. We define the braids that give rise to the Hadamard, $S$ gate and $CZ$ gate in App.~\ref{SecEntanglingCircuitViaAncilla}.

\subsection{Completing a universal gate set}
To complete the set of Clifford operations, we require Pauli basis measurements, which are obtained by fusing twists together. To obtain a universal set of operations we show how to prepare noisy magic states, which can then be distilled using Clifford operations -- allowing for fault-tolerant universality~\cite{bravyi2005universal}. 

While the 3-cycle encodings can be used, we focus on a universal scheme for quantum computation using $\groupone$-encoded qubits for logical qubits, and $(\text{rb})$-encoded qubits as ancillas to mediate entangling gates. 

\subsubsection{State preparation}\label{secMagicStatePreparation}

We now show how to perform topologically protected measurements in the $\overline{X}$ and $\overline{Z}$ basis, as well as preparations, which can be considered time-reversed measurements (and vice versa). To prepare a state in $\overline{X}$ or $\overline{Z}$ we must nucleate out the twists of $\mathcal{E}_g$ such that we know the definite (fermionic) charge of $\mathcal{T}_g^{(1)} \times \mathcal{T}_{g^{-1}}^{(3)}$ and $\mathcal{T}_g^{(1)} \times \mathcal{T}_{g^{-1}}^{(2)}$. These basis preparations are depicted in Fig.~\ref{preparations}. In the case that $g$ is a 3-cycle, both qubits are prepared in the same basis. This completes the set of Clifford operations.

\begin{figure}[t]%
	\centering
	\includegraphics[width=0.75\linewidth]{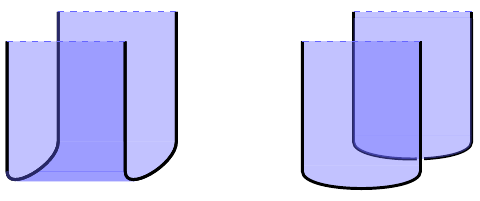}
	\caption{(left) Preparing $\overline{X}$ eigenstates. (right) Preparing $\overline{Z}$ eigenstates. We depict the operation for a $\groupone$-encoding. Time moves upwards. To prepare either $\overline{X}$ or $\overline{Z}$ eigenstates we need to prepare pairs of twists in definite charge states. This can be done by nucleating them out of vacuum so we know they fuse to the identity anyon (i.e., no charge). To obtain the respective measurements, we take the time-reverse diagram (i.e., $t\rightarrow -t$). This works identically for any $g\in S_3$, and we note that when $g$ is a 3-cycle, both encoded qubits are prepared (or measured) in the same basis.}
	\label{preparations}
\end{figure}

\begin{proposition}\label{prop3FCliffordUniversality}
(Clifford universality of \textbf{3F} defect theory). For any 2-cycles $g\neq h\in S_3$, any Clifford operation can be implemented on $g,h$-encoded qubits by braiding and fusion of twists.
\end{proposition}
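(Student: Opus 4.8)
The plan is to prove Clifford universality by exhibiting a generating set for the full Clifford group on the $g,h$-encoded qubits and showing that each generator is realizable by braiding and fusion of twists, invoking the component lemmas already established. Recall that the $n$-qubit Clifford group is generated by the single-qubit gates $H$ and $S$ on each qubit, a single entangling gate (say $CZ$) between each pair of qubits, together with Pauli-basis state preparations and measurements. The strategy is therefore to assemble these pieces: Lemma~\ref{lemSingleQubit} supplies $\langle H, S\rangle$ on any single 2-cycle encoding, Lemma~\ref{lemEntangling} supplies an entangling gate whenever the two encodings satisfy the stated non-degeneracy condition, and the nucleation/fusion procedures of Sec.~\ref{secMagicStatePreparation} (illustrated in Fig.~\ref{preparations}) supply $\overline{X}$- and $\overline{Z}$-basis preparations and measurements. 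Since $g\neq h$ are both 2-cycles by hypothesis, each encoding carries exactly one logical qubit and the condition $g\neq h$ of Lemma~\ref{lemEntangling} is met, so entangling gates between a $g$-qubit and an $h$-qubit are available directly.

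The key steps, in order, are as follows. First I would fix the register of logical qubits, each stored in a 2-cycle encoding $\mathcal{E}_g$ or $\mathcal{E}_h$, and observe that Lemma~\ref{lemSingleQubit} immediately gives the local Clifford group on each. Second, I would supply entangling gates between \emph{every} pair of logical qubits. For a $g$-qubit and an $h$-qubit with $g\neq h$ this is immediate from Lemma~\ref{lemEntangling} and the explicit $CZ$ braid of Fig.~\ref{fig3FDefectCZ}. The subtle case is a pair of like-type qubits (two $g$-qubits or two $h$-qubits), for which Lemma~\ref{lemEntangling} gives no direct braid; here I would route the entangling gate through an ancilla of the opposite type via the circuit of Fig.~\ref{fig3FEntanglingCircuit} (App.~\ref{SecEntanglingCircuitViaAncilla}), using preparation and measurement of the ancilla to effect the gate-teleportation-style $CZ$ between the two same-type qubits. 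Third, I would append Pauli preparations and measurements from Sec.~\ref{secMagicStatePreparation}. Finally I would invoke the standard fact that $\{H, S, CZ\}$ on all qubits, closed under composition, generates the full multi-qubit Clifford group, and that Pauli preparations and measurements complete the required operation set.

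The main obstacle is the like-type entangling gate: establishing that an $(\text{rb})$-encoded ancilla genuinely mediates a $CZ$ between two $\groupone$-encoded qubits, without residual logical action on the ancilla or unwanted byproduct operators that cannot be corrected by Pauli frame tracking. This amounts to verifying that the ancilla-mediated circuit of Fig.~\ref{fig3FEntanglingCircuit}, built from the already-available cross-type entangling gates of Lemma~\ref{lemEntangling} and ancilla preparation/measurement, implements $\overline{CZ}$ up to Pauli corrections conditioned on the measurement outcomes. I would discharge this by tracking representative logical (correlation-surface) operators through the circuit, confirming the intended Heisenberg-picture action, exactly as the component analyses are carried out in App.~\ref{appProofOfGates}. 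With this in hand the generating set is complete and the proposition follows; the remaining content is the bookkeeping of composing the individual components into an arbitrary Clifford, which is routine once every generator is available.
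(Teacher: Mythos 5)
Your proposal is correct and follows essentially the same route as the paper's proof: decompose an arbitrary Clifford operation into $\{H,S,CZ\}$ plus Pauli preparations/measurements, obtain $H,S$ from Lemma~\ref{lemSingleQubit}, cross-type $CZ$ from Lemma~\ref{lemEntangling}, like-type entangling gates via the ancilla circuit of Fig.~\ref{fig3FEntanglingCircuit} in App.~\ref{SecEntanglingCircuitViaAncilla}, and preparations/measurements by nucleation/fusion as verified with correlation surfaces in App.~\ref{appProofOfGates}. Your explicit identification of the like-type entangling gate as the subtle case, and its resolution by correlation-surface tracking, is exactly the content the paper delegates to its appendices.
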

\begin{proof}
An arbitrary Clifford operation is given by either a Clifford unitary -- which can be generated by Hadamard, phase and $CZ$ -- or by a single qubit Pauli preparation or measurement. All Clifford unitaries can be implemented by Lemmas \ref{lemSingleQubit}, \ref{lemEntangling}, and the circuit identity of Fig.~\ref{fig3FEntanglingCircuit} in App.~\ref{SecEntanglingCircuitViaAncilla}. This, along with the Pauli $X$ and $Z$ basis preparations and measurements, as demonstrated in App.~\ref{appProofOfGates} completes the proof.
\end{proof}

To complete a universal set of gates we consider preparation of noisy $T$-states $\ket{\overline{T}} = \frac{1}{\sqrt{2}}(\ket{\overline{0}}+e^{\frac{i\pi}{4}}\ket{\overline{1}})$. Such states can be distilled using post-selected Clifford circuits, and are sufficient to promote the Clifford gateset to universality~\cite{bravyi2005universal}. To prepare noisy $T$ states, we utilise a non-topological projection on the four twists in a $g$-encoding that are brought within a constant width neighbourhood. Here we consider $g$ a 2-cycle for simplicity. The $\ket{\overline{T}}$ state is the $+1$ eigenstate of $M_{\overline{T}}=(\overline{X} +\overline{Y})/\sqrt{2}$, and thus its preparation can be achieved by measuring the observable $M_{\overline{T}}$ and post selecting on the $+1$ outcome. To ensure such operations can achieved in a local way, the four twists of the $g$-encoding must be brought within a small neighbourhood to perform the (noisy) measurement $M_{\overline{T}}$, after which they can be separated. In the Walker--Wang resource states introduced in the following section, it is possible to bring the twists within a constant separation such that $M_{\overline{T}}$ is a constant-sized operator. (Note that we do not explore whether one can bring the twists to within a distance of one lattice spacing, such that the required logical action can be implemented with the single-qubit measurement $(X+Y)/\sqrt{2}$, as is possible in the surface code case~\cite{li2015magic,lodyga2015simple,singh2022high,bombin2022fault}.) Topologically, the magic state preparation is depicted in Fig.~\ref{figMagicState}. 

\begin{figure}[t]%
	\centering
	\includegraphics[width=0.99\linewidth]{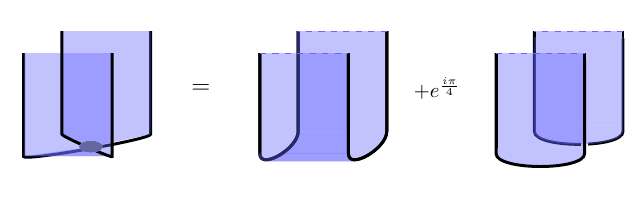} \\
	\caption{Non-topologically protected magic state preparation. Time moves upwards. Four twists are brought to close proximity such that a non-topological operation can be implemented (depicted by shaded neighbourhood around all four twists on the left-most figure) -- in this case to prepare $\ket{\overline{T}} = \frac{1}{\sqrt{2}}(\ket{\overline{0}}+e^{\frac{i\pi}{4}}\ket{\overline{1}})$. The precise nature of the non-topological projection depends on the lattice implementation. Topologically, the projection can be understood as giving rise to a superposition of a $\overline{X}$ and $\overline{Z}$ eigenstate preparations.}
	\label{figMagicState}
\end{figure}

\section{Walker--Wang realisation of \textbf{3F} 
computational resource states}\label{sec3FWW}

In order to implement the computational schemes of the previous section, we develop a framework for MBQC based on Walker--Wang resource states. In this section we introduce the \textbf{3F} Walker--Wang model of Ref.~\cite{burnell2013exactly} which provides the resource state for our computation scheme. We describe how the symmetries of the \textbf{3F} anyon theory can be lifted to a lattice representation, as symmetries of the \textbf{3F} Walker--Wang model, along with how to implement symmetry domain walls and twists based on them. While we focus on the \textbf{3F} anyon theory, the Walker--Wang construction, along with our computation scheme can be applied for general anyon theories. Indeed, the most well known example of fault-tolerant MBQC -- the topological cluster state model of Ref.~\cite{Rau06} -- is a special case of our construction, that arises when the toric code anyon theory is used as an input as described in Sec.~\ref{subsecComparisonToRauss}. We expect more exotic MBQC schemes can be found using this paradigm. 
However, for general non-abelian anyon theories efficiently accounting for the randomness of measurement outcomes is an open problem.

\subsection{Hilbert space and Hamiltonian}

The Walker--Wang model~\cite{walker20123+} extends the string--net model~\cite{levin2005string} to (3+1)D, defining a Hamiltonian for any braided anyon theory, including those with degenerate braiding~\cite{Moore1989,kitaev2006anyons}. 
The degrees of freedom of the model have a basis labelled by the anyon types of the input anyon theory.
The Hamiltonian is designed to energetically favour a ground state that is the superposition of all valid anyon worldline diagrams, weighted by their evaluation in the anyon theory. 
For modular, nondegenerate, braided anyon theories this model results in a bulk that has only topologically trivial excitations\footnote{There is a subtle sense in which the bulk of a Walker--Wang for a chiral modular anyon model is nontrivial~\cite{haah2018nontrivial,Haah2019,Haah2022a,Haah2022b,Shirley2022}.}.
A smooth boundary of the model supports (2+1)D boundary states with topological order corresponding to the input anyon theory. 
In this work, our focus is on a particular instance of the Walker--Wang model that is significantly simpler than the general case.  
For this reason we will not present further details about the general construction here.

We utilise the simplified \textbf{3F} Hamiltonian defined in Ref.~\cite{burnell2013exactly}. We begin by considering a cubic lattice $\mathcal{L}$ with periodic boundary conditions. (For the \textbf{3F} theory, we do not need to trivalently resolve the cubic lattice as is done for general Walker--Wang models). The Hilbert space is given by placing a pair of qubits on each 1-cell of the cubic lattice $\mathcal{L}$. We refer to each 1-cell as a \textit{site}. We label a basis for each site $i$ as $\ket{x_1 x_2}_i, x_1,x_2 \in \mathbb{Z}_2$. Pauli operators acting on the first (second) qubit of site $i$ are labelled by $\sigma_i^{\alpha}$ ($\tau_i^{\alpha}$), where $\alpha \in \{ X, Y, Z\}$.

Following Ref.~\cite{burnell2013exactly}, the \textbf{3F} Hamiltonian is defined in terms of vertex and plaquette operators
\begin{equation}\label{eq3FWWHamiltonian}
    H_{\textbf{3F}} = -\sum_{v\in V}( A_v^{(\fr)} + A_v^{(\fg)}) -\sum_{f \in F} (B_f^{(\fr)} + B_f^{(\fg)}),
\end{equation}
where the sum is over all vertices $V$ and plaquettes $F$ of the lattice, and
\begin{align}
    A_v^{(\fr)} &= \prod_{i \in \delta v} \sigma_i^{X},  &\quad B_f^{(\fr)} &= \sigma_{O_f}^X \sigma_{U_f}^X \tau_{U_f}^X \prod_{i \in \partial f}\sigma_i^Z, \label{eqHamTermse}\\
    A_v^{(\fg)} &= \prod_{i \in \delta v} \tau_i^{X}, &\quad B_f^{(\fg)} &= \sigma_{O_f}^X \tau_{O_f}^X \tau_{U_f}^X\prod_{i \in \partial f}\tau_i^Z.\label{eqHamTermsm}
\end{align}
Therein $\delta v$ consists of all edges that contain $v$ as a vertex, $\partial f$ consists of all edges belonging to the face $f$, and $O_f$ and $U_f$ are the unique edges determined by the plauqette $f$ as per Fig.~\ref{fig3FLatticeExample}. We also define the terms $A_v^{(\fb)} = A_v^{(\fr)} A_v^{(\fg)}$, $B_f^{(\fb)} = B_f^{(\fr)} B_f^{(\fg)}$, and one may add them to the Hamiltonian (with a negative sign) if desired. We remark that not all terms are independent, for example, taking products of plaquettes around a cube gives the product of a pair of vertex terms. 

\begin{figure*}[t]%
	\centering
	\includegraphics[width=0.4\linewidth]{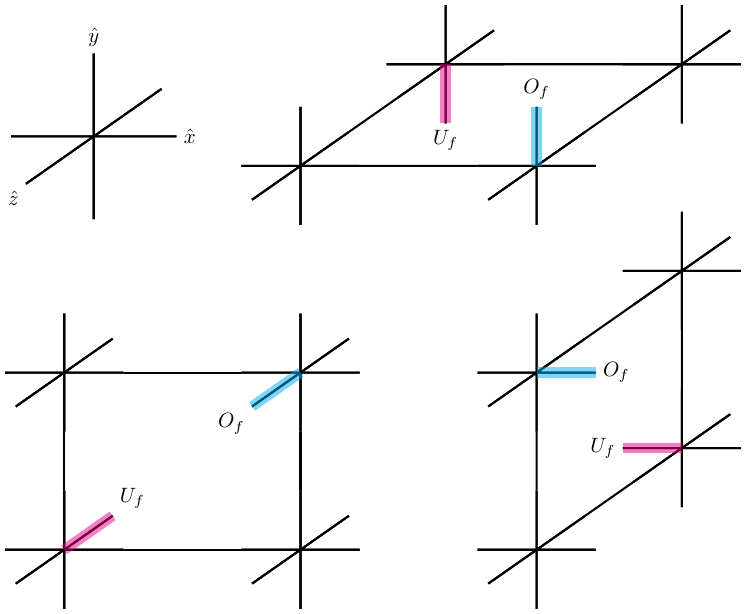} \hspace{1cm}
	\includegraphics[width=0.35\linewidth,trim= 0 -1.75cm 0 0]{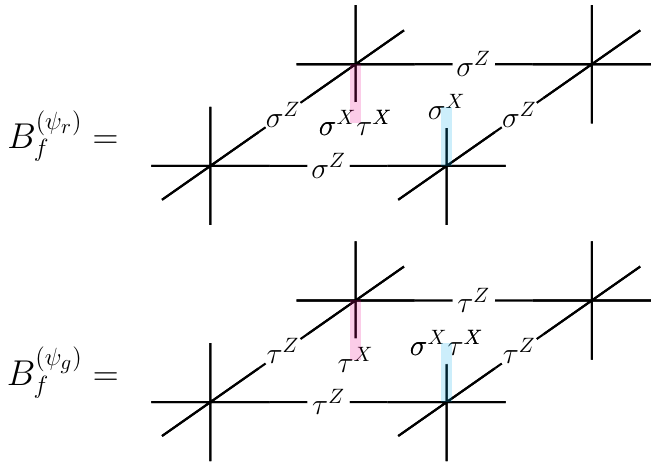}   
	\caption{(left) Special edges $O_f$ and $U_f$ for each plaquette orientation. The coordinate system is shown, with each edge of the lattice being length 1. (right) Example of Hamiltonian terms $B_f^{(\fr)}$ and $B_f^{(\fg)}$.}
	\label{fig3FLatticeExample}
\end{figure*}

On any closed manifold (i.e., without boundary), the ground state of $H_{\textbf{3F}}$ is unique~\cite{haah2018nontrivial}. In the Walker--Wang description, the ground state of $H_{\textbf{3F}}$ can be viewed as a weighted superposition over all valid anyonic worldlines, i.e., braided anyon diagrams that can be created from the vacuum via a sequence of local moves.
In particular, for each link, the basis of $\sigma_i^X$ and $\tau_i^X$ can be viewed as defining the presence or absence of fermionic $\fr$ and $\fg$ strings: $\ket{++}$ denotes the vacuum (identity anyon), $\ket{-+}$ denotes the presence of $\fr$, $\ket{+-}$ denotes the presence of $\fg$, and $\ket{--}$ denotes the presence of $\fb$. The $A_v^{(\fr)}$, $A_v^{(\fg)}$ terms generate a $\zz_2\times \zz_2$ 1-form symmetry, ensuring valid fusion rules at each vertex (i.e., $\zz_2\times \zz_2$ fermion conservation), while the $B_f^{(\fr)}$, $B_f^{(\fg)}$ ``fluctuation" terms ensure the ground-space is a superposition over all valid fermionic worldline configurations, with sign determined by the fermion braiding rules. Namely, the unnormalized ground state is 
\begin{equation}
    \ket{\psi_{\textbf{3F}}} = \sum_{c \in \mathcal{D}} \phi(c) \ket{c}, \quad \phi(c) = (-1)^{\text{linking}(c) + \text{writhe}(c)},
\end{equation}
where $\mathcal{D}$ is the set of all basis states corresponding to closed anyon diagrams with valid fusion rules that can be created from the vacuum, and $\text{linking}(c)$ ($\text{writhe}(c)$) is the linking number (writhe number) of the $\fr$ and $\fg$ fermion worldlines~\cite{walker20123+}.

\subsection{Symmetry of the \textbf{3F} Hamiltonian}
Recall that the \textbf{3F} theory has a symmetry $S_3 = \text{Aut}(\mathcal{C})$ with action on anyons given by $g \cdot 1 = 1$, $ g \cdot \psi_i = \psi_{g(i)}$, where $g(i)$ denotes the usual $S_3$ permutation action on $i \in \{\lblr, \lblg, \lblb\}.$ 
We now show that this symmetry can be lifted to a symmetry of the \textbf{3F} Walker--Wang model defined above. 

The symmetry contains an onsite and non-onsite part. Namely, write the symmetry $S(g)$ of the \textbf{3F} Hamiltonian as 
\begin{equation}
S(g) = V(g) U(g) \quad g \in S_3,
\end{equation}
where $U(g)$ is the onsite representation of $S_3$ and $V(g)$ is a locality preserving unitary (the deviation from onsiteness), which takes the form of a partial translation of qubits. If we write the basis for the 2 qubit space on each 1-cell as $\ket{\textbf{1}} := \ket{++}$, $\ket{\fr} := \ket{-+}$, $\ket{\fg} := \ket{+-}$, $\ket{\fb} := \ket{--}$, then the onsite part of the symmetry acts as a permutation of the three fermionic basis state on each site,
\begin{align}\label{eqFermionPermuation}
    g\cdot\ket{\psi_k}_i &= \ket{\psi_{g \cdot k}}_i, \quad g \in S_3, ~k \in \{\lblr,\lblg,\lblb\},
\end{align}
while preserving the vacuum, i.e, $g\cdot\ket{\textbf{1}}_i = \ket{\textbf{1}}_i$. This action can be represented by a Clifford unitary on each site.

The unitary, onsite representation $U$ of $S_3$ is defined by $U(g) = \otimes_i u_i(g)$, where we have generators
\begin{align}
u_i(\grouponearg) &= \text{SWAP}_{i_1, i_2}, \label{eqOnsiteSym1} \\
u_i(\grouptwoarg) &= \text{SWAP}_{i_1, i_2} \cdot \text{CNOT}_{i_1, i_2} . \label{eqOnsiteSym2}
\end{align}
The non-onsite part $V$ of the representation is generated by
\begin{align}
V(\grouponearg) &= T_{\tau}(v) \quad \text{with} \quad v = (1,1,1), \label{eqGaugeTrans1} \\
V(\grouptwoarg) &= I \label{eqGaugeTrans2},
\end{align}
where $T_{\tau}(v)$ is a partial translation operator acting on all $\tau$ qubits, shifting them in the $v = (x,y,z)$ direction, with coordinate basis defined in Fig.~\ref{fig3FLatticeExample}. 
Notationally, we use only single parentheses when explicit group elements appear as representation arguments, e.g., ${V(\groupone)\equiv V\groupone}$.

The partial translation operator has a well defined action on operators as a translation of their support. Namely, it can be defined factor-wise (with respect to the tensor product) for Pauli operators $\tau_u^{\alpha}$, $\sigma_u^{\alpha}$ at coordinate $u$, we have $T_{\tau}(v): \tau_u^{\alpha} \mapsto \tau_{u+v}^{\alpha}$, $\sigma_u^{\alpha} \mapsto \sigma_u^{\alpha}$, and extended by linearity. 

\begin{proposition}\label{prop3FHamSymmetry}
The unitary representation $S$ of $S_3$ defined by Eqs.~(\ref{eqOnsiteSym1}) -- (\ref{eqGaugeTrans2}) is a symmetry of the Hamiltonian $H_{\textbf{3F}}$.
\end{proposition}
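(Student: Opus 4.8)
The plan is to use that conjugation by a unitary is a homomorphism, so that a product of symmetries is again a symmetry; since $S_3 = \langle \groupone, \grouptwo \rangle$, it suffices to verify that the two generators $S(\groupone)$ and $S(\grouptwo)$ conjugate the Hamiltonian to itself. For each generator I would first record the action on single-site Paulis: $U(\groupone)=\text{SWAP}$ exchanges $\sigma^\alpha \leftrightarrow \tau^\alpha$, while $U(\grouptwo)=\text{SWAP}\cdot\text{CNOT}$ sends $\sigma^X\mapsto\sigma^X\tau^X$, $\tau^X\mapsto\sigma^X$, $\sigma^Z\mapsto\tau^Z$, $\tau^Z\mapsto\sigma^Z\tau^Z$; and the non-onsite factor acts on operators by relocating support, $T_\tau((1,1,1)):\tau^\alpha_u\mapsto\tau^\alpha_{u+(1,1,1)}$, fixing every $\sigma$. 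The aim of the whole computation is to show that $S(g)$ permutes the set of stabilizer generators $\{A^{(\psi_\alpha)}_v,B^{(\psi_\alpha)}_f\}$ among themselves, which immediately preserves the frustration-free Hamiltonian built from them.

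The vertex terms are the easy part. Under $S(\groupone)$ the SWAP exchanges $A^{(\fr)}_v\leftrightarrow A^{(\fg)}_v$, and because these operators are products of $X$ only, the subsequent translation merely relocates the green term to a neighbouring vertex, $A^{(\fr)}_v\mapsto A^{(\fg)}_{v+(1,1,1)}$; resumming over all vertices (using translation invariance) returns $\sum_v(A^{(\fr)}_v+A^{(\fg)}_v)$. Under $S(\grouptwo)$, which is purely on-site, the conjugation rules cyclically permute the three colours, $A^{(\fr)}_v\mapsto A^{(\fb)}_v\mapsto A^{(\fg)}_v\mapsto A^{(\fr)}_v$ with $A^{(\fb)}_v=A^{(\fr)}_vA^{(\fg)}_v$, so the sum over all three vertex colours is invariant.

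The plaquette terms are the crux, and I expect the $\groupone$ case to be the main obstacle. For $\grouptwo$ the verification is routine Clifford bookkeeping: applying the $\text{SWAP}\cdot\text{CNOT}$ rules to $B^{(\fr)}_f$ and $B^{(\fg)}_f$ and using $B^{(\fb)}_f=B^{(\fr)}_fB^{(\fg)}_f$ yields the cycle $B^{(\fr)}_f\mapsto B^{(\fg)}_f\mapsto B^{(\fb)}_f\mapsto B^{(\fr)}_f$, and since no translation is involved no geometric input is needed. For $\groupone$, however, the SWAP alone does not send $B^{(\fr)}_f$ to a plaquette term: it produces $B^{(\fg)}_f$ dressed by a residual pair of $X$-operators on the special edges, $\text{SWAP}(B^{(\fr)}_f)=B^{(\fg)}_f\,\sigma^X_{O_f}\sigma^X_{U_f}$, and since the translation only moves $\tau$-operators it cannot remove this $\sigma$-residual by itself. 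The whole point of the body-diagonal translation is to repair exactly this mismatch, and the calculation collapses to the single geometric identity $U_f=O_f+(1,1,1)$ — that the shift by $(1,1,1)$ carries the over-edge of $f$ onto its under-edge — which I would read off from Fig.~\ref{fig3FLatticeExample}. Granting this, tracking the $\sigma$- and $\tau$-content through $T_\tau((1,1,1))$ shows the residual becomes the identity and $S(\groupone)B^{(\fr)}_fS(\groupone)^{-1}=B^{(\fg)}_{f+(1,1,1)}$, with the symmetric statement that $B^{(\fg)}_f$ is carried to $B^{(\fr)}$ on a corresponding plaquette; resumming over plaquettes then gives invariance. The hard part is checking the identity uniformly across all three plaquette orientations; reassuringly, no signs can appear, since both SWAP and the partial translation are monomial Cliffords that map Pauli strings to Pauli strings.

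Finally I would assemble the pieces and flag one bookkeeping subtlety. The 2-cycle $\groupone$ preserves the two-term Hamiltonian of Eq.~(\ref{eq3FWWHamiltonian}) verbatim, but any 3-cycle necessarily mixes in the blue operators $A^{(\fb)}=A^{(\fr)}A^{(\fg)}$ and $B^{(\fb)}=B^{(\fr)}B^{(\fg)}$; the clean statement is therefore that $S_3$ permutes the full three-colour generating set, so it is a symmetry of the symmetric Hamiltonian (equivalently, it normalises the stabilizer group and preserves the ground space, which is unaffected by adjoining the blue terms). With both generators shown to be symmetries and $S_3=\langle\groupone,\grouptwo\rangle$, multiplicativity of conjugation extends the conclusion to all $g\in S_3$.
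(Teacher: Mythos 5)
Your proposal is correct and follows essentially the same route as the paper's proof: reduce to the generators $\groupone,\grouptwo$, record the single-site conjugation rules for $\text{SWAP}$ and $\text{SWAP}\cdot\text{CNOT}$, and show that $S(g)$ permutes the stabilizer generators $A_v^{(\psi)},B_f^{(\psi)}$ (up to lattice shifts and products), so that the stabilizer group $\mathcal{R}_{\textbf{3F}}$ is preserved. Your two added points of care --- isolating the geometric identity $U_f=O_f+(1,1,1)$ that makes the $\sigma^X_{O_f}\sigma^X_{U_f}$ residual of the SWAP get absorbed by the partial translation, and flagging that 3-cycles map red/green terms into blue ones so the clean statement is preservation of the stabilizer group rather than term-wise invariance --- are both consistent with, and indeed implicit in, the paper's own argument and its remark that group preservation suffices for a stabilizer model.
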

That $u_i$ is indeed a representation of $S_3$ is verified in App.~\ref{AppProofOfSymmetryRep}. We note that commutation of the symmetry with the Hamiltonian is not strictly necessary here, since $H_{\textbf{3F}}$ is a stabilizer model it is sufficient to prove that the stabilizer group is preserved under the action of $S(g)$ $\forall g\in S_3$. This can be verified by direct computation and we provide the proof in App.~\ref{AppProofOfProp3FHamSymmetry}. We remark that $S(g)$ induces a permutation on the terms $B_f^{(\fr)}$, $B_f^{(\fg)}$, $B_f^{(\fb)}$, given by $S(g) B_f^{(f_i)} S^{-1}(g) = B_f^{(f_{g\cdot i})}$.

Thus only the 3-cycles have an onsite representation while the 2-cycles require a non-onsite partial translation. One can track the source of the non-onsiteness to the particular choice of gauge for the input data to the Walker--Wang construction -- namely the $R$ symbols -- to obtain the Hamiltonian $H_{\textbf{3F}}$. One can equally construct a Hamiltonian using the transformed data corresponding to the action of each symmetry element $g\in S_3$, all of which belong to the same phase and can be related by a locality preserving unitary. This additional locality preserving unitary is the origin of the non-onsite part of the symmetry. 
In general, applying a global symmetry to an anyon theory results in a transformation of the gauge-variant data~\cite{barkeshli2019symmetry}. The Walker--Wang model based on this transformed data is in the same topological phase as the original model, implying the existence of a locality-preserving unitary to bring the symmetry transformed Hamiltonian back to the original Hamiltonian. 
Combining the global symmetry transformation with this locality-preserving unitary promotes the global symmetry of the input anyon theory to a locality-preserving symmetry of the Walker--Wang Hamiltonian.  
We remark that for the \textbf{3F} theory, and more general anyon theories, one can construct a (non-stabilizer) Hamiltonian representative (using the symmetry-enriched anyon theory data), where the symmetry is onsite~\cite{shawnthesis,williamson2016hamiltonian} -- even for anomalous symmetries~\cite{bulmash2020absolute}. 

\subsubsection{Transforming the lattice}
For the \textbf{3F} Hamiltonian presented in Eq.~(\ref{eq3FWWHamiltonian}), the 3-cycles $\grouptwo, (\text{rbg}) \in S_3$ admit an onsite unitary representation, while the 2-cycles require a non-onsite (but nonetheless locality preserving) unitary. By transforming the lattice, we can express the symmetry action of the 2-cycles entirely as a translation. This simplifies the implementation of symmetry defects on the lattice.  Namely, consider the translation operator
\begin{equation}\label{eqLatticeTransformation}
    T(t), \quad t= \frac{1}{2}(1,1,1),
\end{equation}
that acts to translate all qubits in the $t$ direction (where again the $(x,y,z)$ coordinates are defined in Fig.~\ref{fig3FLatticeExample}). Consider the translation $T_{\tau}(v)$ of all of the $\tau$ qubits such that they are shifted to faces of the cubic lattice. On this new lattice, where there are qubits on every face ($\tau$ qubits) and every edge ($\sigma$ qubits) and the \textbf{3F} Hamiltonian consists of a term for each vertex $v$, edge $e$, face $f$ and volume $q$: 
\begin{align}
    \tilde{A}_v^{(\fr)} &= \prod_{e \in \delta v} \sigma_e^{X},  &\quad \tilde{B}_f^{(\fr)} &= \sigma_{O_f}^X \sigma_{U_f}^X \tau_{f}^X \prod_{e \in \partial f}\sigma_e^Z, \label{eqModifiedHamTermse}\\
    \tilde{A}_q^{(\fg)} &= \prod_{f \in \partial q} \tau_f^{X}, &\quad \tilde{B}_e^{(\fg)} &=  \tau_{O_e}^X \tau_{U_e}^X \sigma_{e}^X \prod_{f \in \delta e}\tau_f^Z.\label{eqModifiedHamTermsm}
\end{align}
where $\delta e$ ($\delta v) $ consists of all faces (edges) incident to the edge $e$ (vertex $v$), and $\partial f$ ($\partial q$) consists of all edges (faces) in the boundary of the face $f$ (volume $q$), and $U_f, O_f$, and $U_e, O_e$ are edges and faces, respectively, depicted in Fig.~\ref{fig3FModifiedWWModel}.

 \begin{figure}[t]%
	\centering
	\includegraphics[width=0.95\linewidth]{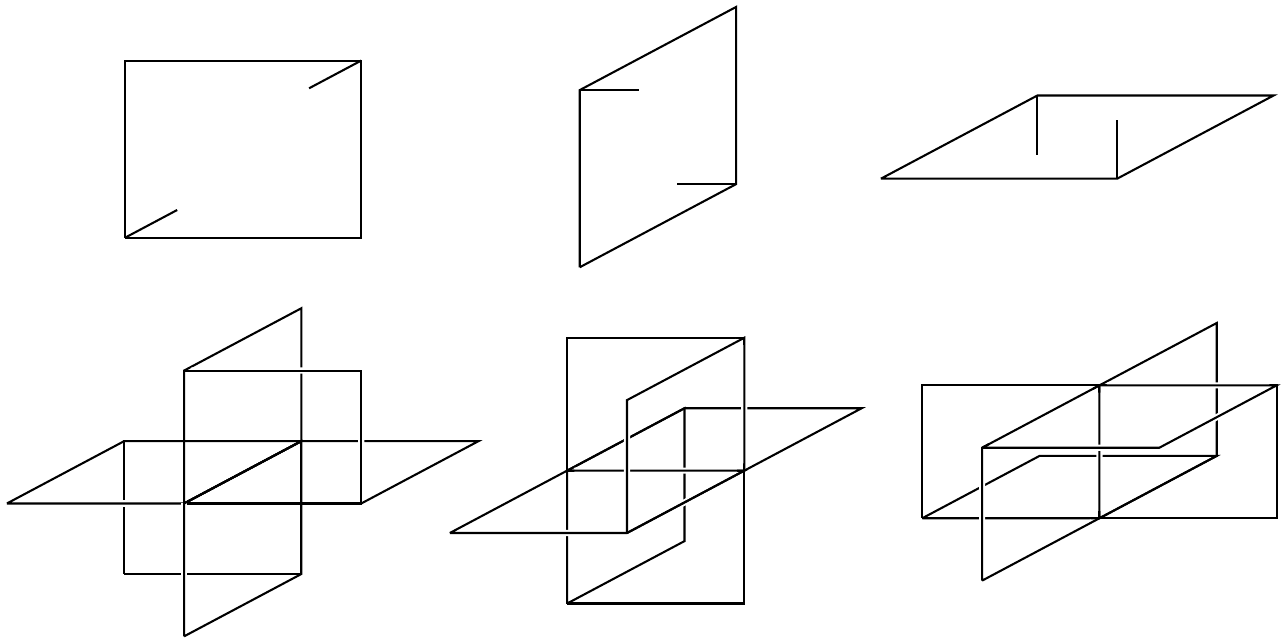} 
	\caption{The \textbf{3F} Walker--Wang plaquette terms after translation of each of the $\tau$ qubits in the original lattice by $\frac{1}{2}(1,1,1)$. $\sigma$ qubits live on edges, while $\tau$ qubits live on faces. The support of the terms $\tilde{B}_f^{(\fr)}$ and $\tilde{B}_e^{(\fg)}$ are shown on top and bottom, respectively. For a given face $f$, the edges $U_f, O_f$ are precisely those depicted that are not in the boundary of the face. Similarly, for a given edge $e$, the faces $U_e, O_e$ are those depicted that are not in the coboundary of the edge. The 1-form constraint terms $\tilde{A}_v^{(\fr)}$ and $\tilde{A}_q^{(\fg)}$ are given by a product of Pauli-$X$ operators on the star of a vertex and the boundary of a cube, respectively.}
	\label{fig3FModifiedWWModel}
 \end{figure}
 
On this lattice, the symmetry $S(\grouponearg)$ can be entirely implemented by a lattice transformation:
 \begin{equation}\label{eq3FTranslationSymmetry}
     S(\grouponearg) = T(w), \quad w = \frac{1}{2}(\pm1, \pm1, \pm1),
 \end{equation}
where it is understood that the $\pm$ sign for each direction can be chosen independently. The symmetry induces the correct permutation action on Hamiltonian terms: namely, $\tilde{B}_f^{(\fr)}$ and $\tilde{B}_e^{(\fg)}$ plaquettes are permuted, as are the 1-form generators $\tilde{A}_v^{(\fr)}$ and $\tilde{A}_q^{(\fg)}$. 
We remark that there are other choices of translation vector that realise the symmetry. One can directly generate the lattice representations for the other 2-cycle symmetries by composing $S(\grouponearg)$ and $S(\grouptwoarg)$.

\subsection{Construction of symmetry defects in stabilizer models for locality-preserving symmetries}\label{subsecWWSymmetryDefects}
Here we present a general construction for implementing symmetry defects in 3D stabilizer models, whenever the symmetry is given by a constant depth circuit with a potential (partial) translation.  
The prescription leverages similar constructions of symmetry defects in 2D systems \cite{bombin2010topologicaltwist, barkeshli2019symmetry, cheng2016translational}.
The construction admits a direct generalisation to a wider class of locality preserving symmetries, 
such as those realized by quantum cellular automata~\cite{haah2018nontrivial}, and we expect that it extends to more general topological commuting projector models.
In particular, we give a prescription for implementing symmetry defects for the $S_3$ symmetries of the \textbf{3F} Walker--Wang model, with explicit examples provided in App.~\ref{secWWSymmetryDefects}.

\subsubsection{Codimension-1 domain walls}
Let us begin by implementing $g$-domain walls in a stabilizer Hamiltonian $H$ with a symmetry $S(g)$ represented by a locality preserving unitary. Consider for simplicity an infinite lattice, that is partitioned by a two-dimensional surface $D$ into two connected halves $L\cup R$ (for example, $D$ may be a lattice plane). 
Our goal is to create a codimension-1 domain wall supported near $D$. 
We decompose the Hilbert space as $\mathcal{H} = \mathcal{H}_{L} \otimes \mathcal{H}_{R}$, where $\mathcal{H}_{L}$ and $\mathcal{H}_{R}$ are the Hilbert spaces for the two halves, which we refer to as the left and right spaces. We require that the partition is such that there is a natural restriction to one of the half spaces, which without loss of generality we assume to be $R$. In particular, we require the restriction $S_{R}(g) = S(g)|_{R}$ of $S(g)$ to $\mathcal{H}_{R}$ to be a well defined map
\begin{equation}
    S_{R}(g): \mathcal{H}_R \rightarrow \mathcal{H}_{R}.
\end{equation}
For any constant depth unitary circuit, there exists a well defined restriction that is unique up to a local unitary that acts within a small neighbourhood of $D$. 
In the presence of translation symmetries we additionally require that the translation is injective on $\mathcal{H}_R$ -- that is, we require that the translation maps one half of the partition to itself. 
Such transformations can be achieved for example if $D$ is a plane or multiple half-planes that meet. 
For $D$ a lattice plane this accommodates half space translations orthogonal to $D$ that are injective but not surjective on $\mathcal{H}_R$. 

With this restriction, the Hamiltonian with a $g$-domain wall is given by conjugating the Hamiltonian $H$ by the restriction of the symmetry $S_{R}(g)$. 
We remark that the defect Hamiltonian differs from $H$ only near the plane $D$. Namely, the restriction $S_{R}(g)$ preserves all Hamiltonian terms that are supported entirely within $\mathcal{H}_{R}$ (as it is a symmetry of $H$), has no affect on the terms that are supported entirely within $\mathcal{H}_{L}$, but may have some nontrivial action on terms supported on both $\mathcal{H}_L$ and $\mathcal{H}_R$. The modified terms supported in the neighbourhood of $D$ realise the $g$-domain wall. Such modified terms commute with each other and the remainder of the Hamiltonian, since their (anti)commutation relations upon restriction to either side of $D$ are preserved by $S_R(g)$.

We remark that when $S(g)$ is a locality preserving, but not onsite, unitary the Hilbert space near the domain wall may be modified. In particular, for a symmetry involving translation, the new Hilbert space may be a strict subset of the old Hilbert space. That is, a subset of qubits in  $\mathcal{H}_R$  near the domain wall $D$ may be ``deleted" (for example if the translational symmetry is not parallel to the $D$~plane).

\subsubsection{Codimension-2 twist defects}\label{subsecTwistPrescription}
We now consider domain walls that terminate in codimension-2 twists. Consider a domain wall $D$ that has been terminated to create a boundary $\partial D$ which we assume is a straight line (in this way, $D$ no longer parititions the lattice into two halves). Let the Hamiltonian be written $H=\sum_{x\in I}h_x$ for some index set $I$, and $d=\max_{x\in I}\{\text{diam}(h_x)\}$ be the max diameter of any term, where $\text{diam}(h_x)$ is the diameter of the smallest ball containing the support of $h_x$ in the natural lattice metric. 
Along the domain wall $D$ we can modify the Hilbert space and Hamiltonian terms following the previous prescription, provided they commute with the bulk Hamiltonian. 
This works away from the boundary of the domain wall $\partial D$. 
Specifically, one can replace all terms $h_x$ with support intersecting $D$ by $S_R(g) h_x S^{-1}_R(g)$, where again $S_R(g)$ is the restriction to one side of the domain wall, which is locally well defined away from $\partial D$. 

In general, this procedure will break down for terms supported within a distance $d$ of $\partial D$, as the modified terms may no longer commute with the neighbouring bulk Hamiltonian terms and so are not added to the Hamiltonian. In order to ensure that all local degeneracy has been lifted in the neighbourhood of $\partial D$, we must find a maximal set of local Pauli terms that commute with the bulk Hamiltonian and domain wall terms. By stabilizer cleaning~\cite{bravyi2009no} there exist a generating set supported on the qubits within a neighbourhood of radius $d$ of $\partial D$, which we label by $N_d(\partial D)$. By Theorem IV.11. of Ref.~\cite{haah2018nontrivial}, this maximal set of local terms admits a translationally invariant generating set (by assumption we have assumed that $\partial S$ and thus the surrounding Hamiltonian terms have a translational invariance along one dimension). We use such a generating set to define our terms along the twist. 

\begin{figure}[t]%
	\centering
	\includegraphics[width=0.92\linewidth]{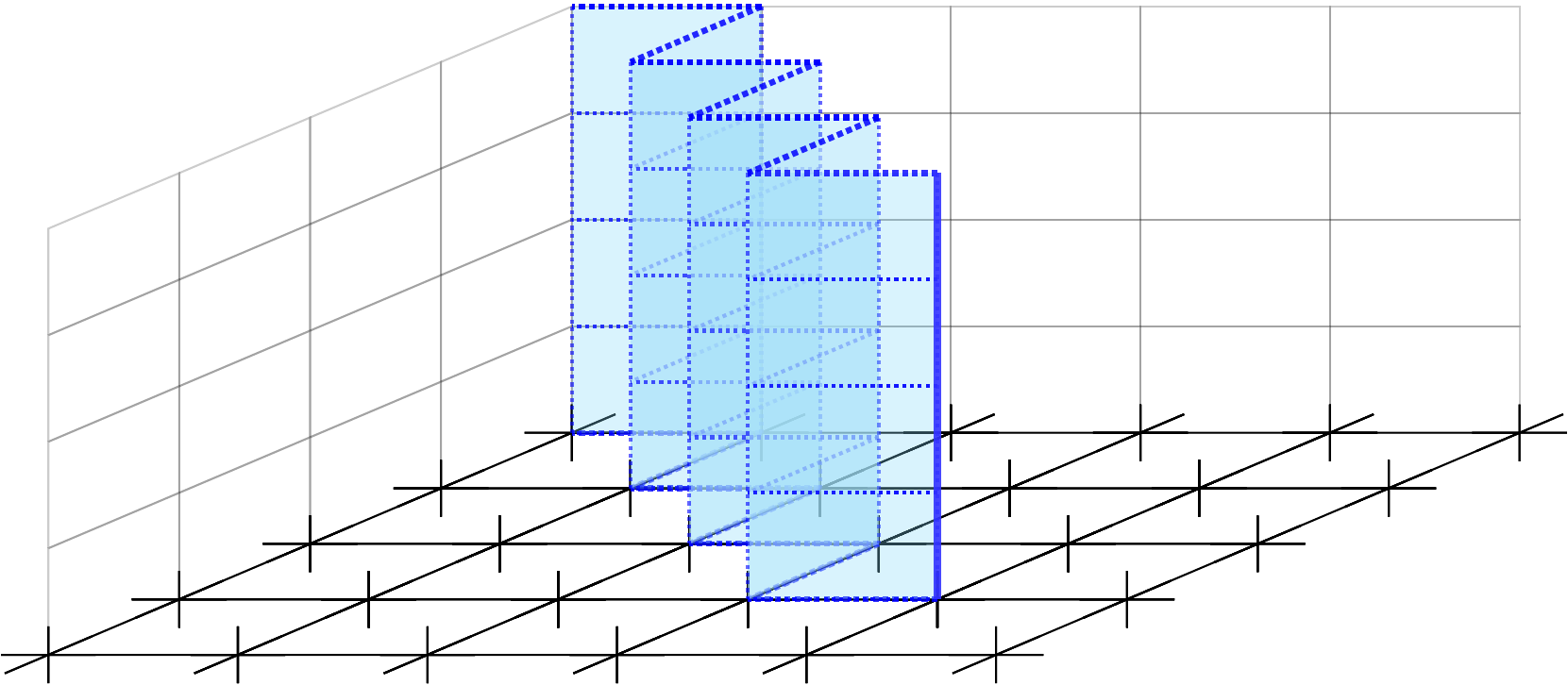}
	\caption{Example of a domain wall plane $D$ for a symmetry $S(\grouponearg)$ ending in a twist (depicted in solid blue) travelling in the $\hat{y}$ direction. The new Hilbert space contains no qubits on any of the shaded edges or faces, leaving a lattice dislocation. }
	\label{fig3FDomainWall}
 \end{figure}
 
\subsubsection{Planes meeting at seams and corners}
For the purposes of discretising domain walls to implement gates from Sec.~\ref{sec3FTQCScheme} on the lattice we are required to consider configurations of two or three domain wall planes that meet at 1D seams and 0D corners, along with twists defect lines that change directions at 0D corners.
If the planes are constructed using different symmetries $S(g)$ or different translations, then Hamiltonian terms in the neighbourhood of seams can be constructed in the same way as the twists (utilising Theorem IV.11. of Ref.~\cite{haah2018nontrivial}). 
Hamiltonian terms in the neighbourhood of a corner where a twist changes direction or where distinct domain wall planes meet can be again computed by finding a maximal set of mutually commuting terms that commute with the surrounding Hamiltonian, which is a finite constant sized problem (and thus can be found by exhaustive search), as such these features are contained within a ball of finite radius.

\begin{figure*}[t]
    \center
	\includegraphics[width=0.48\linewidth]{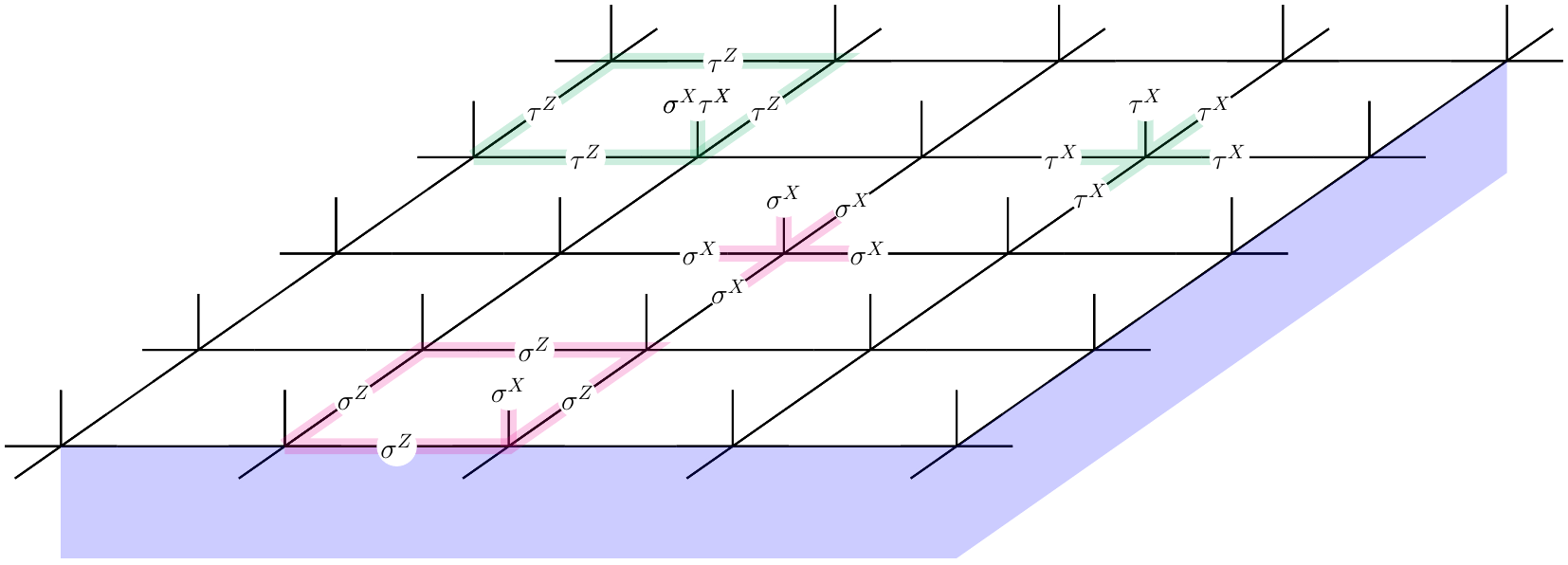} 
	\includegraphics[width=0.48\linewidth]{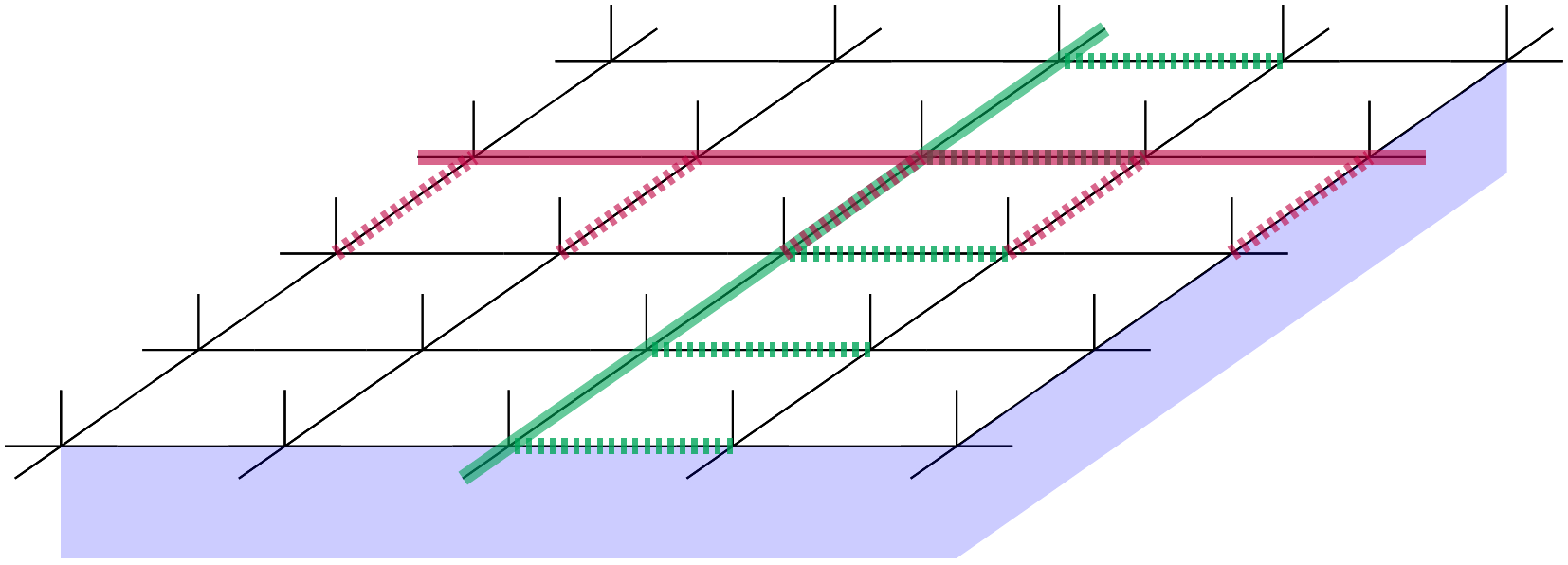} 
	\caption{Boundary of $H_{\textbf{3F}}$. The blue shaded region depicts vacuum (i.e., region with no qubits), and the bulk of $H_{\textbf{3F}}$ lies above the plane. On the left we depict the stabilizers on the boundary: Truncated versions of $A_v^{(\fr)}$ and $B_f^{(\fr)}$ shaded in red, and $A_v^{(\fg)}$ and $B_f^{(\fg)}$ shaded in green. On the right we depict the support of logical operators of Eqs.~(\ref{eq3FLogicalOperators1}), (\ref{eq3FLogicalOperators2}) -- two cycles $c$, $c'$ are depicted by solid red and green lines, while the links belonging to $c_O$, $c'_O$ are depicted by dashed lines. For example, if we take periodic boundary conditions (such that the boundary is a torus), then the two operators $l^{\fr}_c$ and $l^{\fg}_{c'}$ form anti-commuting pairs of logical operators.}
	\label{fig3FBoundary}
\end{figure*}

\subsection{Boundaries of $H_{\textbf{3F}}$}\label{secWWTimeLikeBoundaries}
Finally, we review boundaries of the \textbf{3F} Walker--Wang model. 
On a manifold with boundary, the Walker--Wang model admits a canonical smooth boundary condition~\cite{walker20123+} that supports a topological phase described by the input anyon theory -- in this case the \textbf{3F} anyon theory, as described in Ref.~\cite{burnell2013exactly}.

To be more precise, one may terminate the lattice with smooth boundary conditions as depicted in Fig.~\ref{fig3FBoundary}. The Hamiltonian terms for the boundary can be obtained by truncating the usual bulk terms, see Fig.~\ref{fig3FBoundary}. The boundary supports a topology dependent ground space degeneracy of $2^{2g}$ for an orientable, connected boundary with genus $g$. We can view the ground-space of the boundary as a code with certain logical operators that form anti-commuting pairs. The logical operators come in two types. Let $c$ be a closed cycle on the boundary, then let
\begin{align}
    l^{\fr}_c &= \prod_{i \in c} \sigma^Z_i \prod_{j \in c_O} \sigma_j^X \label{eq3FLogicalOperators1},\\
    l^{\fg}_c &= \prod_{i \in c} \tau^Z_i \prod_{j \in c_O} \sigma_j^X \tau_j^X \label{eq3FLogicalOperators2},
\end{align}
where $c_O$ is a set of links ``over'' the cycle $c$, depicted by dashed lines in Fig.~\ref{fig3FBoundary}. Two operators $l^{\fr}_c$ and $l^{\fg}_{'c}$ anticommute if and only if $c$ and $c'$ intersect an odd number of times and two operators of the same type commute. Representative logical operators can be found by choosing nontrivial cycles $c$ of the boundary.

As described in ~\cite{burnell2013exactly}, the \textbf{3F} anyons are supported as excitations on the boundary. Such excitations correspond to flipped boundary plaquettes $B_f^{(\fr)}$ and $B_f^{(\fg)}$, and can be created at the end of string operators obtained as truncated versions of the loop operators of Eqs.~(\ref{eq3FLogicalOperators1}),~(\ref{eq3FLogicalOperators2}). Further, symmetry defects from the bulk that intersect the boundary give rise to defects on the 2D boundary, behaving as described in Sec.~\ref{sec3FTQCScheme}. Thus the boundary of the \textbf{3F} Walker--Wang model faithfully realises the \textbf{3F} anyon theory and its symmetry defects. In the following section we show how to perform fault-tolerant MBQC with these states.

\section{Fault-tolerant measurement-based quantum computation with Walker--Wang resource states}\label{sec3FMBQC}
Measurement-based quantum computation provides an attractive route to implement the topological computation scheme introduced in Sec.~\ref{sec3FTQCScheme}. 
The computation proceeds by implementing single spin measurements on a suitably prepared resource state -- in this case the ground state(s) of the Walker--Wang model introduced in the previous section. 
In this section we introduce the general concepts required to implement fault-tolerant MBQC with Walker--Wang resource states, focusing on the \textbf{3F} anyon theory example. We begin by reviewing the cluster-state scheme of Ref.~\cite{Rau06} and then recast it in the Walker--Wang framework, before presenting our construction for \textbf{3F} MBQC. We emphasize that the architectural and resource requirements for this \textbf{3F} MBQC scheme are very similar to that of the fault-tolerant MBQC scheme of Ref.~\cite{Rau06}. In particular, the resource states can be prepared with a Clifford circuit acting on qubits arranged on a 2D grid (with only nearest-neighbour interactions).

\subsection{Warm-up: topological cluster state MBQC in the Walker--Wang framework}\label{subsecComparisonToRauss}
The simplest and most well known example of fault-tolerant MBQC is the topological cluster state model from Ref.~\cite{Rau06}.
As a warm-up for what's to come, we explain how this model can be understood as a Walker--Wang model based on the toric code anyon theory.

Up to some lattice simplifications (which we show below), the topological cluster state model~\cite{Rau06} is is prescribed by the Walker--Wang construction using the toric code anyon theory $\mathcal{C}_{\textbf{TC}} = \{1,e, m, \epsilon\}$ as the input. The toric code anyon theory emerges as the fundamental excitations of the toric code~\cite{kitaev2003fault}, they have the following $\mathbb{Z}_2\times\mathbb{Z}_2$ fusion rules:
\begin{equation}
    e \times m = \epsilon, \qquad e\times e = m \times m = 1,
\end{equation}
with modular $S$ matrix the same as \textbf{3F} as in Eq.~(\ref{eq3FModularMatrices}), and $T$ matrix given by $T= \text{diag}(1,1,1,-1)$. The Walker--Wang construction can be used with this input to give a Hamiltonian with plaquette terms as per Fig.~\ref{figRaussendorfWWModel}~(top) along with the same vertex terms as Eqs.~(\ref{eqHamTermse}),~(\ref{eqHamTermsm}). To obtain the more familiar stabilizers of the three-dimensional topological cluster state of Ref.~\cite{Rau06} -- depicted in Fig.~\ref{figRaussendorfWWModel}~(bottom) -- we simply translate all $\tau$ qubits by $\frac{1}{2}(1,1,1)$, as in Eq.~(\ref{eqLatticeTransformation}).

The Walker--Wang construction provides a useful insight into topological quantum computation with the 3D cluster state. In particular, the (unique on any closed manifold) ground state of the toric code Walker--Wang model consists of a superposition over closed anyon diagrams. We interpret the basis states $\ket{++}, \ket{-+}, \ket{+-}, \ket{--}$ on each link as hosting $1,e,m, \epsilon$ anyons, respectively. The ground state is then
\begin{equation}
     \ket{\psi_{\textbf{TC}}} = \sum_{c \in \mathcal{D}} \phi(c) \ket{c}, \quad \phi(c) = (-1)^{\text{linking}(c)},
\end{equation}
where $\mathcal{D}$ is the set of all basis states corresponding to closed anyon diagrams with valid fusion rules that can be created via local moves, and $\text{linking}(c)$ is the linking number of the $e$ and $m$ anyon worldlines.

The computation on this state proceeds by measuring all qubits in the local anyon basis (i.e., $\ket{++}, \ket{-+}, \ket{+-}, \ket{--}$),  projecting it into a definite anyon diagram which we call a history state. As each measurement outcome is in general random, the history state produced is also random. This leads to a outcome-dependent Pauli operator that needs to be applied (or kept track of) to ensure deterministic computation. This Pauli operator is inferred from measurement outcomes of operators known as \textit{correlation surfaces} for each gate \cite{raussendorf2003measurement, Rau06}, which measure the anyon flux between different regions. The computation is fault-tolerant because of the presence of the $\zz_2\times \zz_2$ 1-form symmetry: errors manifest as violations of anyon conservation in the history state and can be accounted for and corrected. 

To implement logical gates, one can use a combination of boundaries and symmetry defects to encode and drive computation. The anyon theory enjoys a $\zz_2$ symmetry: $e \leftrightarrow m$ (which on the usual cluster-state lattice with qubits on faces and edges can be realized by the same translation operator as Eq.~(\ref{eq3FTranslationSymmetry})). Twists defects corresponding to this $\zz_2$ symmetry can be implemented in this lattice using the prescription of Sec.~\ref{subsecWWSymmetryDefects} and can be braided and fused to implement logical gates. (Another method for constructing defects is given by Ref.~\cite{brown2020universal} -- although is distinct from the method proposed in Sec.~\ref{subsecWWSymmetryDefects}.). We remark that braiding these defects is not Clifford-complete. To make the scheme Clifford complete, one can introduce boundaries, of which there are two types (each boundary can condense either $e$ or $m$ anyons)~\cite{Rau06}.

In what follows, we describe the topological MBQC scheme based on the \textbf{3F} theory. 

\begin{figure}[t]%
	\centering
	\includegraphics[width=0.95\linewidth]{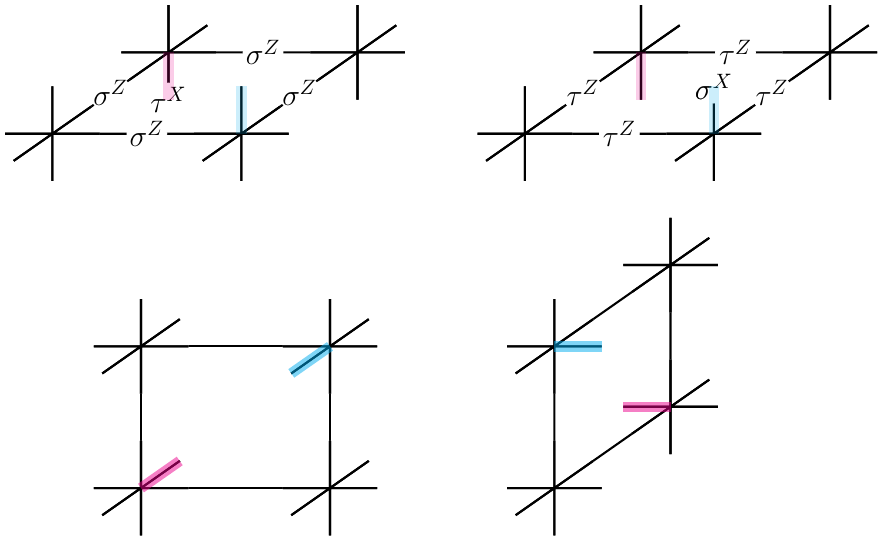} 
    \includegraphics[width=0.95\linewidth]{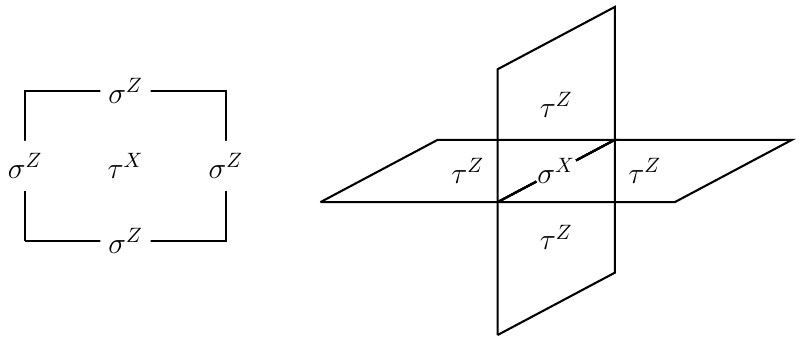} 
	\caption{(top) The Walker--Wang construction applied to the toric code anyon theory $\mathcal{C}_{\textbf{TC}}$ gives the plaquette terms depicted above. Terms on different plaquettes can be obtained by translating and rotating according to the correct orientation, as depicted by the blue and red legs. (bottom) The 3D cluster state terms obtained after all $\tau$ qubits have been translated by $\frac{1}{2}(1,1,1)$. All terms are rotationally symmetric on this lattice. }
	\label{figRaussendorfWWModel}
 \end{figure}

\subsection{3-Fermion topological MBQC}
We now describe how to implement our \textbf{3F} topological quantum computation scheme using an MQBC approach based on Walker--Wang resource states. 
A high level description of the computation scheme is depicted in Fig.~\ref{figWWMBQC}. 

\subsubsection{The \textbf{3F} resource state}
The resource state upon which measurements are performed is given by the ground state of the Walker--Wang Hamiltonian $H_{\textbf{3F}}$ with defects as defined in Sec.~\ref{sec3FWW}, which is a stabilizer model.
This resource state can be understood as a blueprint for the computation and we denote the stabilizer group that defines it by $\mathcal{R} \leq \mathbb{P}_n$ (where $\mathbb{P}_n$ is the Pauli group on $n$ qubits). In particular, $\mathcal{R}$ is generated by all the local terms of $H_{\textbf{3F}}$, and the resource state is a $+1$-eigenstate of all elements of $\mathcal{R}$.

It is instructive to think of one direction of the lattice, say the $\hat{y}$ direction, as being simulated time. For simplicity, we choose the global topology of the lattice to be that of the 3-torus such that the Hamiltonian(s) contain a unique ground state. Of course, one may consider boundaries that support a degenerate ground-space (with \textbf{3F} topological order and possible symmetry defects) as described in Sec.~\ref{secWWTimeLikeBoundaries}, which can be used as the input and output encoded states for quantum computations. However, we remark here that all computations may be performed in the bulk (i.e., with periodic boundary conditions) with all boundaries of interest being introduced by measurement. 

In order to perform computations consisting of a set of preparations, gates and measurements, one prepares the ground state of the \textbf{3F} Walker--Wang Hamiltonian with symmetry defects according to a discretised (on the cubic lattice) version of the topological specification of each gate in Sec.~\ref{sec3FTQCScheme} following the microscopic prescription of Sec.~\ref{sec3FWW}, with gates concatenated in the natural way. 
As the resource state is the ground-state of a stabilizer Hamiltonian, one can fault-tolerantly prepare it using a constant depth Clifford circuit (for example, one may define Clifford gadgets to measure each Hamiltonian term~\cite{steane1997active}, and then compose them together). 

\textbf{Hardware implementations.}
Preparing the resource state can be performed in different ways, depending on the hardware platform and primitives. Despite being a 3D resource state, the computation can be performed on a 2D architecture with only local qubit connectivity. In particular, the entire resource state need not be prepared all at once, and can instead be prepared and measured with only a 2D slab of the resource state being active at any point in time (following for example, Ref.~\cite{Raussendorf07}). Thus the preparation and measurement circuit can be mapped to a Clifford circuit acting on qubits on a 2D layout with local qubit connections, which is possible in many currently pursued architectures, for example photonic qubit architectures, neutral atom architectures, and more~\cite{rudolph2017optimistic,bartolucci2021fusion, krinner2022realizing,acharya2022suppressing, ryan2022implementing, sundaresan2022matching, fukui2018high,sahay2023high}. With access to flying qubits (such as photons), one can prepare certain resource states in a quasi-1D fashion as in Refs.~\cite{wan2021fault, bombin2021interleaving} and it may be interesting to consider this for more general Walker--Wang resource states to further reduce the hardware requirements. It would also be interesting to consider fusion-based versions~\cite{bartolucci2021fusion} of this approach. 

\begin{figure}[t]
	\centering
	\includegraphics[width=0.95\linewidth]{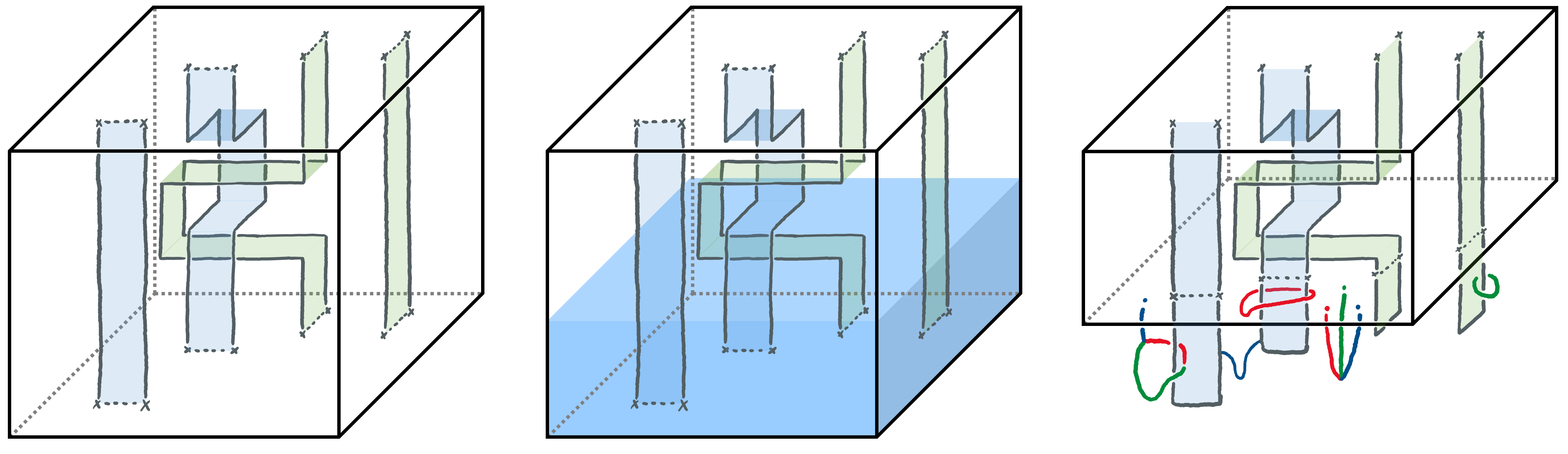} 
	\caption{Fault-tolerant MBQC using the \textbf{3F} Walker--Wang model. (left) Defects and twists can be discretised to live on 2-chains of the lattice and their boundary. (middle) Measurements in the fermion basis in the blue region drives the computation. (right) The post measured state is given by a fixed fermion worldline string-net. Any violations of the $\zz_2\times\zz_2$ conservation at each vertex results from an error, and is detected by the vertex operators whose outcomes are inferred from the local measurements.}
	\label{figWWMBQC}
\end{figure}

\subsubsection{Topological boundary states through measurement}\label{secPreparing3Fstates}
Much like the traditional approaches to fault-tolerant MBQC~\cite{RBH,Rau06,brown2020universal}, we can understand the measurements as propagating and deforming topologically-encoded states (or in another sense as encoded teleportation). 
To understand this more precisely, and develop intuition about how the topological computation proceeds, we begin with an example. 
Consider the  ground-state of the $\textbf{3F}$ Walker--Wang model on the lattice $\mathcal{L}$. We partition the lattice into three segments $\mathcal{L} = A \sqcup C \sqcup B$ as depicted in Fig.~\ref{fig3FMeasuringBoundary}. To begin with, we consider the case where all the sites in $C$ are measured in the fermion basis -- i.e., in $\sigma_i^X$ and $\tau_i^X$ -- and where $A$ and $B$ are unmeasured. 

Firstly, we observe that the post-measured state supports two bulk \textbf{3F} Walker--Wang ground states in $A$ and $B$, with \textbf{3F} boundary states on the interface surfaces $\partial A$ and $\partial B$. The boundary states are precisely those described in Sec.~\ref{secWWTimeLikeBoundaries}, as one can verify that the post-measured state is stabilized by the same truncated stabilizers of Fig.~\ref{fig3FBoundary} up to a sign. Even in the absence of errors, these boundary states will in general host \textbf{3F} anyons as excitations which live at the end of strings of $-1$ measurement outcomes of $\sigma_i^X$ and $\tau_i^X$. 

These boundary states are maximally entangled. To show this, we introduce the concept of a \textit{correlation surface}, which are certain stabilizers of the resource state, that agree with the measurements in $C$ and restrict to logical operators on the boundaries $\partial A$ and $\partial B$. Namely, we define two planes in the $xy$ and $zy$ directions, $c^{(xy)}$ and $c^{(zy)}$, as per Fig.~\ref{fig3FMeasuringBoundary}, and define the operators 
\begin{align}
    S_{\text{r}}(c^{(xy)}) = \prod_{i \in \partial c^{(xy)}} \sigma_i^Z \prod_{j \in  c_O^{(xy)}}\sigma_j^X  \prod_{k \in  c_U^{(xy)}}\sigma_k^X \tau_k^X, \\
    S_{\text{g}}(c^{(zy)}) = \prod_{i \in \partial c^{(zy)}} \tau_i^Z \prod_{j \in  c_O^{(zy)}}\sigma_j^X \tau_j^X  \prod_{k \in  c_U^{(zy)}} \tau_k^X,
\end{align}
where $c_O^{(xy)}$ ($c_O^{(zy)}$) and $c_U^{(xy)}$ ($ c_U^{(zy)}$) denote the sets of edges perpendicular to and on each side of $c^{(xy)}$ ($c^{(zy)}$). Namely, $c_O^{(xy)}$ ($c_O^{(zy)}$) is the set of edges over the surface $c^{(xy)}$ ($c^{(zy)}$), i.e., on the same side as the dashed edges in Fig.~\ref{fig3FMeasuringBoundary}, while $c_O^{(xy)}$ ($c_O^{(zy)}$) is the set of edges under the surface $c^{(xy)}$ ($c^{(zy)}$), i.e., on the opposite side of the dashed edges in Fig.~\ref{fig3FMeasuringBoundary}.

The operators $S_{\text{r}}(c^{(xy)})$ $S_{\text{g}}(c^{(zy)})$ are stabilizers for the Walker--Wang ground state and we refer to them as \textit{correlation surfaces}: they are products of plaquette terms $B_f^{(\fr)}$ and $B_f^{(\fg)}$ in the $c^{(xy)}$ and $c^{(zy)}$ planes, respectively. They can be viewed as world-sheets of the \textbf{3F} boundary state logical operators (they are the analogues of the correlation surfaces in topological cluster state computation of Ref.~\cite{RBH}). In particular, they restrict to logical operators of the \textbf{3F} boundary states on $\partial A$ and $\partial B$ and can be used to infer the correlations between the post-measured boundaries. Namely, we have that the post-measured state is a +1-eigenstate of 
\begin{align}
    \pm l^{\fr}_{\partial c^{(xy)}\cap A} \otimes l^{\fr}_{\partial c^{(xy)}\cap B},\\
    \pm l^{\fg}_{\partial c^{(zy)}\cap A} \otimes l^{\fg}_{\partial c^{(zy)}\cap B}, 
\end{align}
where each factor is a logical operator for the boundary code, as defined in Eqs.~(\ref{eq3FLogicalOperators1}),~(\ref{eq3FLogicalOperators2}) and where the $\pm$ signs are determined by the outcome of the measurements along the correlation surface in $C$. These are the correlations of a maximally entangled pair. Depending on the topology of $\partial A$ and $\partial B$ the boundary state may involve multiple maximally entangled pairs (e.g., 2 pairs if the boundary states are supported on torii). 
We remark that one can construct equivalent, but more natural, correlation surfaces by multiplying with vertex stabilizers $A_v^{(\fr)}$ and $A_v^{(\fg)}$ to obtain the bulk of the correlation surfaces for $S_{\text{r}}(c^{(xy)})$ and $S_{\text{g}}(c^{(zy})$ in terms of a product of $\tau_i^X$ and $\sigma_i^X$ on one side of the surface, respectively.

Importantly, if the region $B$ was prepared in some definite state and measured, the logical state would be teleported to the qubits encoded on the surface at $\partial A$. Conceptually, at any intermediate time during the computation, we may regard the state as being encoded in topological degrees of freedom on a boundary normal to the direction of information flow. This picture holds more generally, when the information may be encoded in twists, and where the propagation of information is again tracked through correlation surfaces that can be regarded as world-sheets of the logical operators. 

\begin{figure}[t]%
	\centering
	\includegraphics[width=0.85\linewidth]{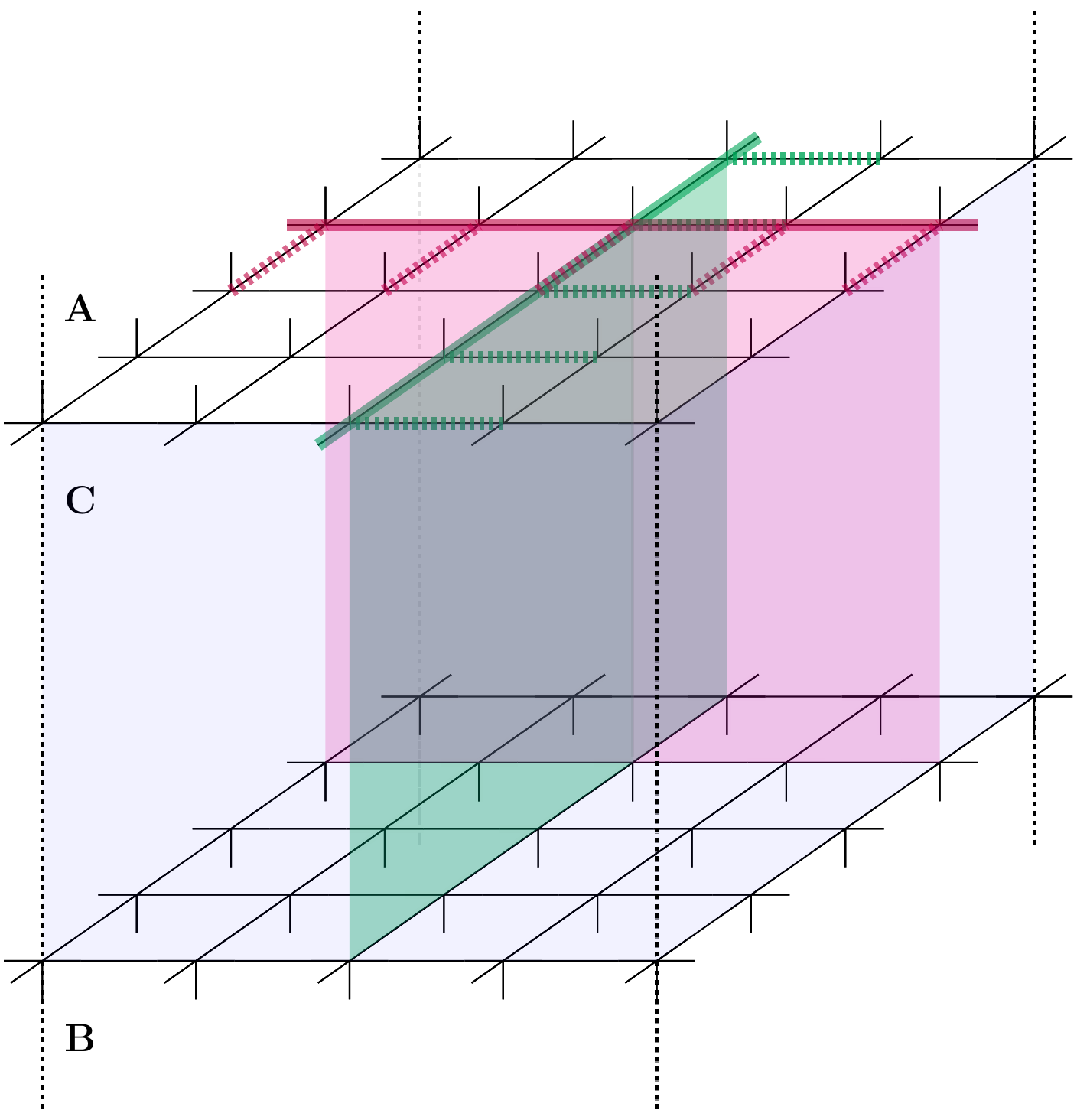} 
	\caption{Preparing \textbf{3F} surface states on the boundaries of $A$ and $B$ by measuring all sites in $C$. The planes $c^{(xy)}$ and $c^{(zy)}$ are depicted in red and green, respectively with the cycles $\partial c^{(xy)}$ and $\partial c^{(zy)}$ on their boundary. The set of links $c_O^{(xy)}$ and $c_O^{(zy)}$ are the set of links perpendicular to the surfaces, on the same side as the dashed lines on $\partial A$. The set of links $c_U^{(xy)}$ and $c_U^{(zy)}$ are on the opposite side.}
	\label{fig3FMeasuringBoundary}
\end{figure}

\subsubsection{Measurement patterns, 1-form symmetries and correlation surfaces.}\label{secCorrelationSUrfaces}

We now consider the general setting for fault-tolerant MBQC with the Walker--Wang resource state. 
The computation is then driven in time by applying single qubit measurements to a resource state describing the Walker--Wang ground state with defects. Such measurements are sequentially applied and the outcomes are processed to determine Pauli corrections, logical measurement outcomes as well as any errors that may have occurred. We label by $\mathcal{M}\subseteq \mathbb{P}_n$ the group generated by the single qubit measurements. For the \textbf{3F} Walker--Wang resource state, we measure in the local fermion basis to project onto a definite fermion wordline occupation state, giving 
\begin{equation}
    \mathcal{M} = \langle \sigma_i^X, \tau_i^X ~|~ i \in \mathcal{L} \rangle.
\end{equation}
We remark for magic state preparation, as per Sec.~\ref{secMagicStatePreparation}, the measurement pattern must in general be modified in a manner that depends on the implementation. Additionally, the measurement pattern may be locally modified in the vicinity of a twist defect, again depending on the implementation. For the twist identified in the previous section, we require a chain of Pauli-$Y$ measurements on the qubits uniquely determined by the 1-form operators of Fig.~\ref{fig3FTwists}. The post-measured state can be regarded as a classical \textit{history state} with definite fermion worldlines.

Individual measurement outcomes are random and in general measurements result in a random fermion worldline occupation on each link of the lattice. However, there are constraints in the absence of errors. Namely, at each bulk vertex the the $\zz_2\times \zz_2$ fermion charge must be conserved. This bulk conservation is measured by the operators $A_v^{(\fr)}, A_v^{(\fg)}$, which belong to both the resource state and measurement group, $A_v^{(\fr)}, A_v^{(\fg)} \in \mathcal{R}\cap \mathcal{M}$. 
The conservation law is modified near defects and domain walls, so too are the corresponding operators from $\mathcal{R}\cap \mathcal{M}$. Therefore, in the absence of errors, due to membership in $\mathcal{R}$, measurement of any operator from $\mathcal{R}\cap \mathcal{M}$ would deterministically return $+1$, signifying the appropriate fermion conservation. Due to membership in $\mathcal{M}$, the outcome of these operators can be inferred during computation as the measurement proceed. 

The vertex operators generate a symmetry group 
\begin{equation}
\mathcal{S} = \langle A_v^{(\fr)}, A_v^{(\fg)} ~|~ v \in \mathcal{L} \rangle,
\end{equation}
where we assume that each vertex operator is suitably modified near symmetry defects and domain walls.
This is known as a $\zz_2 \times \zz_2$ 1-form symmetry group because it consists of operators supported on closed codimension-1 submanifolds of the lattice~\cite{Gaiotto2015}. In terms of the Walker--Wang model for the \textbf{3F} theory, operators in $\mathcal{S}$ measure the fermionic flux through each contractible region of the lattice -- which must be net neutral in the groundstate. 

Even in the absence of errors, the randomness of measurement outcomes can result in fermionic worldlines (in the post-measured state) that nontrivially connect distinct twists. In particular, at each point in the computation, this randomness results in a change in the charge on a twist line and can be mapped to an outcome-dependent logical Pauli operator that has been applied to the logical state. This outcome-dependent Pauli operator is called the logical Pauli frame, and can be deduced by the outcomes of the correlation surfaces (as we have seen in the example of Sec.~\ref{secPreparing3Fstates}).

The correlation surfaces are obtained for each preparation, gate, and measurement. They are stabilizers of the resource state that can be viewed as topologically nontrivial 1-form operators that enclose (and measure) the flux through a region and thus the charge on relevant sets of defects in the history state. We define correlation surfaces for each operation in App.~\ref{appProofOfGates}. Correlation surfaces are not uniquely defined: multiplication by any 1-form operators $s\in \mathcal{S}$ produces another valid correlation surface that is logically equivalent (i.e., will determine the same logical Pauli frame). 
For a given operation, we label the set of all correlation surfaces up to equivalence under $\mathcal{S}$ by $\overline{\mathcal{S}}$. This equivalence allows us to map between different representative logical operators as explained in Sec.~\ref{sec3FEncodings}.

\begin{figure}[t]%
    \centering
    \includegraphics[width=0.6\linewidth]{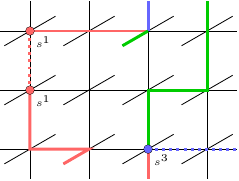}
    \caption{Syndromes observed in Walker--Wang MBQC. Lines are color-coded according to the observed measurement outcomes corresponding to the basis $\ket{\textbf{1}} := \ket{++}$, $\ket{\fr} := \ket{-+}$, $\ket{\fg} := \ket{+-}$, $\ket{\fb} := \ket{--}$. Possible errors producing the observed syndrome are displayed by dashed lines. Nontrivial syndromes $s_v = (a,b) \in \zz_2^2$ on each vertex are observed due to violations of the $\zz_2^2$ charge flux on each vertex and can be inferred from the measurement outcomes of $(A_v^{(\fr)}, A_v^{(\fg))})$. For example, $s^1 = (1,0)$ and $s^3 = (1,1)$ arises from $\fr$ and $\fb$ string errors, as depicted.}
    \label{figLogicalError2}
\end{figure}
 
\subsubsection{Errors, fermion parity, and decoding.}
Errors may occur during resource state preparation, computation, and measurement. For simplicity, we focus on Pauli errors acting on the resource state along with measurement errors. We refer to this hardware agnostic error model allows us to understand the performance of the Walker--Wang MBQC scheme in terms of its fundamental topological properties, ignoring details of how the state is prepared (which depend on the hardware-specific implementation).  

We firstly note that $\sigma_i^Z$, $\tau_i^Z$ and $\sigma_i^Z \tau_i^Z$ errors acting on the resource state result in flipped $\sigma_i^X$ and $\tau_i^X$ measurement outcomes. In the resource state wavefunction, they can be thought of as creating $\fr$, $\fg$, and $\fb$ fermion string segments, respectively. On the other hand, $\sigma_i^X$, $\tau_i^X$ and $\sigma_i^X \tau_i^X$ errors in the bulk are benign; they commute with the measurements and thus do not affect the measurement outcome (as is the case in topological cluster state computation~\cite{Rau06, RBH}). In the Walker--Wang resource state wavefunction $\sigma_i^X$, $\tau_i^X$ errors can be thought of as creating a small contractible $\fr$, $\fg$, or $\fb$ fermion worldine loop, respectively, linking edge $i$~\cite{walker20123+,burnell2013exactly}. Finally, measurement errors (i.e., measurements that report the incorrect outcome) are equivalent to $Z$-type physical errors that occurred on the state before measurement.

In the post-measured state, these errors manifest themselves as modifications to the classical history state. Detectable errors are those that give rise to violations of the $\zz_2\times \zz_2$ fermion conservation rule (that exists away from the twists) and are thus revealed by $-1$ outcomes of the 1-form symmetry operators $s\in \mathcal{S}$. We consider example configurations in Fig.~\ref{figLogicalError2}. Nontrivial errors are those that connect distinct twist worldlines. Such errors result in the incorrect inferred outcome of the correlation surfaces in $\overline{\mathcal{S}}$, and therefore an incorrect inference of the logical Pauli frame -- in other words: a logical Pauli error. Such a process is depicted in Fig.~\ref{figLogicalError}. If errors arise by local processes then they can be reliably identified and accounted for if twist worldlines remain well separated.

It is possible to correct for violations of the $A_v^{(\fr)}$ and $A_v^{(\fg)}$ sectors independently (although depending on the noise model, it may be advantageous to correct them jointly). In particular if we represent the outcome of all vertex operators $A_v^{(\fr)}, A_v^{(\fg)}$ by two binary vectors $v_{\mathcal{S}}^{(\fr)} \in \zz_2^{|V|}$, $v_{\mathcal{S}}^{(\fg)} \in \zz_2^{|V|}$, where $|V|$ is the number of vertices in the lattice. Then one can apply the standard minimum weight perfect matching algorithm that is commonly used for topological error correction \cite{dennis2002topological,Rau06}. 
The algorithm returns a matching of vertices for each sector, $\fr$ and $\fg$, which can be used to deduce a path of measurement outcomes that need to be flipped to restore local fermion parity (i.e., ensure $s\in \mathcal{S}$ has a $+1$ outcome). 

\begin{figure}[t]%
	\centering
	\includegraphics[width=0.85\linewidth]{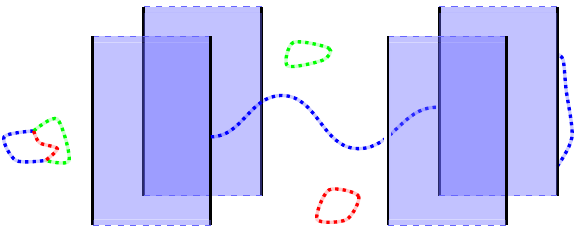} 
	\caption{Undetectable errors in Walker--Wang MBQC depicted by dashed lines. The homologically trivial loops do not result in a logical error. The central error depicted in blue that extends between different twists results in a logical error.}
	\label{figLogicalError}
 \end{figure}

\subsubsection{Threshold performance}

Assuming a phenomenological error model of perfect state preparation, memory and only noisy measurements with rate $p$, the bulk \textbf{3F} Walker--Wang MBQC scheme has a high threshold identical to that of the topological cluster state formulation~\cite{dennis2002topological,RBH} (assuming the same decoder). In particular, under optimal decoding, the scheme has a threshold for noisy measurements of $p=0.033 \pm 0.001$~\cite{RBH,ohno2004phase}. This follows from the fact that the error model and bulk decoding problem is identical to that of topological cluster state computation~\cite{RBH}.  

To obtain more accurate estimates of threshold performance in a realistic setting, one should consider a hardware-motivated error model. For example, for a circuit-level error model preparing the \textbf{3F} Walker--Wang resource state, one may expect a lower threshold than that of the topological cluster-state scheme, owing to the higher-weight stabilizers of the resource state, and each qubit being supported in more stabilizers. However, designing the preparation circuits to limit the spread of errors, and tailoring the decoder based on this circuit may mitigate this, or even lead to threshold improvements. Further, for other platforms such as photonic fusion-based quantum computation, the threshold may even improve. We leave the study of threshold performance under hardware-motivated models to future work.

\subsection{1-form symmetry-protected topological order and Walker--Wang resource states}\label{secSPTorder1form}
 
We remark that while both ground states of the \textbf{3F} and \textbf{TC} Walker--Wang models can be prepared by a quantum cellular automaton, only the \textbf{TC} Walker--Wang model ground state can be prepared from a constant depth circuit~\cite{haah2018nontrivial}. Indeed, the two phases, belong to distinct nontrivial SPT phases under $\zz_2^2$ 1-form symmetries. The topological cluster state model has been demonstrated to maintain its nontrivial SPT order at nonzero temperature~\cite{roberts2017symmetry,roberts2020symmetry}, as has the \textbf{3F} Walker--Wang models\cite{stahl2021symmetry}. By the same arguments as in Refs.~\cite{roberts2017symmetry,roberts2020symmetry}, the \textbf{3F} Walker--Wang model belongs to a nontrivial SPT phase under 1-form symmetries, distinct from the topological cluster state model. 

More generally, the bulk of any Walker--Wang state arising from a modular anyon theory should be SPT ordered under a 1-form symmetry (or appropriate generalisation thereof). One can diagnose the nontrivial SPT order under 1-form symmetries by looking at the anomalous action of the symmetry on the boundary. This anomalous 1-form symmetry boundary action corresponds to the string operators of a modular anyon theory. A gapped phase supporting that anyon theory can be used to realize a gapped boundary condition that fulfils the required anomaly matching condition. This boundary theory can form a thermally stable, self-correcting quantum memory when protected by the 1-form symmetries~\cite{roberts2020symmetry,stahl2021symmetry}.

Thus the Walker--Wang paradigm provides a useful lens to search for (thermally stable) SPT ordered resource states for MBQC. 
However, determining whether these computational schemes are stable to perturbations of the Walker--Wang parent Hamiltonian for the resource state remains an interesting open problem. 
For 1-form symmetry respecting perturbations, at least, we expect the usefulness of the resource state to persist, as the key relation between the 1-form symmetry and (possibly fattened) boundary string operators remains. 
This potentially has important implications for the existence of fault-tolerant, computationally universal phases of matter~\cite{DBcompPhases,Miy10, else2012symmetry,else2012symmetryPRL,NWMBQC,miller2016hierarchy,roberts2017symmetry,bartlett2017robust,wei2017universal,raussendorf2019computationally,roberts2019symmetry,devakul2018universal,Stephen2018computationally,Daniel2019,daniel2020quantum}.

\section{ Lattice defects in a \textbf{3F} topological subsystem code}\label{sec3FSubsystemCode}

\begin{figure}[t]
\center
\includegraphics[scale=0.575]{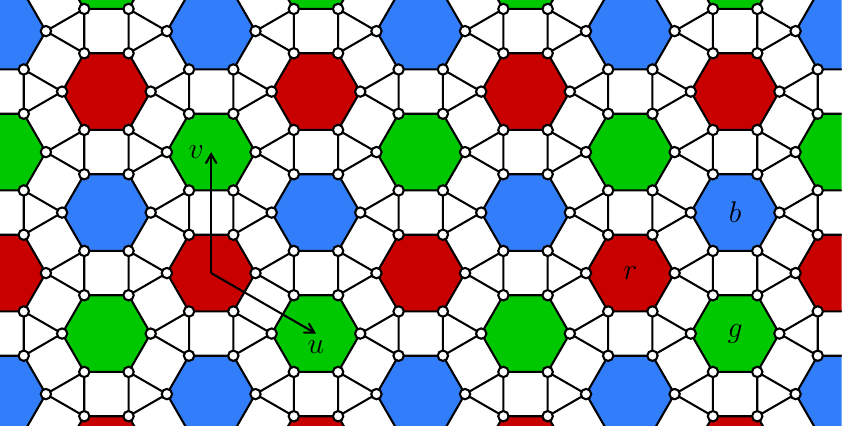}
 \caption{The tricoloring of hexagonal plaquettes used to define the generators of the anomalous $\mathbb{Z}_2 \times \mathbb{Z}_2$ 1-form symmetry. }
 \label{fig:tricolored}
\end{figure}

In Refs.~\cite{bombin2009interacting,bombin2010topologicalsubsystem} a 2D topological subsystem code~\cite{Poulin2005,Bacon2005a} was introduced that supports a stabilizer group corresponding to a lattice realization of the string operators for the \textbf{3F} anyon theory. 
As the gauge generators do not commute, they can be used to define a translation invariant Hamiltonian with tunable parameters that supports distinct phases, and phase transitions between them. 
The model is defined on an inflated honeycomb lattice, where every vertex is blown-up into a triangle, with links labelled by $x,y,z,$ in a translation invariant fashion according to Figs.~\ref{fig:tricolored} \&~\ref{fig:InflatedHexagon}. 
This is reminiscent of Kitaev's honeycomb model~\cite{kitaev2006anyons}, which can also be thought of as a 2D topological subsystem code (that encodes no qubits) with a stabilizer group corresponding to the string operators of an emergent  $\mathbb{Z}_2$ fermion. In this section we review this construction, and show how to implement symmetry defects in the model.

\begin{figure}[t]
\center
{\includegraphics[scale=.85]{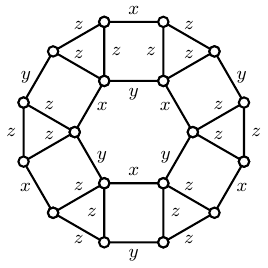}}
    \hspace{.5cm}    
{\includegraphics[scale=.85]{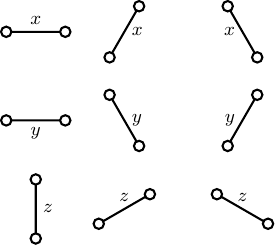}}
 \caption{ (Left) An inflated hexagon.
 (Right) There are three different types of $x$, $y$, and $z$ links in the lattice, respectively. }
 \label{fig:InflatedHexagon}
 \label{fig:Edges}
\end{figure}

The 2D topological subsystem code of Refs.~\cite{bombin2009interacting,bombin2010topologicalsubsystem} is defined on the lattice of Fig.~\ref{fig:tricolored}, with one qubit per vertex. There is one gauge generator per edge, given by 
\begin{align}
K_{\braket{ij}} = \begin{cases}
X_i X_j & \text{if $\langle i j\rangle$ is an $x$-link,} 
\\
Y_i Y_j & \text{if $\langle i j\rangle$ is a $y$-link,} 
\\
Z_i Z_j & \text{if $\langle i j\rangle$ is a $z$-link,} 
\end{cases}
\end{align}
see Fig.~\ref{fig:InflatedHexagon}. 
The Hamiltonian can be written in terms of the gauge generators 
\begin{align}
H = - J_x \sum_{x\text{-links}} K_{\braket{ij}}
- J_y \sum_{y\text{-links}}K_{\braket{ij}}
-J_z \sum_{z\text{-links}}K_{\braket{ij}}
\, ,
\end{align}
where $J_x,J_y,J_z,$ are tunable coupling strengths. 

The group of stabilizer operators that commute with all the gauge generators, and are themselves products of gauge generators, are generated by a $\mathbb{Z}_2 \times \mathbb{Z}_2$ algebra on each inflated plaquette. 
The plaquette algebra is generated by $W_p^X$, $W_p^Z$, and $W_p^Y$ on each plaquette, which satisfy 
\begin{align}
(W_p^{X})^2=(W_p^{Z})^2=\openone \, , 
&&
W_p^X W_p^Z = W_p^Z W_p^X = W_p^Y \, , 
\end{align}
see Fig.~\ref{fig:Generators}.

\subsection{String operators} 

The above plaquette operators are in fact loops of a $\mathbb{Z}_2 \times \mathbb{Z}_2$ algebra of string operators on the boundary of the plaquette. 
To define larger loops of the string operators we make use of a tricoloring of the hexagon plaquettes shown in Fig.~\ref{fig:tricolored}. 
On the boundary of a region $\mathcal{R}$, given by a union of inflated plaquettes on the inflated honeycomb lattice, we have the following $\mathbb{Z}_2 \times \mathbb{Z}_2$ string operators 
\begin{align}
\label{eq:LoopOps}
W_{\partial\mathcal{R}}^r &= 
\prod_{p_r \in \mathcal{R}} W_{p_r}^{Z}
\prod_{p_g \in \mathcal{R}} W_{p_g}^{X}
\prod_{p_b \in \mathcal{R}} W_{p_b}^{Y} \, ,
\\
W_{\partial\mathcal{R}}^g &= 
\prod_{p_r \in \mathcal{R}} W_{p_r}^{Y}
\prod_{p_g \in \mathcal{R}} W_{p_g}^{Z}
\prod_{p_b \in \mathcal{R}} W_{p_b}^{X} \, ,
\\
W_{\partial\mathcal{R}}^b &= 
\prod_{p_r \in \mathcal{R}} W_{p_r}^{X}
\prod_{p_g \in \mathcal{R}} W_{p_g}^{Y}
\prod_{p_b \in \mathcal{R}} W_{p_b}^{Z} \, ,
\end{align}
where $p_r,p_g,$ and $p_b$ stand for red, green, and blue plaquettes, respectively. 
The string operators satisfy the same algebra as the plaquette operators
\begin{align}
(W_{\partial\mathcal{R}}^{r})^2=(W_{\partial\mathcal{R}}^{b})^2=\openone \, , 
&&
W_{\partial\mathcal{R}}^r W_{\partial\mathcal{R}}^b = W_{\partial\mathcal{R}}^b W_{\partial\mathcal{R}}^r = W_{\partial\mathcal{R}}^g \, .
\end{align}

\begin{figure}[t]
\center
{\includegraphics[scale=.85]{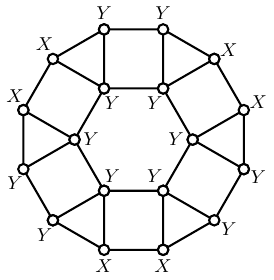}}
    \hspace{.5cm}    
{\includegraphics[scale=.85,trim= 0 -.395cm 0 0]{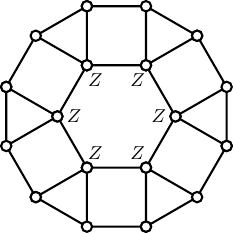}}
 \caption{ (Left) The $W_p^X$ generator on the inflated hexagon.
 (Right) The $W_p^Z$ generator on the inflated hexagon. The $W_p^Y$ generator is given by their product. }
 \label{fig:Generators}
\end{figure}

The loop operators on the boundary of a region $\mathcal{R}$ in Eq.~\eqref{eq:LoopOps} suffice to define red string operators $W^r_{\ell_r}$ on arbitrary open paths $\ell_r$ on inflated edges between red plaquettes, and similarly for green and blue string operators and plaquettes. 
The string operators are given by a product of the elementary string segment operators shown in Fig.~\ref{fig:StringSegments} along the string. 
With the string segment operators shown, the excitations of the $W^r_{\ell_r}$ operator can be thought of as residing on the red plaquettes of the lattice, and similarly for the green and blue plaquettes. 
We denote the superselection sector of the excitation created at one end of an open $W^r_{\ell_r}$ operator by $\fr$, and similarly $\fg$, $\fb$ for green and blue string operators. 
The fusion and braiding processes for these sectors, as defined by the string operators, are described by the \textbf{3F} theory introduced in Sec.~\ref{sec3FPreliminaries}. 

The  set of string operators $W_{\ell_r},W_{\ell_g},W_{\ell_b},$  commute with the Hamiltonian throughout the whole phase diagram 
\begin{align}
[H,W_{\ell_r}^{r}]=[H,W_{\ell_g}^{g}]=[H,W_{\ell_b}^{b}]=0 \, ,
\end{align}
for closed loops $\ell_r,\ell_g,\ell_b$. 
This structure is formalized as an anomalous $\mathbb{Z}_2 \times \mathbb{Z}_2$ 1-form symmetry, with the anomaly capturing the nontrivial $S$ and $T$ matrices of the \textbf{3F} theory associated to the string operators. 
We remark that the Hamiltonian can support phases with larger anyon theories that include the \textbf{3F} theory as a subtheory (due to the factorization of modular tensor categories~\cite{Mueger2002} the total anyon theory is equivalent to a stack of the \textbf{3F} theory with an additional anyon theory). 
In particular, in the $J_z \gg J_x,J_y >0$ limit the Hamiltonian enters the phase of the color code stabilizer model~\cite{bombin2009interacting}. The anyon theory of this model is equivalent to two copies of the \textbf{3F} theory~\cite{bombin2012universal} (or equivalently two copies of the toric code anyon theory~\cite{bombin2012universal,kubica2015unfolding}).

\begin{figure}[t!]
\center
\includegraphics[scale=.8]{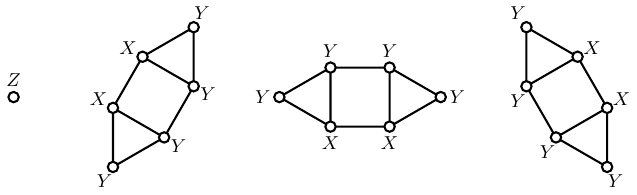}
 \caption{Segments of the string operators that form the anomalous $\mathbb{Z}_2 \times \mathbb{Z}_2$ 1-form symmetry. }
 \label{fig:StringSegments}
\end{figure}

\subsection{Symmetry defects}

The symmetry group of the Hamiltonian is generated by translations $T(u)$ and $T(v)$ along the lattice vectors $u$ and $v$ shown in Fig.~\ref{fig:tricolored}, plaquette centered $\frac{\pi}{3}$-rotations combined with the Clifford operator that implements $X_v \leftrightarrow Y_v$ on all vertices $v$ denoted 
$R_p$, and inflated vertex centered $\frac{2 \pi}{3}$-rotations denoted $R_v$.

\begin{figure}[b]
\center   
\includegraphics[scale=.575]{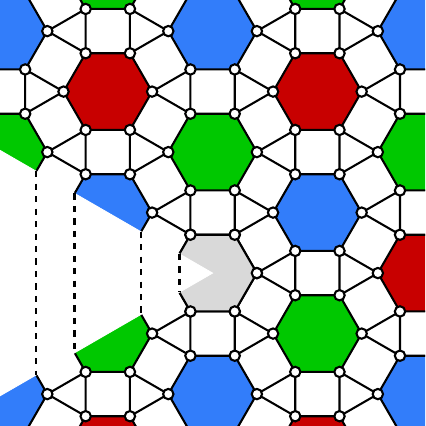}
 \caption{ A $\frac{ \pi}{3}$ lattice disclination on a plaquette that hosts twist defects of a $\mathbb{Z}_2$ symmetry generator.   }
 \label{fig:Z2Defect}
\end{figure}

The \textbf{3F} superselection sectors in the model exhibit \textit{weak symmetry breaking}~\cite{kitaev2006anyons}, or symmetry-enrichment~\cite{barkeshli2019symmetry} under the lattice symmetries giving rise to an $S_3$ action. 

The $\frac{\pi}{3}$ rotation and Clifford operator $R_p$ centered on a red plaquette implements the $(\text{gb})$ symmetry action on the superselection sectors. 
A domain wall attached to a disclination defect with a $\frac{\pi}{3}$ angular deficit can be introduced by cutting a wedge out of the lattice and regluing the dangling edges as shown in Fig.~\ref{fig:Z2Defect}. 
This leads to mixed edges across the cut formed by rejoining broken $x$ and $y$ edges, the Hamiltonian terms on these edges are of the form $X_v Y_v'$ where $v$ is the vertex adjacent to the $x$ portion of the rejoined edge and $v'$ is the vertex adjacent to the $y$ portion of the rejoined edge. 
Assuming the lattice model lies in a gapped phase described by the \textbf{3F} theory, such a lattice symmetry defect supports a non-abelian twist defect $\mathcal{T}_{(\text{gb})}^\pm$, where the $\pm$ is determined by the eigenvalue of the string operator $W_{\ell_r}$ encircling the defect. 
This twist defect is similar to a Majorana zero mode as it has quantum dimension $\sqrt{2}$ and fusion rules given in Sec.~\ref{sec3FPreliminaries}. 
A similar result holds with $\frac{\pi}{3}$ disclination defects centered on green and blue plaquettes hosting $\mathcal{T}_{(\text{rb})}^\pm$ and $\mathcal{T}_{(\text{rg})}^\pm$ twist defects, respectively. 

The $\frac{2 \pi}{3}$ rotation operator $R_v$ centered on an inflated vertex, and also the translations $T(u)$ and $T(v)$, implement the $(\text{rgb})$ symmetry action on the superselection sectors. 
Similar to above a disclination defect with a $\frac{2 \pi}{3}$ angular deficit can be introduced by cutting a wedge out of the lattice and rejoining the dangling edges following Fig.~\ref{fig:Z3Defect1}. 
We can also introduce a dislocation defect as shown in Fig.~\ref{fig:Z3Defect2}. 
Again assuming the lattice model lies in the \textbf{3F} phase, such lattice symmetry defects support non-abelian topological defects with quantum dimension $2$ introduced as $\mathcal{T}_{(\text{rgb})}$ in Sec.~\ref{sec3FDiscussion}.

\begin{figure}[t]
\center  
	{\includegraphics[scale=.575]{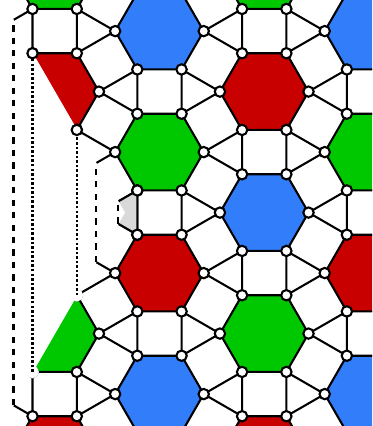}}
   \hspace{.5cm}    
	{\includegraphics[scale=.575]{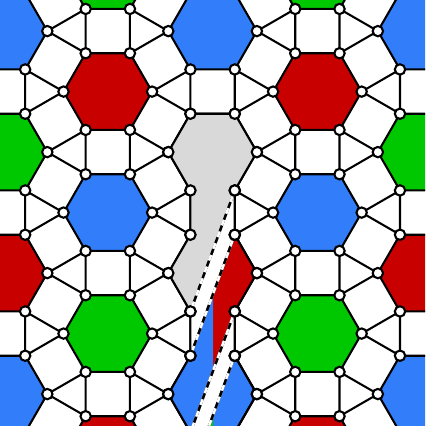}}
 \caption{ (Left) A $\frac{2 \pi}{3}$  lattice disclination on an inflated vertex that hosts twist defects of the $\mathbb{Z}_3$ symmetry generator. 
 (Right) A lattice dislocation on a plaquette that can also host twist defects of the $\mathbb{Z}_3$ symmetry generator.  }
 \label{fig:Z3Defect1}
 \label{fig:Z3Defect2}
\end{figure}

These lattice implementations of the twist defects can in principle be used to realize the defect topological quantum computation schemes introduced in Sec.~\ref{sec3FTQCScheme}. 
This is closely related to the lattice defect based code deformation scheme in Ref.~\cite{Bombin_2011}, where similar defects in the 2D gauge color code were shown to generate all Clifford gates under braiding.
To perform error-correction, we must define the order in which gauge generators are measured to extract a stabilizer syndrome. At each time-step a subset of gauge generators are measured where each of the gauge operators must have disjoint support, for example following Ref.~\cite{bombin2010topologicalsubsystem}. 
In the presence of twists, one must take extra care in defining a globally consistent schedule. A simple (possibly inefficient) approach can be obtained by partitioning the schedule according to the gauge generators along the defect and the bulk separately.

We remark that symmetry defects in the stabilizer color code have been explored previously in Refs.~\cite{yoshida2015topological,teo2014unconventional,kesselring2018boundaries}. 
This presents an alternative route to implement the defect computation scheme of Sec.~\ref{sec3FTQCScheme}. 
This is particularly relevant as the 2D stabilizer color code~\cite{bombin2006topological} is obtained in the limit $J_z \gg J_x,J_y >0$.

\section{Discussion}\label{sec3FDiscussion}

We have presented a general approach to topological quantum computation based on Walker--Wang resource states and their symmetries. As a specific example, we have introduced a universal fault-tolerant quantum computational scheme based on symmetry defects of the \textbf{3F} anyon theory, and how it can be implemented with measurement-based quantum computation on Walker--Wang ground states. Under a phenomenological toy noise model consisting of bit/phase flip errors and measurement errors, the threshold of the \textbf{3F} Walker--Wang computation scheme is equal to that of the well known toric code (or equivalently the topological cluster state scheme) under the same noise model (see e.g.,  Refs.~\cite{dennis2002topological,TopoClusterComp} for threshold estimates). Further investigation under more realistic noise models remains an open problem. 

Our computation scheme based on the defects of the \textbf{3F} anyon theory provides a nontrivial example of the power of the Walker--Wang approach, as the \textbf{3F} anyon theory is chiral and cannot be realized as the emergent anyon theory of a 2D commuting projector model (although it can be embedded into one as a subtheory). 
We hope that this example provides an intriguing step into topological quantum computation using more general anyon schemes and a launch point for the study of further non-stabilizer models. 
In particular, our framework generalizes directly to any abelian anyon theory with symmetry defects, leading to a wide class of potential resource states for fault-tolerant MBQC. 

While we have not tried to optimise the overhead of our gate schemes, the richer defect theory (in comparison with toric code) may lead to more efficient implementations of, for example, magic state distillation. 
In addition to this, leveraging the full $G$-crossed theory of the \textbf{3F} anyon theory (studied in \cite{teo2015theory}) could lead to further improvements and more efficient logic gates, arising from the possibility of additional fusions and braiding processes. 
Determining the set of transversal (or locality preserving) logic gates admitted by the boundary states of the \textbf{3F} Walker--Wang model remains an open problem. 
We remark that an extension of the Walker--Wang model has recently been defined which is capable of realizing an arbitrary symmetry-enriched topological order on the boundary under a global on-site symmetry action~\cite{bulmash2020absolute}. 
The full symmetry structure of an abelian Walker--Wang model with global symmetry should be captured by a 2-group~\cite{Benini2018}, we leave the investigation of MBQC with 2-group SPTs to future work.

Further interesting open directions include the construction of MBQC schemes using Walker--Wang resource states based on more exotic, non-abelian anyon theories, including those that are braiding universal, i.e., not requiring non-topological magic state preparation and distillation. 
Moving away from stabilizer resource states, it may be difficult to keep track of, and control, the randomness induced by the local measurements. One way to address this concern would be to consider adiabatic approaches to MBQC~\cite{bacon2013adiabatic,williamson2015symmetry} to circumvent some these difficulties.

Another interesting direction is to investigate MBQC schemes based on Walker--Wang resource states that are both perturbatively and thermally stable. The \textbf{3F} Walker--Wang model can be shown to belong to a nontrivial SPT phase under $\zz_2^2$ 1-form symmetries using the same arguments as Refs.~\cite{roberts2017symmetry,roberts2020symmetry}. More generally, the bulk of any Walker--Wang state corresponding to a modular anyon theory input should be a nontrivial SPT order under some 1-form symmetry (or an appropriate generalization thereof). 
The nontrivial nature of these 1-form SPT phases is manifest through their anomalous symmetry action on the boundary. This anomalous boundary action of the 1-form symmetry corresponds to the string operators of a modular anyon theory. A gapped phase supporting that anyon theory can be used to  realize a gapped boundary condition that fulfils the required anomaly matching condition. 
The topologically ordered boundaries of these states should remain thermally stable under the 1-form symmetries. Demonstrating the stability (or otherwise) of these schemes away from fixed point models is an open problem: the computation scheme is based on symmetry principles alone, and (potentially fattened) string operators and defects that exist throughout the topological phase should suffice to perform topological quantum computation. 

Finally, to complement the MBQC scheme, we suggested an alternative approach to TQC based on symmetry defects of the \textbf{3F} anyon theory realized in the 2D topological subsystem code due to Bomb\'{i}n supplemented with code deformations to implement braiding, following Ref.~\cite{Bombin_2011}. The 2-body nature of the gauge generators for the 2D subsystem code may be attractive for architectures with strong locality constraints or long two qubit gate times. Investigation into the error-correcting performance of the 2D topological subsystem code remains an important open problem in this direction.

\acknowledgements
We acknowledge inspiring discussions with Guanyu Zhu at the early stages of this project. 
We also acknowledge useful discussions with Jacob Bridgeman. 
DW acknowledges support from the Simons foundation.


%

\appendix

\onecolumngrid

\section{Verifying the defect computation scheme}\label{appProofOfGates}
In this section we prove Lemmas~\ref{lemSingleQubit}, \ref{lemEntangling}. Our proofs are applicable when the defect computation scheme is implemented by code deformation~\cite{bombin2009quantum}, or by MBQC as in Sec.~\ref{sec3FMBQC}. We prove that the braids induce the correct logical action by inferring their mapping on the Pauli logical operators. In the case of Clifford operators, we can uniquely identify a Clifford operation $C$ by its action on Pauli operators under conjugation,
\begin{equation}\label{eqCliffordConjugation}
    C: P \rightarrow CPC^{\dagger}, \quad P \in \mathbb{P}_n,
\end{equation}
where $\mathbb{P}_n$ is the Pauli group on $n$ qubits. Furthermore, it is sufficient to determine its action on Pauli $X$ and $Z$ logical operators. To prove that the twist braid or fusion induces the correct logical action, we must prove that logical operators for the input are mapped according to Eq.~(\ref{eqCliffordConjugation}) at the output of the channel. This is achieved by finding correlation surfaces that can be thought of as world-sheets of logical operators, reducing to the correct logical operators on the boundaries induced by measurement. Before continuing, we discuss how to understand the proofs in terms of code deformation and MBQC.

\subsubsection{Code deformation.} In the case of code-deformation, we take a codimension-1 foliation of the 3D braiding diagrams of Sec.~\ref{sec3FTQCScheme} to define a sequence of codes, one for each time-step. In other words, each equal-time slice $k$ of the defect braiding diagrams in Sec.~\ref{sec3FTQCScheme} represents a (subsystem stabilizer) code $\mathcal{G}_k$ with a given configuration of twist defects and domain walls. To implement the braid, one sequentially measures the operators from $\mathcal{G}_1, \mathcal{G}_2, \ldots, \mathcal{G}_m$, such that both $\mathcal{G}_1=\mathcal{G}_m$ are as per Fig.~\ref{figBaseEncodings}. Doing so, one builds up a history of measurement results that are used to decode. Up to the Pauli corrections that depend on decoder output, we can track the presence of a set of representative logical operators $\{\overline{X}^{(k)}_i, \overline{Z}^{(k)}_i\}_i$ for each time-slice $k$. For each gate, we illustrate how $\{\overline{X}^{(1)}_i, \overline{Z}^{(1)}_i\}_i$ propagate through the code deformation onto $\{C\overline{X}^{(m)}_iC^{\dagger}, C\overline{Z}^{(m)}_iC^{\dagger}\}_i$. 
We refer to the collection of logical operators that result as $\{\overline{X}^{(1)}_i, \overline{Z}^{(1)}_i\}_i$ are propagated through the code deformation as a correlation surface, to connect with the setting of MBQC. 

\subsubsection{Measurement-based quantum computation.} The case of MBQC is similar to code deformation. The tracking of how logical operators propagate through code deformation is replaced by the notion of correlation surfaces in MBQC, as defined in Sec.~\ref{sec3FMBQC}. In particular, the correlation surfaces must agree with the measurement pattern (i.e., their restriction to a single site must give the measurement observable), and the measurement outcomes along their support determine (along with the decoder output) the required Pauli correction. Up to this Pauli correction, they concretely represent how logical operators are propagated through space-time as measurements are implemented. Thus in MBQC, finding the correlation surfaces not only gives a proof that the gate has the correct action, but is required for its deterministic operation (i.e., to determine the logical Pauli frame).

The correlation surfaces are obtained for each preparation, gate, and measurement, and schematically depicted in Figs.~\ref{figHadamardCorrelations}-~\ref{figXZPrepCorrelations}. They are stabilizers of the resource state that can be viewed as topologically nontrivial 1-form operators that enclose (and measure) the charge on relevant sets of defects in the history state. To obtain the correlation surfaces for each operation, we discretize each surface $S$ of each gate in Figs.~\ref{figHadamardCorrelations},~\ref{figCZCorrelations} of App.~\ref{appProofOfGates} by defining a suitable 1-cocycle $c$ (i.e., a set of edges dual to a 2-cycle in the dual lattice). 
Such a 1-cocycle generally has some (co)boundary on the input or output layers (these compose with the correlation surfaces of subsequent gates).
Each edge of the correlation surface, $i \in c$, also carries a fermion label $l(i)$ from $\{\fr, \fg, \fb \}$ according to the logical operator being propagated, which can transform across a domain wall.  Then we define a correlation surface $S$ by
\begin{align}
    S = \prod_{i \in c} m_i, \quad \text{where} \quad  m_i = \begin{cases}
    \tau_i^X ~&\text{if}~ l(i) = \fr,\\
    \sigma_i^X ~&\text{if}~ l(i) = \fg, \\
    \sigma_i^X \tau_i^X ~&\text{if}~ l(i) = \fb. \\
    \end{cases}
\end{align}
For example, in the figures that follow, we schematically depict the correlation surface labels $l(i)= \fr$, $l(i)= \fg$ and $l(i)= \fb$ by red, green and blue, respectively. For magic state preparation the operators $m_i$ must be modified near the preparation site, much like the measurement group $\mathcal{M}$.

Importantly, correlation surfaces are not unique. Recall, for a given operation, we label the set of all correlation surfaces up to equivalence under $\mathcal{S}$ by $\overline{\mathcal{S}}$. In particular, the correlation surfaces in Figs.~\ref{figHadamardCorrelations}-\ref{figXZPrepCorrelations}, can be viewed as maps between pairs of representative logical operators. In general, one can construct multiple equivalent correlation surfaces between any pair of equivalent logical operator representatives, such as those shown in Fig.~\ref{fig3F2DEncodingsIsotope}. In Figs.~\ref{figHadamardCorrelations}-\ref{figXZPrepCorrelations}, we chose convenient representatives for illustration purposes. In order to compose correlation surfaces between consecutive channels, we choose a fixed set of logical representatives for the input and output of each channel.

\subsubsection{Proof of Lemma~\ref{lemSingleQubit}: the single qubit gates.}
We now prove that the correct action is induced under the $H$ and $S$ gates. Firstly, we recall the action of these gates under conjugation
\begin{align}
    H:& ~ X \mapsto Z, \\
    H:& ~ Z \mapsto X, \\
    S:& ~ X \mapsto Y, \\
    S:& ~Z \mapsto Z.
\end{align}
To prove that the braids have the correct logical action, we must find correlation surfaces that induce the correct action on logical operators. We diagrammatically present correlation surfaces that correctly propagate the respective logical operators are propagated for each gate in Fig.~\ref{figHadamardCorrelations}.

\begin{figure}[h]%
	\centering
	\includegraphics[width=0.15\linewidth]{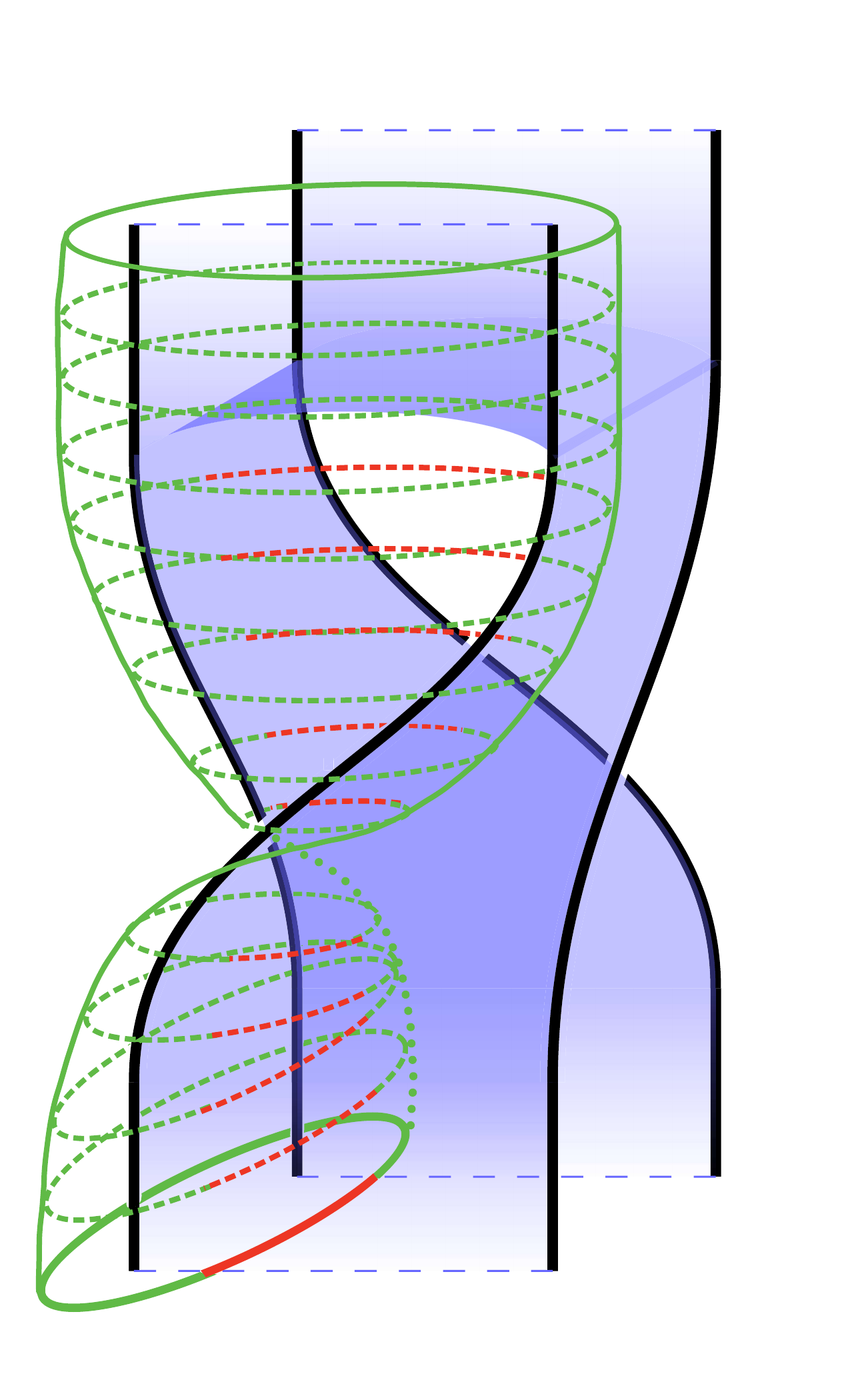} \qquad 
	\includegraphics[width=0.15\linewidth]{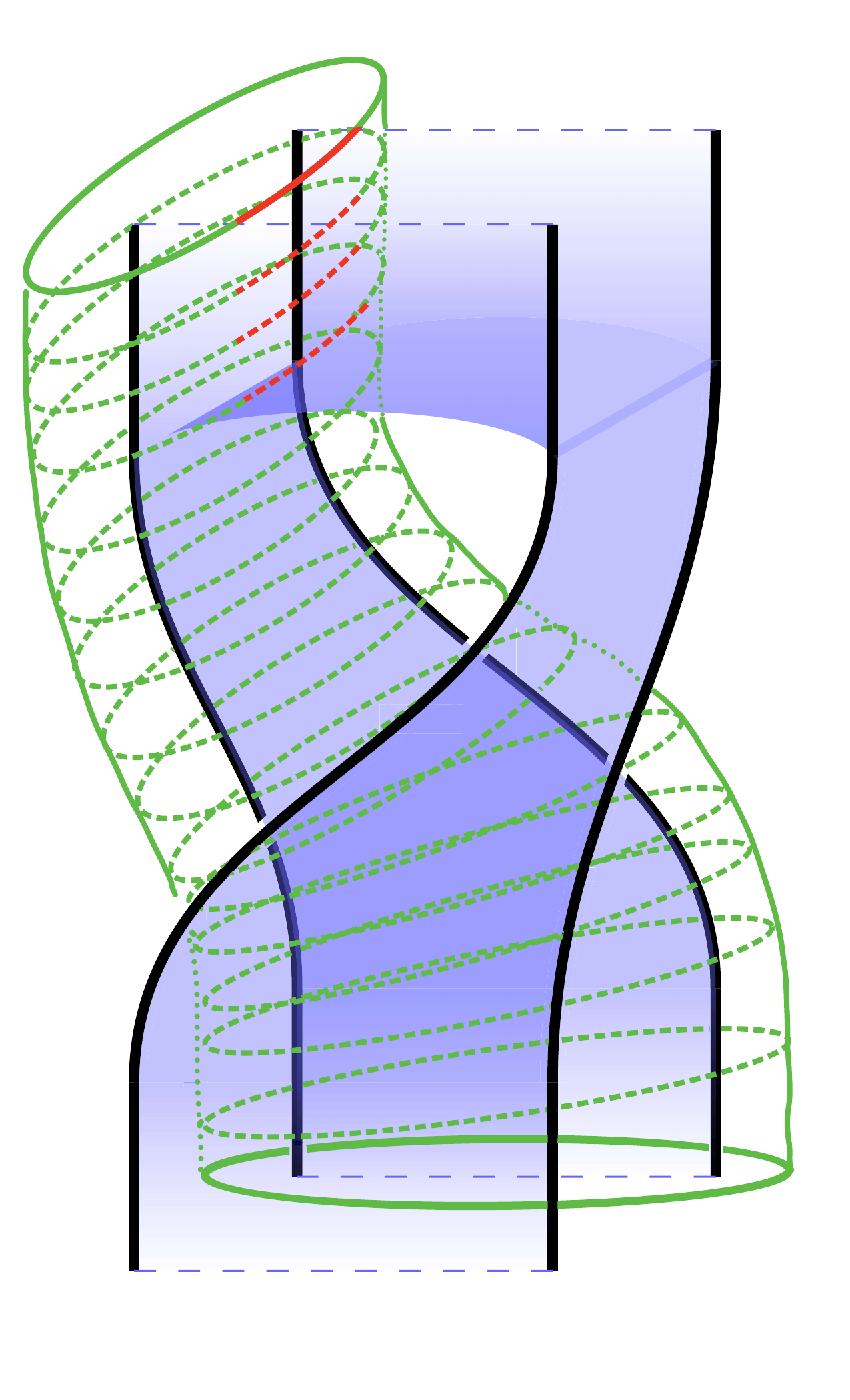} \qquad \qquad \qquad
	\includegraphics[width=0.15\linewidth]{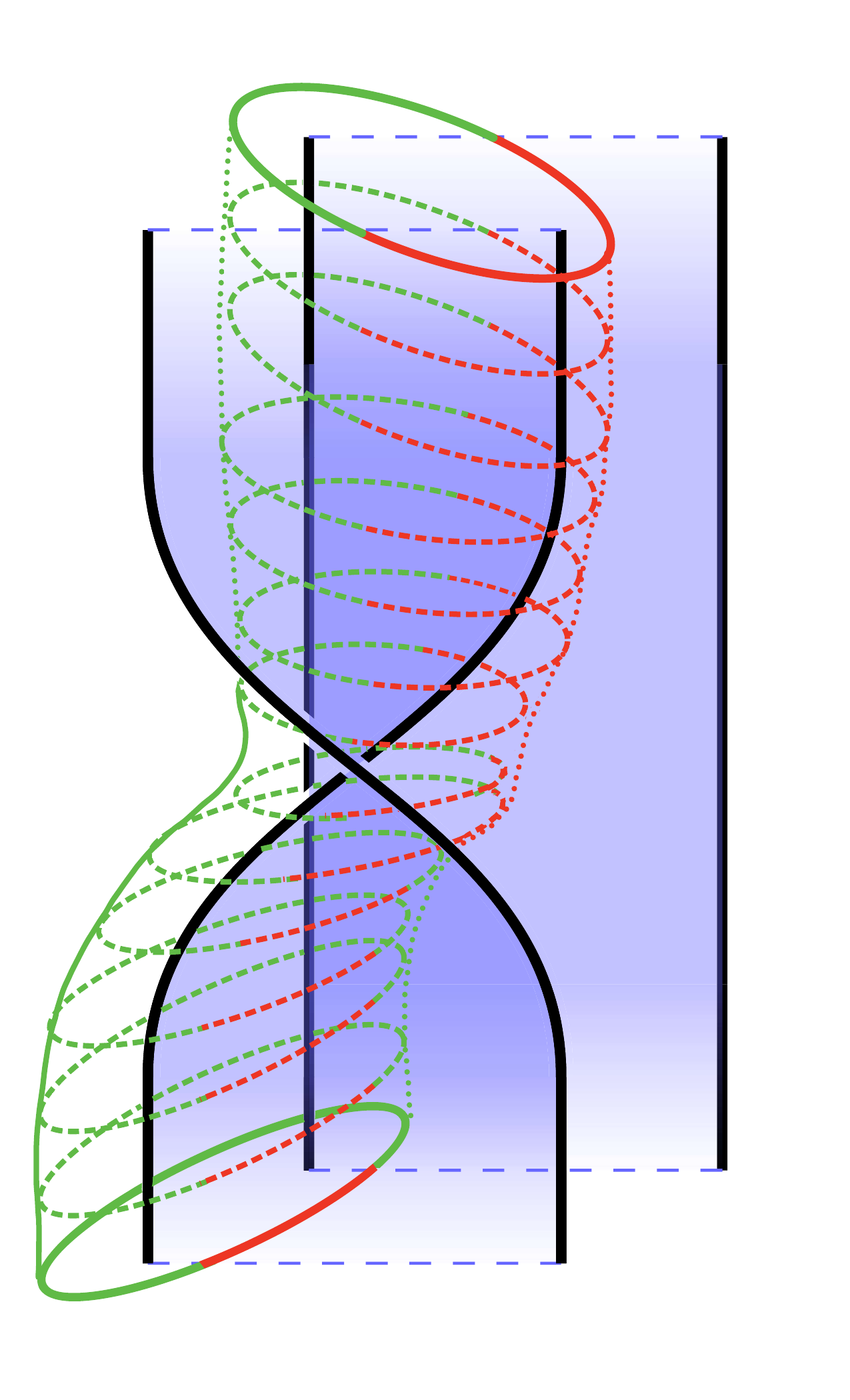} \qquad 
	\includegraphics[width=0.15\linewidth]{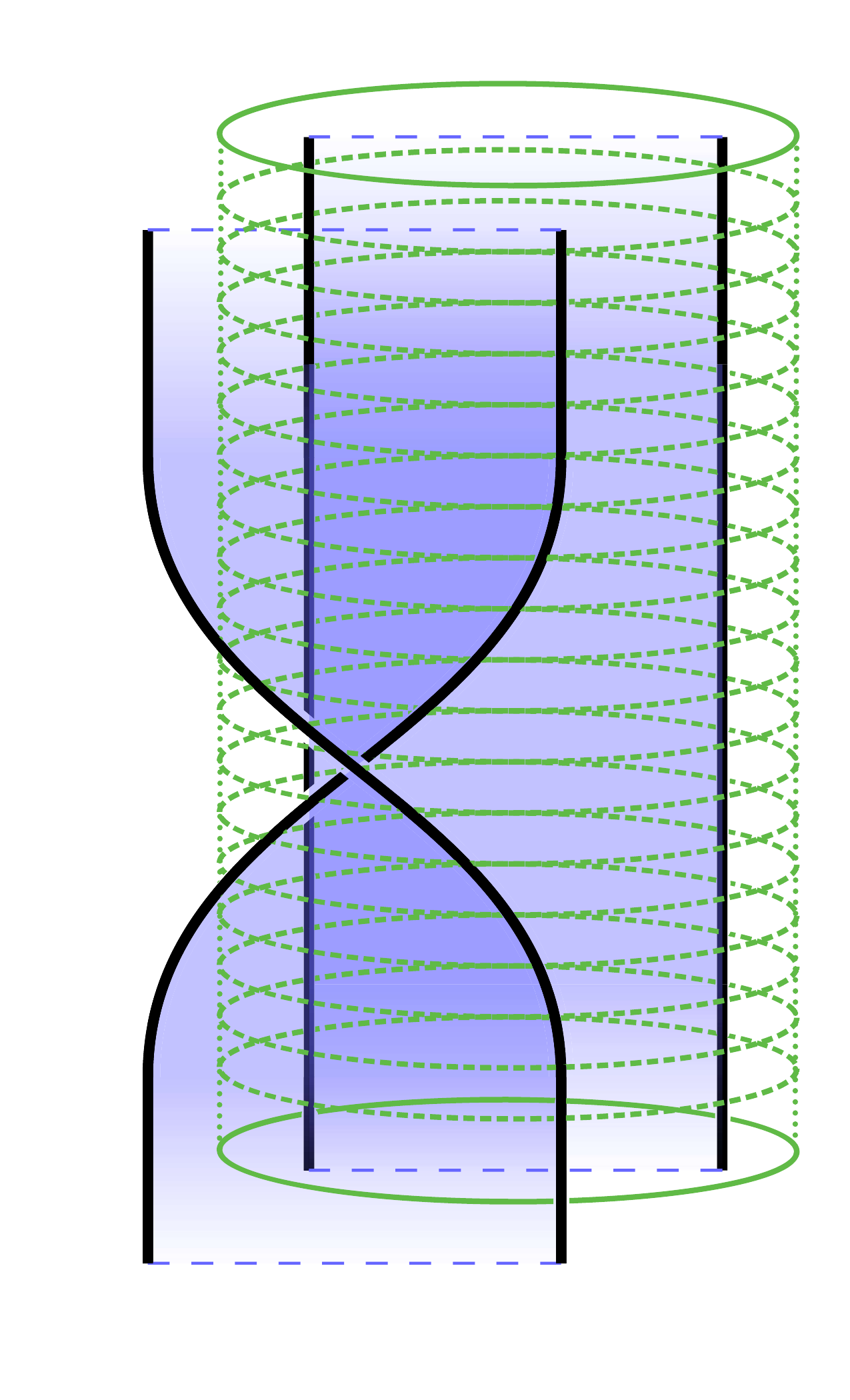} 
	\caption{Correlation surfaces for the Hadamard and Phase gate. Time moves upwards. (left) Correlation surfaces for $\overline{X} \rightarrow \overline{Z}$ and $\overline{Z} \rightarrow \overline{X}$ in the Hadamard gate. (right) Correlation surfaces for $\overline{X} \rightarrow \overline{Y}$ and $\overline{Z} \rightarrow \overline{Z}$ in the Phase gate.}
	\label{figHadamardCorrelations}
\end{figure}

\subsubsection{Proof of Lemma~\ref{lemEntangling}: the entangling gates.}
We first show that in the simplest case, when we use $g$, $h$-encodings with $g,h\in S_3$ distinct 2-cycles, we can achieve the Controlled-$Z$ gate $CZ_{1,2}$. Recall
\begin{align}
    CZ_{1,2}:&~ X_1 \mapsto X_1Z_2, \\
    CZ_{1,2}:&~ Z_1 \mapsto Z_1, \\
    CZ_{1,2}:&~ X_2 \mapsto X_2Z_1, \\
    CZ_{1,2}:&~ Z_2 \mapsto Z_2.
\end{align}

Similarly to the single qubit gates, we find correlation surfaces that propagate the logical operators according to the above equations. We diagrammatically represent these correlation surfaces in Fig.~\ref{figCZCorrelations}.

\begin{figure}[h]%
	\centering
	\includegraphics[width=0.22\linewidth]{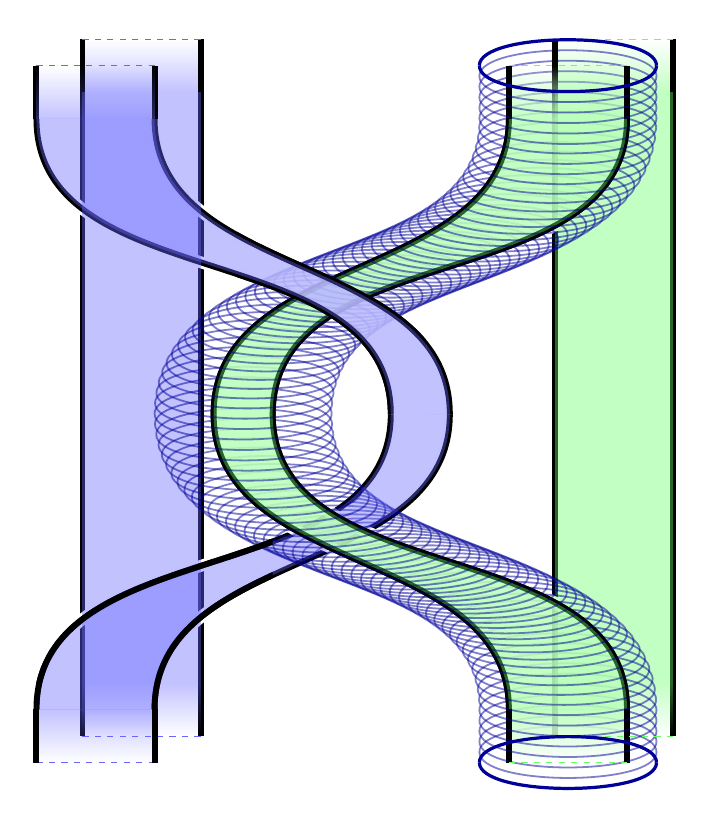} \qquad
	\includegraphics[width=0.22\linewidth]{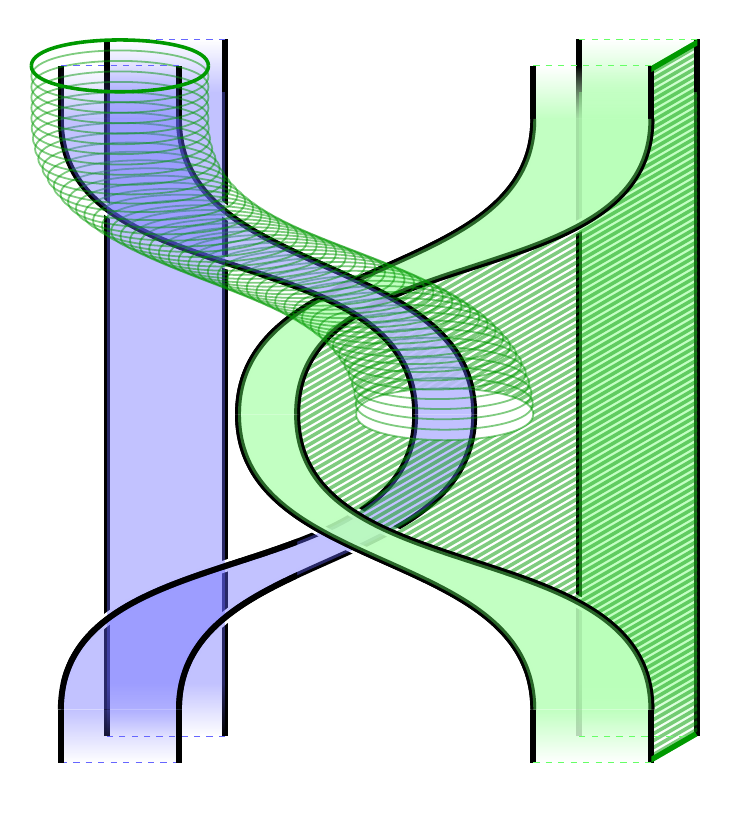} 	
	\caption{Correlation surfaces illustrating the action of the $CZ$ gate. Time moves upwards. We see $\overline{Z}_2\mapsto \overline{Z}_2$ (left) and $\overline{X}_2\mapsto \overline{X}_2\overline{Z}_1$ (right). One can similarly confirm $\overline{Z}_1\mapsto \overline{Z}_1$ and $\overline{X}_1\mapsto \overline{X}_1\overline{Z}_2$ using directly analogous correlation surfaces. This action on the 4 generating Pauli operators uniquely determines the $CZ$ gate.}
	\label{figCZCorrelations}
\end{figure}

We remark that the figures are unchanged when at least one of $g,h$ is a 3-cycle. Finally, we remark that if $g=h$ are both 2-cycles, then such a braid results in the identity. To see this, note that the fermion loop defining the logical operator on the output of the first $g$-encoding in Fig.~\ref{figCZCorrelations}~(right) can condense on the $\mathcal{T}_g$ twists and is invariant under the domain wall, and thus is precisely the identity logical operator (in other words, it can be isotoped to the vacuum configuration). Thus the fermionic loop operators defining logical $\overline{X}$ are transparent to the twist defects and domain walls and thus are unchanged under braiding.

\subsubsection{Proof of Prop.~\ref{prop3FCliffordUniversality}: Clifford universality.}
To complete the proof of Prop.~\ref{prop3FCliffordUniversality}, we need only show that the preparations and measurements work as required. To see this we show that eigenstates for either logical operators $\overline{X}$ or $\overline{Z}$ can be prepared exactly. The eigenvalue depends on the random measurement outcomes. We diagrammatically show this in Fig.~\ref{figXZPrepCorrelations} for the case of state preparations. To obtain measurements we time-reverse the diagram. One can similarly find correlation surfaces for the other $g$-encodings.

\begin{figure}[h]%
	\centering
	\includegraphics[width=0.4\linewidth]{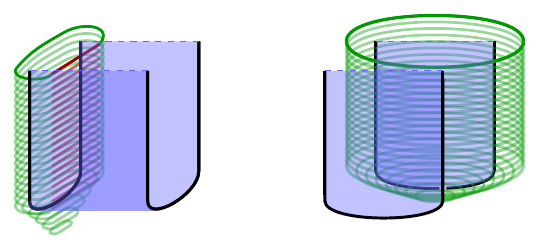} 
	\caption{Correlation surfaces for the $\overline{X}$ and $\overline{Z}$ preparations, time travels upwards. At the topmost timeslice, the state is an eigenstate of either logical $\overline{X}$ or $\overline{Z}$. Logical measurements are obtained by time-reversing the diagrams.}
	\label{figXZPrepCorrelations}
\end{figure}

\subsection{Majorana mapping for a single 2-cycle symmetry defect sector}\label{AppMajoranas}

We remark that for the theory of \textbf{3F} with a $\zz_2\leq S_3$ symmetry, generated by a single 2-cycle $g$, the twist defects can be represented by Majorana fermion operators, following a similar mapping for the toric code~\cite{bombin2010topologicaltwist}. 
This provides a concise description of the single qubit gates from the previous section and allows us to prove that entangling gates are not possible with a single type of symmetry defect using $g$-encodings. 
We remark that this mapping does not hold for the full $S_3$ defect theory, demonstrating that it is richer than the Ising defects in the toric code. 
The fact that all Clifford gates are achievable through braiding with the full \textbf{3F} defect theory is one of the key advantages that the richer symmetry group of this anyon theory provides. 

Following the surface code prescription~\cite{bombin2010topologicaltwist,barkeshli2013twist}, we represent each twist $\mathcal{T}_g^{(i)}$ by a Majorana operator $\gamma_i$. Such operators satisfy 
\begin{align}\label{eqMajoranaAntiCommutation}
   \gamma_j \gamma_k + \gamma_k \gamma_j = 2\delta_{kj},
\end{align}
where $\delta_{jk}$ is the Kronecker delta. 

We can represent the logical operators for $n$ qubits encoded in the $g$-encodings of Sec.~\ref{sec3FEncodings} in terms of Majorana modes as
\begin{align}
    \overline{X}_i &= \gamma_{1+4(i-1)} \gamma_{3+4(i-1)}, \label{eqMajoranaEncodingX} \\
    \overline{Z}_i &= \gamma_{1+4(i-1)} \gamma_{2+4(i-1)}, \label{eqMajoranaEncodingY}\\ 
    \overline{Y}_i &= \gamma_{1+4(i-1)} \gamma_{4+4(i-1)}, \label{eqMajoranaEncodingZ}
\end{align}
where the index on the Majorana operator labels the twist position as per Fig.~\ref{figQubitLabelling2}.
\begin{figure}[h]%
	\centering
	\includegraphics[width=0.2\linewidth]{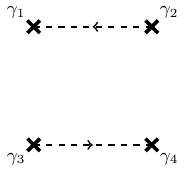} 
	\caption{Index convention for a single logical qubit.}
	\label{figQubitLabelling2}
\end{figure}

We remark that our choice to enforce trivial total charge in the  $g$-encoding leads to the redundancy condition $\gamma_1 \gamma_2 \gamma_3 \gamma_4 = I$. A braid is represented by an element $\sigma' \in B_{4n}$ the braid group on $4n$ strands. Importantly, up to a sign, only the induced permutation of the braid matters (i.e., we only care about $S_{4n} \cong B_{4n} / P_{4n}$, where $P_{4n}$ is the pure braid group), as any pure braid leads only to a Pauli operation.

Any braid $\sigma'$ giving rise to a permutation $\sigma \in S_{4n}$ induces the following logical action:
\begin{align}
    \overline{X}_i \mapsto \gamma_{\sigma(1+4(i-1))} \gamma_{\sigma(3+4(i-1))}, \\
    \overline{Z}_i \mapsto \gamma_{\sigma(1+4(i-1))} \gamma_{\sigma(2+4(i-1))},
\end{align}
where it is understood that $\sigma(i)$ is the permutation $\sigma$ applied to the index $i$. We observe that braids preserve the Majorana fermion operator weight and therefore can only map single qubit Pauli operators to single qubit Pauli operators -- meaning they belong to the set of single qubit Clifford operators. 
Secondly, we have verified that all single qubit Cliffords can be realized. 
It is sufficient to verify this on a single qubit. In particular,  $\sigma_H = (1342)$ generates the $H$ gate, and $\sigma_{S} = (34)$ generates the $S$ gate (both up to a Pauli operator):

\begin{align}
    \sigma_H : \overline{X} &\mapsto \gamma_{3}\gamma_{4} \sim \gamma_{1}\gamma_{2} = \overline{Z}, \\
    \sigma_H : \overline{Z} &\mapsto \gamma_{3} \gamma_{1} \sim \gamma_{1}\gamma_{3} = \overline{X},
\end{align}
and
\begin{align}
    \sigma_{S} : \overline{X} &\mapsto \gamma_{1}\gamma_{4} = \overline{Y}, \\
    \sigma_{S} : \overline{Z} &\mapsto \gamma_{1} \gamma_{2} = \overline{Z}.
\end{align}

\section{Topological compilation}\label{SecEntanglingCircuitViaAncilla}

We can utilize encodings in $\mathcal{T}_{(\text{rb})}$ twists as ancillas to mediate gates between logical qubits encoded in $\mathcal{T}_{\groupone}$ twists. The circuit in Fig.~\ref{fig3FEntanglingCircuit} implements a $CZ$ gate between two $\groupone$-encoded twists using an $(\text{rb})$-encoded ancilla. 
\begin{figure}[h]%
	\centering
	\includegraphics[width=0.3\linewidth]{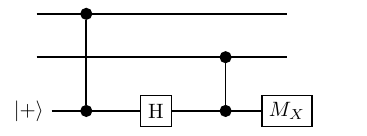} 
	\caption{Circuit that implements $CZ$ between the two qubits on the top wires. The circuit is composed of operations that are implementable via elementary gates outlined in the main text. Here $M_X$ is a Pauli-$X$ measurement. In particular this circuit is designed to implement a $CZ$ gate between two $\groupone$-encoded qubits (top two wires) using an $(\text{rb})$-encoded ancilla (bottom wire). }
	\label{fig3FEntanglingCircuit}
\end{figure}

We now describe the Clifford unitaries of Sec.~\ref{sec3FGates} in terms of elements of the braid group. Utilising the braid representation may allow one to topologically compile to find more efficient representatives of general elements of the multi-qubit Clifford group. With the labelling from Fig.~\ref{figBaseEncodings}, the Hadamard and $S$ gate are induced by the following braids (using standard braid group notation)
\begin{align}
    S: ~\sigma_3, \qquad H: ~\sigma_2 \sigma_3 \sigma_1,
\end{align}
as shown in Fig.~\ref{figSingleQubitBraids}. We remark that these are not the only braids that give rise to the required gate. 

We remark that if we only care about the single-qubit Clifford gates up to Pauli operators, then we can quotient by the pure braid group (i.e braids that do not permute the twists) as they only generate Pauli operations on qubits encoded in $\mathcal{E}_g$. Then doing so we can represent the Hadamard and $S$ gate up to Paulis with the following permutations: 
\begin{align}
    S: ~ (34), \qquad H:~ (1342).
\end{align}

The $CZ$ gate is also induced by a braid operation on 8 twists. In this case, the braid is a pure braid, meaning the permutation action is trivial. The braid element is depicted in Fig.~\ref{figTwoQubitBraids}.

\begin{figure}[h]%
	\centering
	\includegraphics[width=0.17\linewidth]{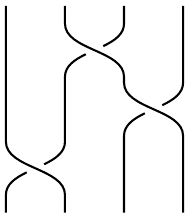} \hspace{2cm}
	\includegraphics[width=0.17\linewidth]{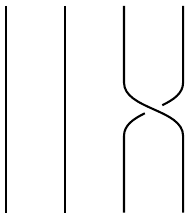} 
	\caption{Braid diagrams for the $H$ gate (left) and the $S$ gate (right). Strings denote the twists $\mathcal{T}_{\groupone}$ -- labelled 1-4 -- that must be braided. The twists are labelled from left to right $\mathcal{T}_{\groupone}^{(1)}, \mathcal{T}_{\groupone}^{(2)}, \mathcal{T}_{\groupone}^{(3)}, \mathcal{T}_{\groupone}^{(4)}$ according to Fig.~\ref{figBaseEncodings}.}
	\label{figSingleQubitBraids}
\end{figure}
\begin{figure}[h]%
	\centering
	\includegraphics[width=0.25\linewidth]{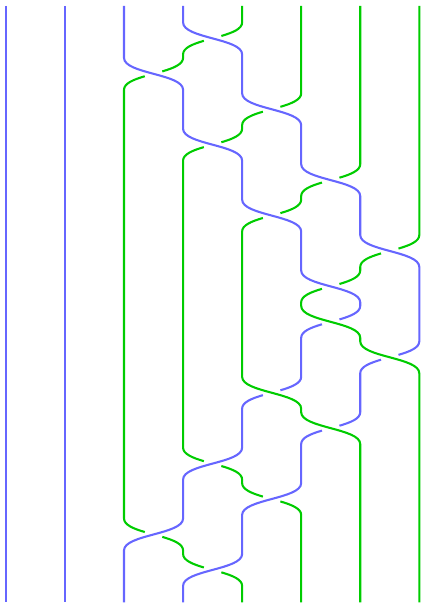} 
	\caption{Braid diagrams for the $CZ$ gate. Strings denote the twists $\mathcal{T}_{\groupone}$, $\mathcal{T}_{(\text{rb})}$ -- labelled 1-8 -- that must be braided. The first four represent the first encoded qubit and the second four represent the second encoded qubit. Note that from left to right, the twists are arranged $\mathcal{T}_{\groupone}^{(1)}, \mathcal{T}_{\groupone}^{(2)}, \mathcal{T}_{\groupone}^{(3)}, \mathcal{T}_{\groupone}^{(4)}, \mathcal{T}_{(\text{rb})}^{(1)}, \mathcal{T}_{(\text{rb})}^{(2)}, \mathcal{T}_{(\text{rb})}^{(3)}, \mathcal{T}_{(\text{rb})}^{(4)}$ according to the twist labelling of Fig.~\ref{figBaseEncodings}.}
	\label{figTwoQubitBraids}
\end{figure}
We remark that the relabelling all crossings from under to over leaves the gate invariant up to a Pauli. This can be verified by inspection of the logical operators from Eqs.(\ref{eqMajoranaEncodingX})-(\ref{eqMajoranaEncodingZ}) along with the anti-commutation relations of Eq.~(\ref{eqMajoranaAntiCommutation}).

\section{Equivalence between $g$-encodings}\label{appEquivalences}

In this section we show that the domain wall location is not important in the $g$-encoding -- only the location of the symmetry defects (twists) matter. As a concrete example, we consider mapping between the two configurations in Fig.~\ref{fig2DIsotopicEncodings}~(top). Recall that the symmetry action applied to a region creates a domain wall on its boundary. By applying a symmetry transformation to the convex hull of the four twists in Fig.~\ref{fig2DIsotopicEncodings}~(top), we can map between the seemingly distinct $g$-encodings. This can be verified explicitly in (2{+}1)D: applying a symmetry transformation in the plane leads to the space-time domain wall configuration of Fig.~\ref{fig2DIsotopicEncodings}~(bottom), which explicitly maps between the two encodings. Note that for defects based on a 2-cycle, the orientation does not matter. If instead one used a 3-cycle on would have to keep track of the orientation of the domain wall worldsheet in Fig.~\ref{fig2DIsotopicEncodings}~(bottom).

\begin{figure}[h]%
	\centering
	\includegraphics[width=0.28\linewidth]{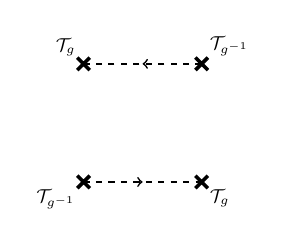} 
	\includegraphics[width=0.28\linewidth]{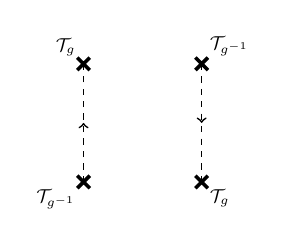} \\
	\includegraphics[width=0.18\linewidth]{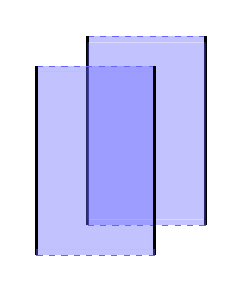} \quad
	\includegraphics[width=0.18\linewidth]{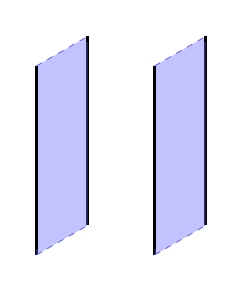} \quad
	\includegraphics[width=0.18\linewidth]{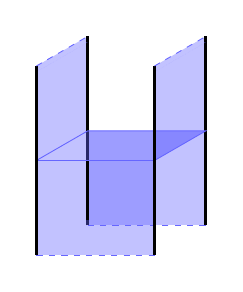} 	
	\caption{(top) Two equivalent representations of a $g$-encoding. The precise location of the domain walls does not matter, only their endpoint. (bottom) The two encodings are isotopic up to a space-like domain wall. The domain wall configration that achieves this is depicted on the right, where for simplicity we have ignored orientation }
	\label{fig2DIsotopicEncodings}
\end{figure}

\section{Trijunction encoding}\label{appOtherEncodings}
In this section we briefly review another defect encoding that may offer more efficient logic schemes. This encoding is called the trijunction encoding, and it utilizes a pair of defect triples. Each defect triple is as usual $g$-neutral. The only way to achieve $g$-neturality with three twists is if at least one of the defects is a 3-cycle defect. We illustrate the encodings and their logical operators in Fig.~\ref{figTrijunctionEncoding}.

\begin{figure}[h]%
	\centering
	\includegraphics[width=0.20\linewidth]{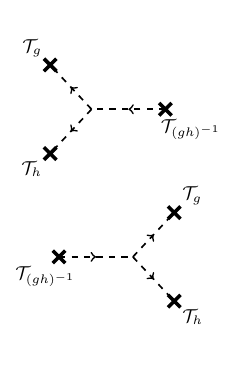} \qquad
	\includegraphics[width=0.20\linewidth]{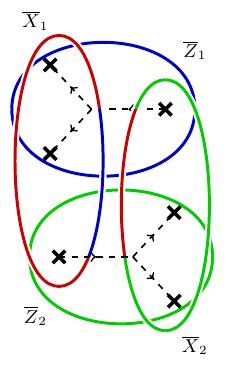} 
	\caption{A trijunction encoding. (left) Configuration of twist defects defined by $g,h\in S_3$. We must have either $g\neq h$, or $g=h$ and both $g,h$ are 3-cycles in order to have a valid tri-junction that encodes qubits: If we have exactly one of $g,h$ a 3-cycle then we encode one logical qubit, if both $g,h = \grouptwo$ or $\grouptwo^{-1}$ then we encode two qubits. (right) Trijunction encoding with $g= h = (gh)^{-1} = \grprgb$. Depicted are a set of fermionic Wilson loops that form a generating set of Logical Pauli operators.}
	\label{figTrijunctionEncoding}
\end{figure}

We remark that the trijunction defect encoding can be reduced to the standard $g$-encodings of Sec.~\ref{sec3FEncodings} by fusing one pair of defects from each trijunction. Nonetheless they may offer more efficient braiding schemes due to the larger possible set of braids.

\section{Proof that $U$ is a representation of $S_3$}\label{AppProofOfSymmetryRep}
In this section we verify Eqs.~(\ref{eqOnsiteSym1}),~(\ref{eqOnsiteSym2}) to show that $U$ is a representation of $S_3$. We need to verify Eq.~(\ref{eqFermionPermuation}) along with $g\cdot\ket{\textbf{1}}_i = \ket{\textbf{1}}_i$. Recall the local basis labelling $\ket{\textbf{1}} := \ket{++}$, $\ket{\fr} := \ket{-+}$, $\ket{\fg} := \ket{+-}$, $\ket{\fb} := \ket{--}$. Reducing this to the generators $S_3 = \langle \groupone, \grouptwo \rangle$, we need only show that that 
\begin{align}
    u_i(\grouptwoarg) &: \ket{-+}_i \mapsto \ket{+-}_i \mapsto \ket{--}_i \mapsto \ket{-+}_i, \\
    u_i(\grouponearg) &: \ket{+-}_i \leftrightarrow \ket{-+}_i.
\end{align}
One can directly compute that $u_i(\grouponearg) = \text{SWAP}_{i_1, i_2}$, $u_i(\grouptwoarg) = \text{SWAP}_{i_1, i_2} \cdot \text{CNOT}_{i_1, i_2}$ does the job.

\section{Proof of Prop.~\ref{prop3FHamSymmetry}}\label{AppProofOfProp3FHamSymmetry}
We now prove Prop.~\ref{prop3FHamSymmetry}. We show that the ground space of $H_{\textbf{3F}}$ is preserved by $S(g), g\in S_3$ by showing that the (stabilizer) group generated by the terms in $H_{\textbf{3F}}$ is invariant invariant under $S(\grouponearg)$ and $S(\grouptwoarg)$. Namely, let \begin{equation}
    \mathcal{R}_{\textbf{3F}} = \langle A_v^{(\fr)} , A_v^{(\fg)} , B_f^{(\fr)} , B_f^{(\fg)} ~|~ v\in V, f\in F \rangle. 
\end{equation}

Firstly, consider $S(\grouptwoarg) = U(\grouptwoarg)$. On each site $i$ we have
\begin{align}
    u_i(\grouptwoarg):& ~\sigma_i^X \rightarrow \sigma_i^X \tau_i^X, \\
    u_i(\grouptwoarg):& ~\sigma_i^Z \rightarrow \tau_i^Z, \\
    u_i(\grouptwoarg):& ~\tau_i^X \rightarrow \sigma_i^X,  \\
    u_i(\grouptwoarg):& ~\tau_i^Z \rightarrow \sigma_i^Z \tau_i^Z, \\
\end{align}
from which one can verify that 
\begin{align}
    U(\grouptwoarg):& ~B_f^{(\fr)} \rightarrow B_f^{(\fg)} \rightarrow B_f^{(\fr)}B_f^{(\fg)} \rightarrow B_f^{(\fr)},\\
    U(\grouptwoarg):& ~A_v^{(\fr)} \rightarrow A_v^{(\fr)} A_v^{(\fg)} \rightarrow A_v^{(\fg)} \rightarrow A_v^{(\fr)},
\end{align}
from which it follows $S(\grouptwoarg) \mathcal{R}_{\textbf{3F}} S(\grouptwoarg)^{\dagger} = \mathcal{R}_{\textbf{3F}}$.

Now for $S(\grouponearg) = V(\grouponearg)U(\grouponearg)$ with $U(\grouponearg) = \otimes_i\text{SWAP}_{i_1, i_2}$ and $V(\grouponearg) = T_{\tau}(1,1,1)$. On each site $i$ we have
\begin{align}
    u_i(\grouponearg): \sigma_i^{\alpha} \leftrightarrow \tau_i^{\alpha} \quad \alpha \in \{X, Y, Z\}
\end{align}
Therefore we have
\begin{align}
    U(\grouponearg):& ~A_v^{(\fr)} \leftrightarrow A_v^{(\fg)}, \\
    U(\grouponearg):& ~B_f^{(\fr)} \rightarrow \tau_{O_f}^X \tau_{U_f}^X \sigma_{U_f}^X \prod_{i \in \partial f}\tau_i^Z, \quad ~B_f^{(\fg)} \rightarrow \tau_{O_f}^X \sigma_{O_f}^X \sigma_{U_f}^X\prod_{i \in \partial f}\sigma_i^Z.
\end{align}
After translating all $\tau$ qubits in the $(1,1,1)$ direction, we see that in total
\begin{align}
    S(\grouponearg):& ~B_f^{(\fr)} \leftrightarrow B_f^{(\fg)},\\
    S(\grouponearg):& ~A_v^{(\fr)} \rightarrow A_{v'}^{(\fg)}, \quad A_v^{(\fg)} \rightarrow A_v^{(\fr)},
\end{align}
where $v'$ is the vertex obtained by shifting $v$ in the $(1,1,1)$ direction.
From which it follows $S(\grouponearg) \mathcal{R}_{\textbf{3F}} S(\grouponearg)^{\dagger} = \mathcal{R}_{\textbf{3F}}$.

\section{Symmetry defects in $H_{\textbf{3F}}$}\label{secWWSymmetryDefects}

\begin{figure}[t]%
    \centering	
    \includegraphics[width=0.49\linewidth]{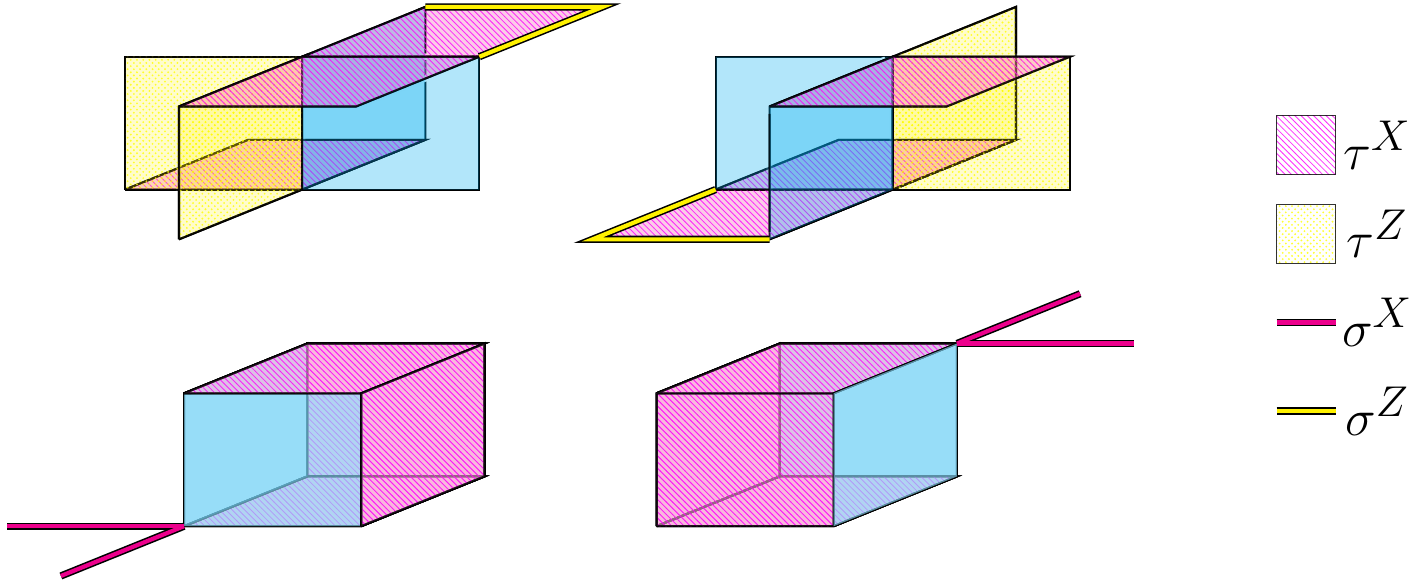} 
    \caption{Example \textbf{3F} Walker--Wang terms along the ${\groupone \in S_3}$ domain wall depicted in Fig. \ref{fig3FDomainWall}. The terms are color coded according to their support: blue shaded faces denote the domain wall $D$ -- upon which no qubits are supported; magenta shaded faces and edges denote the presence of $\tau^X$ and $\sigma^X$ respectively; yellow shaded faces and edges denote the presence of $\tau^Z$ and $\sigma^Z$ respectively. The top row of terms may be regarded as transformed versions of the right-most terms of Fig.~\ref{fig3FModifiedWWModel} that intersect the domain wall plane, while the bottom row are the transformed 1-form terms $\tilde{A}_v^{(\fr)}$ and $\tilde{A}_q^{(\fg)}$.}
    \label{fig3FDomainWallTerms}
\end{figure}

Here we compute \textbf{3F} Hamiltonian terms along domain walls and twists for the 2-cycle symmetry $\groupone \in S_3$. The domain walls and twists corresponding to 3-cycles $\grouptwo,(\text{rbg}) \in S_3$ are simple to construct, involving no change to the Hilbert space or lattice due to the onsite nature of their representation. The remaining symmetry defects and twists can be constructed by direct analogy, combining those of $\groupone$ with those of $\grouptwo$.

\subsection{\textbf{3F} domain walls and twists for $\groupone \in S_3$}

\begin{figure}[t]%
    \centering
    \includegraphics[width=0.49\linewidth]{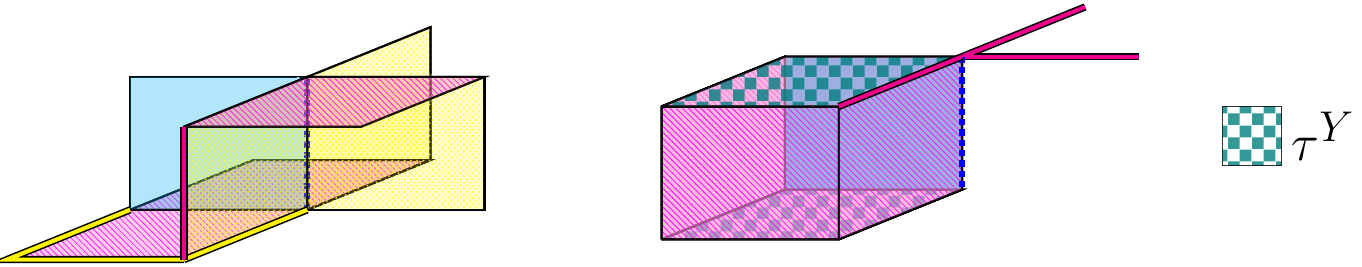}
    \caption{Example \textbf{3F} Walker--Wang terms along the ${\groupone \in S_3}$ twist depicted in Fig. \ref{fig3FDomainWall}. The twist travels along the central vertical edge, depicted by the dotted blue line. The left term can be regarded as the transformed version of right-most term of Fig.~\ref{fig3FDomainWallTerms} along the twist (obtained by multiplying a plaquette by its image under translation, each restricted to the qubits on the complement of the defect). Similarly, the modified 1-form operators contain $\tau^Y$ to ensure correct commutation. Other terms can be obtained by translating in the $\hat{y}$ direction, but we remark that these terms alone do not form a complete set. The color coding is identical to that of Fig.~\ref{fig3FDomainWallTerms} with the addition of $\tau^Y$ being denoted by chequered teal faces.}
    \label{fig3FTwists}
\end{figure}

For clarity, we consider the modified lattice with qubits on faces and edges, whose Hamiltonian terms are given by Eqs.~(\ref{eqModifiedHamTermse}),~(\ref{eqModifiedHamTermsm}). Consider a domain wall $D$ given by a plane normal to the $n = (1,0,-1)$ direction constructed using the translation symmetry ${S(\grouponearg) = T(w)}$, ${w=\frac{1}{2}(1,1,-1)}$ (both vectors were chosen for visualisation purposes). The discretised version of the domain wall on the lattice is visualised in Fig.~\ref{fig3FDomainWall}. 
The modified Hilbert space and Hamiltonian terms along the domain wall are depicted in Fig.~\ref{fig3FDomainWallTerms}. 
We remark that there is a layer of qubits missing on the domain wall itself, arising from the restricted translation action away from (rather than parallel to) the domain wall. 

Now consider a domain wall $D$ with boundary $\partial D$, for example along the $\hat{y} = (0,1,0)$, direction. 
To find a set of local terms to gap out the twist, we consider modifying the plaquette terms whose supports intersect the twist line to make them commute with the domain wall and bulk terms. 
The modified 1-form terms are then determined by constructing operators that commute with these modified plaquette terms and all other terms in the Hamiltonian (such that the product of modified and unmodified plaquette terms around a 3-, or 0-cell still matches the product of a pair of modified and unmodified 1-form terms). 
Such modifications can be done locally, following the discussion in Sec.~\ref{subsecTwistPrescription}. 
In Fig.~\ref{fig3FTwists}, we depict an example of a modified version of the right-most term from Fig.~\ref{fig3FDomainWallTerms} along the twist. 
We remark that the terms depicted in Fig.~\ref{fig3FTwists} alone do not form a complete set to gap out the twist.

One can also define these defects on the original Walker--Wang lattice, with two qubits per edge,  by applying the inverse transformation of Eq.~(\ref{eqLatticeTransformation}).

\end{document}